\newcounter{resultnum}[section]
\newcounter{conclusionnum}[section]
\newcounter{conditionnum}[section]
\newcounter{conjecturenum}[section]
\newcounter{examplenum}[section]
\newcounter{exercisenum}[section]
\newtheorem{lemma}{Lemma}[section]
\newcounter{lemmanum}[section]
\newcounter{notationnum}[section]
\newtheorem{theorem}{Theorem}[section]
\newcounter{theoremnum}[section]
\newtheorem{definition}{Definition}[section]
\newcounter{definitionnum}[section]
\newtheorem{corollary}{Corollary}[section]
\newcounter{corollarynum}[section]
\newcounter{remarknum}[section]
\newtheorem{proposition}{Proposition}[section]
\newcounter{propositionnum}[section]
\newcounter{acknowledgementnum}[section]
\newcounter{algorithmnum}[section]
\newcounter{axiomnum}[section]
\newcounter{casenum}[section]
\newtheorem{claim}{Claim}[section]
\newcounter{claimnum}[section]
\newcounter{summarynum}[section]
\newcounter{problemnum}[section]
\newenvironment{proof}[1][]{\textbf{Proof.} }{}
\begin{document}

\title{The Algebraic Index Theorem and\\
Deformation Quantization of Lagrange--Finsler\\
and Einstein Spaces}
\date{July 1, 2013}

\author{\textbf{Sergiu I. Vacaru}\thanks{%
sergiu.vacaru@uaic.ro}}

\affil{\small {\qquad \quad } Rector's Office, Alexandru Ioan Cuza
University, \newline Alexandru Lapu\c sneanu street, nr. 14, UAIC -- Corpus
R, office 323;\newline  Ia\c si,\ Romania, 700057 }%
\renewcommand\Authands{
and }

\maketitle

\begin{abstract}
Various types of Lagrange and Finsler geometries, Einstein gravity, and modifications, can be modelled by nonholonomic distributions on tangent bundles/ manifolds when the fundamental geometric objects are adapted to nonlinear connection structures. We can convert such geometries and physical theories into almost K\"{a}hler/ Poisson structures on (co)tangent bundles. This allows us to apply deformation quantization formalism to almost symplectic connections induced by Lagrange--Finsler and/or Einstein fundamental geometric objects. There are constructed respective nonholonomic versions of the trace density maps for the zeroth Hochschild homology of deformation quantization of distinguished algebras (in this work, adapted to nonlinear connection structure). Our main result consists in an algebraic index theorem for Lagrange--Finsler and Einstein spaces. Finally, we show how the Einstein field equations for gravity theories and geometric mechanics models can be imbedded into the formalism of deformation quantization and index theorem.

\vskip4pt \textbf{Keywords:}\ Deformation quantization, algebraic index theorems, Lagrange--Finsler geometry, nonholonomic Einstein manifolds, quantum gravity.

\vskip4pt MSC2000:\ 16E40, 19K56, 46L65, 46M20, 53B35, 53B40, 53C15, \newline
53D55, 58J22, 81S10, 83C45, 83C99
\end{abstract}

%\tableofcontents

%\newpage

\section{Introduction}

The goal of this paper is to prove an algebraic index theorem for generalized Finsler and/or Einstein spaces and show how corresponding gravitational field equations, and their solutions, can be encoded into a
nonholonomic version of Fedosov manifolds for deformation quantization. It
is a partner work of \cite{vlfedq,avqgr} and belongs to a series of our
articles on geometric methods in deformation, A--brane, bi--connection etc
quantization models of gravity and gauge theories and geometric mechanics %
\cite{vqgr2,vegpla,vqgrbr,vggr,esv}. The main motivations
for this kind of investigations come from classical and quantum nonlinear
fundamental physical equations encoded as geometric structures on
nonholonomic manifolds \cite{fn01},  and there is an important physical task to quantize
such generic nonlinear theories. Similar constructions arise also as
mathematical problems for developing quantum/ noncommutative versions of
Riemann--Finsler and Hamilton--Lagrange spaces and in relation to possible
applications of geometric methods in modern particle physics.

The concept of nonholonomic manifold came from geometric mechanics and
classical and quantum theories with non--integrable (equivalently,
nonholonomic, or anholonomic) constraints \cite{fn02}. In our works, we
follow certain methods and formalism developed in the geometry of classical
and quantum nonholonomic manifolds, Lagrange -- Finsler/ Hamilton -- Cartan
geometries and applications \cite{vr1,vr1a,vr2}, see comprehensive reviews
and references in \cite{bejf,vrflg}. It allows us to
elaborate an unified geometric approach for the above mentioned types of
classical and quantum models, working, for simplicity, with nonholonomic
distributions defining nonlinear connection (N--connection) structures via
non--integrable splitting into Whitney sums of corresponding tangent spaces
(see next section for a summary of necessary definitions and results).

It is well--known that the algebra of pseudo--differential operators on a
compact manifold $M$ can be viewed as a quantum model (i.e. quantization) of
the cotangent bundle $T^{\ast }M.$ In this framework, the Atiyah--Singer
index theorem \cite{atiyah} relates the index of an elliptic
pseudo--differential operator on $M$ to the Todd class of $M,$ when the
Chern character of the bundle is associated naturally with the symbol of the
pseudo--differential operator under consideration. B. Fedosov developed a
deformation quantization (via formal power series with complex coefficients)
of an arbitrary symplectic manifold $\mathcal{M}$ \cite{fed1,fed2}. A
natural analogue of index theorem was proposed in K--theory working with the
quantum algebra of functions on symplectic manifolds. Here we note that quantum deformations use the correspondence between physical structures and deformations of algebraic noncommutative structures.

However, the bulk of physical theories usually are not encoded in terms of
the geometry of "pure" symplectic manifolds or further developments as
Poisson manifolds and their deformation quantization \cite{konts1,konts2}. A
quite universal geometric scheme including (semi) Riemannian \cite{fn03} and Finsler spaces, Lagrange
and Hamilton mechanics and generalized nonholonomic Einstein spa\-ces  can
be elaborated in the framework of the geometry of almost K\"{a}hler
manifolds \cite{vlfedq,vrflg,vqgrbr} with associated canonical N--connection
structure. This scheme can be generalized for additional geometric/physical
structures which implies new important results in geometric and/or
deformation quantization. For instance, we cite some applications of
Fedosov's quantization in modern particle physics (see, for instance, \cite%
{castro1,grl}) and for almost symplectic geometry, in general, with
nontrivial torsion, see \cite{karabeg1,karabeg,karabeg2}.

In order to elaborate explicit models of deformation quantization for
generally constrained physical systems, the most important task was to prove
that the fundamental geometric objects defining such almost symplectic
spaces are determined naturally, and in a unique form, by generating
(fundamental) Finsler/Lagrange functions, generic off--diagonal metrics%
 (
in particular, being solutions of the Einstein equations and/or
generalizations). Such a project was derived from Fedosov works \cite%
{fed1,fed2} and Karabegov--Schlichenmaier developments \cite{karabeg1} for
deformation quantization of almost K\"{a}hler geometries and realized in a
series of our works \cite{vqgr2,vegpla,vlfedq,avqgr}.

Generalizations of the famous Atiyah--Singer index theorem for nonholonomic
Clifford bundles, in particular, generated for Lagrange--Finsler spaces, and
gerbes were studied in Refs. \cite{vgonz}. Nevertheless, those
versions do not provide straightforward relations to almost K\"{a}hler
models of Finsler/--Lagrange and/or Cartan/--Hamilton geometries, Einstein
manifolds and their deformation quantization. In this article, we provide a
local version and a simple proof of the algebraic index theorem for the
mentioned types of nonholonomic almost symplectic manifolds. We follow the
scheme using an explicit formula for the trace
density map from the quantum algebra of functions on an arbitrary symplectic
manifold $\mathcal{M}$ to the top degree cohomology of $\mathcal{M}$ \cite%
{feigin}. In our case, that formalism is adapted to N--connection
structures. More precisely, our constructions are built from fundamental
geometric objects derived from canonical almost symplectic forms and
nonlinear and linear connections for almost K\"{a}hler--Finsler manifolds
(and various modifications for Lagrange--Hamilton geometric mechanics and/or
the Einstein gravity theory).

Throughout the paper we assume the summation over repeated right indices and use left indices as abstract labels for certain geometric objects; boldface symbols being considered for spaces/objects which are enabled/ad\-apted to nonlinear connection structures (such a system of notations was elaborated in Ref. \cite{vrflg}).

The paper is organized as follows. In section \ref{preliminaries}, we
summarize the necessary results on almost K\"{a}hler geometric models of
Lagrange--Finsler and Einstein spaces. Section \ref{snfq} is devoted to
Fedosov quantization of Einstein--Finsler spaces and the formalism of trace
density maps adapted to nonlinear connection structure. We prove the Main
Result (a local  Atiyah--Singer index theorem for
Lagrange--Finsler and Einstein Spaces) in section \ref{sasindth}. The
problem of algebraic index encoding of Einstein equations and exact
solutions in gravity theories is also discussed.

\vskip5pt

\textbf{Acknowledgement: } The work is partially supported
by the Program IDEI, PN-II-ID-PCE-2011-3-0256. The author is grateful to organizers and participants of "Applied Seminar" at Department of Mathematics, University College of London, UK; organizers of the Conference 3 Quantum: Algebra, Geometry, Information (Tallinn University of Technology, Estonia; July 10-14, 2012) and XXXI Workshop on Geometric Methods in Physics, Bialowieza, Poland, June 24-30, 2012, where the results of  this paper where presented in memory of Prof. B. Fedosov. The author thanks the Referee for important requests and suggestions which improved substantially the content of the article.

\section{ Almost K\"{a}hler Models of Lagrange--Finsler \& Einstein Spa\-ces}

\label{preliminaries} In this  section, we recall some
necessary results on encoding data for nonholonomic manifolds/bundles as
almost K\"{a}hler spaces with fundamental geometric objects adapted to
nonlinear connection (N--connection) structure.

\subsection{Nonholonomic distributions with associated N--connecti\-on}

Let us consider a $(n+m)$--dimensional nonholonomic manifold (see definition
in note \cite{fn1}) $\mathbf{V,}$ \ $\dim \mathbf{V\geq }2+1,$ of
necessary smooth class. Such a space is enabled with a conventional $n+m$
splitting when local coordinates $u=(x,y)$ on an open region $U\subset
\mathbf{V}$ are labeled in the form $u^{\alpha }=(x^{i},y^{a}),$ $\alpha =(i,a)$,where
indices $i,j,k,...=1,2,$ $...,n$ and $a,b,c...=n+1,...,n+m.$
There are changes of coordinates $(x^{i},y^{a})$ $\rightarrow (\tilde{x}^{i},%
\tilde{y}^{a})$ when \ $\tilde{x}^{i}$ are functions only of $x^{i}.$ It
results that $\frac{\partial }{\partial y^{a}}=\frac{\partial \tilde{y}^{b}}{%
\partial y^{a}}\frac{\partial }{\partial y^{a}} $ and so $\frac{\partial }{%
\partial y^{a}}$ locally span an integrable distribution on $\mathbf{V.}$ We
may fix a supplement of it locally spanned by $\mathbf{e}_{i}=\delta _{i}=%
\frac{\partial }{\partial x^{i}}-N_{i}^{a}(u)\frac{\partial }{\partial y^{a}}%
.$ This is an example of non--integrable distribution $\mathcal{N}$\ \
transforming $\mathbf{V}$ into a nonholonomic manifold. If the functions $%
N_{i}^{a}(u)$ are chosen in such a way that $\delta _{j}=\frac{\partial
\tilde{x}^{i}}{\partial x^{j}}\delta _{i},$ one  says that $\mathbf{N=}%
\{N_{i}^{a}(u)\}$ defines a nonlinear connection structure. In particular, $%
\mathbf{V}$ can be the total space of submersion over a $n$--dimensional
manifold $M.$ If $\mathbf{V}=TM,$ where $TM$ is the total space of a tangent
bundle to a $n$--dimensional manifold $M,$ the N--connection structure can
be considered any one introduced for a (pseudo) Finsler, or Lagrange,
geometry modeled on such a tangent bundle \cite{vrflg}.  For $vTM$ being
the vertical distributions on $TM,$ we introduce:

\begin{definition}
\label{defnc}Any Whitney sum
\begin{equation}
TTM=hTM\oplus vTM  \label{whitney}
\end{equation}%
defines a nonlinear connection (N--connection) structure paramet\-riz\-ed by
local vector fields $e_{i}=\frac{\partial }{\partial x^{i}}-N_{i}^{a}(x,y)%
\frac{\partial }{\partial y^{a}}$ on $TM.$
\end{definition}

A N--connection states on $TM$ a conventional horizontal (h) and vertical
(v) splitting (decomposition). If a h--v splitting, $T\mathbf{V}=h\mathbf{V}%
\oplus v\mathbf{V},$ exists on a general nonholonomic manifold $\mathbf{V,}$
we call such a space  N--anholonomic. For gravity theories, we shall
consider $\mathbf{V}$ to be a (pseudo) Riemannian spacetime.

Let $L(x,y)$ be a regular differentiable Lagrangian on $U\subset TM$ (for
analogous models $U\subset \mathbf{V})$ with non-degenerate Hessian
(equivalently, fundamental tensor field)
\begin{equation}
g_{ab}(x,y)=\frac{\partial ^{2}L}{\partial y^{a}\partial y^{b}}.  \label{lm}
\end{equation}

\begin{definition}
\label{ldls} (\cite{kern,vrflg}) A Lagrange space $L^{n}=(M,L(x,y))$ is
defined by a function $TM\ni (x,y)\rightarrow L(x,y)\in \mathbb{R},$ i.e. a
fundamental Lagrange function, which is differentiable on $\widetilde{TM}%
:=TM\backslash \{0\},$ where $\{0\}$ is the set of null sections, and
continuous on the null section of $\ \pi :TM\rightarrow M$ and such that the
(Hessian) tensor field $g_{ab}(x,y)$, (\ref{lm}), is non-degenerate and of
constant signature on $\widetilde{TM}.$
\end{definition}

A (pseudo) Lagrange space can be effectively modelled on a nonholonomic
(pseudo) Riemann manifold of even dimension $\mathbf{V}=(\mathbf{V}^{2n},%
\underline{\mathbf{g}}), \dim \mathbf{V}^{2n}$ $=2n,n\geq 2,$ prescribing a
generating function $\mathcal{L}(x,y),$ with $u=(x,y)\in \mathbf{V}^{2n},$
satisfying the conditions of Definition \ref{ldls}. We shall use two symbols
$L$ and $\mathcal{L}$ in order to distinguish what type of geometric
mechanical model (a classical one, from Lagrange geometry, or an analogous,
pseudo--Lagrange one, on a pseudo--Riemannian manifold) we involve in our
constructions.

Let us consider a regular curve $u(\tau )$ with real parameter $\tau ,$ when $%
u:\tau \in \lbrack 0,1]\rightarrow x^{i}(\tau )\subset U.$ It can be lifted
to $\pi ^{-1}(U)\subset \widetilde{TM}$ as $\widetilde{u}(\tau ):\tau \in
\lbrack 0,1]\rightarrow \left( x^{i}(\tau ),y^{i}(\tau )=\frac{dx^{i}}{d\tau
}\right) $ since the vector field $\frac{dx^{i}}{d\tau }$ does not vanish on
$\widetilde{TM}.$ For a different calculus \cite{vrflg}, we prove:

\begin{theorem}
\label{th1}The Euler--Lagrange equations, $\frac{d}{d\tau }\frac{\partial L}{%
\partial y^{i}}-\frac{\partial L}{\partial x^{i}}=0,$ are equivalent to the
nonlinear geodesic (semi--spray) equations%
\begin{equation}
\frac{d^{2}x^{i}}{d\tau ^{2}}+2G^{i}(x,y)=0,  \label{ngeq}
\end{equation}%
where $G^{i}=\frac{1}{2}g^{ij}\left( \frac{\partial ^{2}L}{\partial
y^{j}\partial x^{k}}y^{k}-\frac{\partial L}{\partial x^{j}}\right),$ for $%
g^{ij}$ being the inverse to $g_{ij}$ (\ref{lm}).
\end{theorem}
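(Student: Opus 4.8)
The plan is to show the equivalence of the Euler–Lagrange equations with the semi-spray equations by a direct computation, expanding the total derivative and then inverting the fundamental tensor. This is a classical result in Lagrange geometry, so the proof should be essentially a verification carried out carefully enough that the quantity $G^i$ emerges naturally.

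First I would write out the Euler–Lagrange equations along the lifted curve $\widetilde{u}(\tau)$, where $y^i = dx^i/d\tau$. Applying the chain rule to $\frac{d}{d\tau}\frac{\partial L}{\partial y^i}$, and using that $L=L(x,y)$ depends on $\tau$ only through $x^i(\tau)$ and $y^i(\tau)$, I would obtain
\begin{equation*}
\frac{d}{d\tau}\frac{\partial L}{\partial y^i} = \frac{\partial^2 L}{\partial y^i \partial x^k}\frac{dx^k}{d\tau} + \frac{\partial^2 L}{\partial y^i \partial y^k}\frac{dy^k}{d\tau}.
\end{equation*}
Recognizing $\frac{\partial^2 L}{\partial y^i \partial y^k} = g_{ik}$ from the definition of the fundamental tensor (\ref{lm}), and $\frac{dy^k}{d\tau} = \frac{d^2 x^k}{d\tau^2}$, the Euler–Lagrange equation $\frac{d}{d\tau}\frac{\partial L}{\partial y^i} - \frac{\partial L}{\partial x^i} = 0$ rearranges to
\begin{equation*}
g_{ik}\frac{d^2 x^k}{d\tau^2} + \left(\frac{\partial^2 L}{\partial y^i \partial x^k}y^k - \frac{\partial L}{\partial x^i}\right) = 0.
\end{equation*}

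Next I would contract this equation with the inverse tensor $g^{ji}$, which exists precisely because the Hessian $g_{ab}$ is non-degenerate on $\widetilde{TM}$ by Definition \ref{ldls}. Using $g^{ji}g_{ik} = \delta^j_k$, the second-derivative term becomes $\frac{d^2 x^j}{d\tau^2}$, and the remaining bracket becomes exactly $2G^j$ upon identifying $G^j = \frac{1}{2}g^{ji}\left(\frac{\partial^2 L}{\partial y^i \partial x^k}y^k - \frac{\partial L}{\partial x^i}\right)$, which matches the stated formula (after relabeling summation indices). This yields the semi-spray form $\frac{d^2 x^j}{d\tau^2} + 2G^j = 0$, establishing the forward implication.

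The converse is immediate by reversing the contraction: multiplying the semi-spray equations by $g_{ij}$ and using non-degeneracy recovers the Euler–Lagrange system, so the two formulations are genuinely equivalent rather than one merely implying the other. The only point requiring care — and the closest thing to an obstacle — is the justification that the computation is valid on $\widetilde{TM}=TM\backslash\{0\}$, where the lifted velocity field $y^i = dx^i/d\tau$ is non-vanishing (noted just before the statement) and where $g^{ij}$ is well-defined; away from the zero section the regularity and non-degeneracy hypotheses guarantee every step, in particular the invertibility step, goes through without qualification.
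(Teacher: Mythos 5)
Your proof is correct and takes essentially the same route as the paper, which does not spell out the computation but defers it to the standard variational calculus in the cited Miron--Anastasiei references: expand $\frac{d}{d\tau}\frac{\partial L}{\partial y^{i}}$ along the lifted curve, identify the Hessian $g_{ik}$ from (\ref{lm}), and contract with $g^{ji}$, with non-degeneracy on $\widetilde{TM}$ guaranteeing both directions of the equivalence. Your factor bookkeeping (the $2G^{i}$) and the remark about working away from the zero section are exactly the points the classical proof relies on, so nothing is missing.
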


The conditions, and proof, of this theorem can be redefined on arbitrary $%
\left( \mathbf{V}^{2n},\underline{\mathbf{g}}\right) $ endowed with a
nonholonomic distribution induced by any prescribed $\mathcal{L}$ subjected
to conditions similar to those for a regular Lagrangian in mechanics.

\begin{proposition}
\label{pnaf}There are canonical frame and co--frame structures, $\mathbf{e}%
_{\alpha }$ and $\mathbf{e}^{\alpha },$ respectively defined by the
canonical N--connection
\begin{equation}
\ \ ^{c}N_{i}^{a}:= \frac{\partial G^{a}}{\partial y^{i}},
\label{clnc}
\end{equation}%
when \cite{fn04}
\begin{eqnarray}
 \mathbf{e}_{\alpha }=(\mathbf{e}_{i} &=&\frac{\partial }{\partial x^{i}}%
-N_{i}^{a}\frac{\partial }{\partial y^{a}},e_{b}=\frac{\partial }{\partial
y^{b}}),  \label{dder} \\
\mathbf{e}^{\alpha }=(e^{i} &=&dx^{i},\mathbf{e}^{b}=dy^{b}+N_{i}^{b}dx^{i}),
\label{ddif}
\end{eqnarray}%
for $\mathbf{e}_{\alpha }\rfloor \mathbf{e}^{\beta }=\delta _{\alpha
}^{\beta },$ where by $\rfloor $ we note the interior products and $\delta
_{\alpha }^{\beta }$ being the Kronecker delta symbol. Such N--elongated
partial derivative/ differential operators can be defined for any sets $%
N_{i}^{a}$ which are not obligatory represented in a canonical form $\ \
^{c}N_{i}^{a}).$
\end{proposition}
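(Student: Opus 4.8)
The plan is to exhibit the frame and co--frame explicitly as in (\ref{dder}) and (\ref{ddif}), to verify by a direct contraction that they constitute mutually dual bases satisfying $\mathbf{e}_{\alpha }\rfloor \mathbf{e}^{\beta }=\delta _{\alpha }^{\beta }$, and then to show that the coefficients $^{c}N_{i}^{a}=\partial G^{a}/\partial y^{i}$ of (\ref{clnc}) transform under admissible changes of coordinates exactly as the coefficients of an N--connection must, so that the horizontal distribution $h\mathbf{V}=\mathrm{span}\{\mathbf{e}_{i}\}$ is globally well defined and canonically fixed by $L$.

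For the duality I would compute the four index blocks of $\mathbf{e}_{\alpha }\rfloor \mathbf{e}^{\beta }$ separately, using only $dx^{j}(\partial /\partial x^{i})=\delta _{i}^{j}$, $dx^{j}(\partial /\partial y^{a})=0$, $dy^{b}(\partial /\partial x^{i})=0$ and $dy^{b}(\partial /\partial y^{a})=\delta _{a}^{b}$. The diagonal blocks give at once $\mathbf{e}_{i}\rfloor e^{j}=\delta _{i}^{j}$ and $e_{b}\rfloor \mathbf{e}^{c}=\delta _{b}^{c}$, while the mixed blocks vanish because the $N$--terms cancel, $\mathbf{e}_{i}\rfloor \mathbf{e}^{b}=N_{i}^{b}-N_{i}^{b}=0$ and $e_{b}\rfloor e^{j}=0$. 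This part is routine and uses only the definitions (\ref{dder}), (\ref{ddif}), not the particular form (\ref{clnc}).

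The substantive step is to check that (\ref{clnc}) defines a genuine N--connection. Under a change $(x^{i},y^{a})\rightarrow (\tilde{x}^{i},\tilde{y}^{a})$ with $\tilde{x}^{i}=\tilde{x}^{i}(x)$ and $\tilde{y}^{a}=(\partial \tilde{x}^{a}/\partial x^{c})y^{c}$ on $TM$, I would first derive the transformation law of the semi--spray coefficients from the covariance of the nonlinear geodesic equations (\ref{ngeq}) of Theorem \ref{th1}: differentiating $\tilde{y}^{b}=(\partial \tilde{x}^{b}/\partial x^{c})y^{c}$ twice along the lifted curve and inserting $d^{2}x^{a}/d\tau ^{2}=-2G^{a}$ yields $2\tilde{G}^{b}=2(\partial \tilde{x}^{b}/\partial x^{a})G^{a}-(\partial ^{2}\tilde{x}^{b}/\partial x^{a}\partial x^{c})y^{a}y^{c}$. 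Differentiating this identity in $y^{j}$, and using the chain rule $\partial /\partial y^{j}=(\partial \tilde{x}^{i}/\partial x^{j})\,\partial /\partial \tilde{y}^{i}$ together with $\partial \tilde{y}^{i}/\partial y^{j}=\partial \tilde{x}^{i}/\partial x^{j}$, I obtain $(\partial \tilde{x}^{i}/\partial x^{j})\tilde{N}_{i}^{b}=(\partial \tilde{x}^{b}/\partial x^{a})N_{j}^{a}-(\partial ^{2}\tilde{x}^{b}/\partial x^{j}\partial x^{c})y^{c}$. This is precisely the inhomogeneous rule making $\mathbf{e}_{i}$ of (\ref{dder}) transform tensorially, $\mathbf{e}_{j}=(\partial \tilde{x}^{i}/\partial x^{j})\,\tilde{\mathbf{e}}_{i}$, in accordance with the footnote condition $\delta _{j}=(\partial \tilde{x}^{i}/\partial x^{j})\delta _{i}$.

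I expect the bookkeeping of this transformation law --- tracking the inhomogeneous second--derivative term and matching its sign against the N--connection convention --- to be the only delicate point; everything else is a formal verification. Finally, canonicity follows because $G^{a}$, and hence $^{c}N_{i}^{a}$, is determined uniquely by the fundamental function $L$ through (\ref{ngeq}), so that no auxiliary choice enters the construction of $\mathbf{e}_{\alpha }$; the dual co--frame (\ref{ddif}) is then forced by the duality relation already established.
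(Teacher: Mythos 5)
Your proposal is correct and follows essentially the same route as the paper: an explicit construction of the frames from the canonical N--connection (\ref{clnc}) with $G^{a}$ taken from the semi--spray equations (\ref{ngeq}). The paper's own proof is a one--sentence sketch, and your duality check for (\ref{dder})--(\ref{ddif}) together with the transformation law $(\partial \tilde{x}^{i}/\partial x^{j})\tilde{N}_{i}^{b}=(\partial \tilde{x}^{b}/\partial x^{a})N_{j}^{a}-(\partial ^{2}\tilde{x}^{b}/\partial x^{j}\partial x^{c})y^{c}$ (which is exactly what makes $\mathbf{e}_{j}=(\partial \tilde{x}^{i}/\partial x^{j})\tilde{\mathbf{e}}_{i}$ hold, as required by the footnote's condition on N--connections) correctly supplies the details the paper leaves implicit.
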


\begin{proof}
The results can be proven on any N--anholnomic manifold $\mathbf{V}$ by
explicit constructions using formula (\ref{clnc}) with $G^{a}$ determined in
(\ref{ngeq}); it is imposed the condition that such (co) frames should
depend linearly on coefficients of respective N--connections. $\square $
\end{proof}

\begin{definition}
The N--lift of the fundamental tensor fields $g_{ab}$ (\ref{lm}) on $TM$ \
(in general, from any $h\mathbf{V}$ to $\mathbf{V}$) is a Sasaki type metric
(distinguished metric, d--metric)
\begin{equation}
\mathbf{g}=\mathbf{g}_{\alpha \beta }\ \mathbf{e}^{\alpha }\otimes \mathbf{e}%
^{\beta }=g_{ij}(x,y)e^{i}\otimes e^{j}+g_{ab}(x,y)\mathbf{e}^{a}\otimes
\mathbf{e}^{b},  \label{sasakmetr}
\end{equation}%
where $g_{ij}$ is stated by $g_{ab}$ following $g_{ij}=g_{n+i\ n+j}.$
\end{definition}

Canonical N--connection and d--metric structures can be constructed on any
(pseudo) Riemannian manifold of even dimension, $\mathbf{V}=\left(\mathbf{V}%
^{2n},\underline{\mathbf{g}}\right),\dim$ $\mathbf{V}^{2n}=2n,n\geq 2,$ with
given metric structure $\underline{\mathbf{g}}=\mathbf{g}_{\underline{\alpha
}\underline{\beta }}\ \mathbf{\partial }^{\underline{\alpha }}\otimes
\mathbf{\partial }^{\underline{\beta }}$ if a generating Lagrange function $%
\mathcal{L}(x,y)$ is correspondingly prescribed on $\mathbf{V}^{2n}.$ Any
metric $\underline{\mathbf{g}}$ can be represented in a Lagrange--Sasaki
form (\ref{sasakmetr}) via frame transforms
\begin{equation}
\mathbf{e}_{\alpha }=\mathbf{e}_{\alpha }^{\ \underline{\alpha }}\mathbf{%
\partial }_{\underline{\alpha }}\mbox{\ and \ }\mathbf{e}^{\alpha }=\mathbf{e%
}_{\ \underline{\alpha }}^{\alpha }\mathbf{\partial }^{\underline{\alpha }%
};\ \mathbf{g}_{\alpha \beta }=\mathbf{e}_{\alpha }^{\ \underline{\alpha }}%
\mathbf{e}_{\beta }^{\ \underline{\beta }}\mathbf{g}_{\underline{\alpha }%
\underline{\beta }},  \label{fralgeq}
\end{equation}%
where matrices $\mathbf{e}_{\alpha }^{\ \underline{\alpha }}$ and $\mathbf{e}%
_{\ \underline{\alpha }}^{\alpha }$ can be chosen to be mutually inverse.
The explicit formulas depend on the type of nonholonomic structure we
prescribe by $\mathcal{L}(x,y)$ (via induced N--connection (\ref{clnc}) and
N--adapted frames (\ref{dder}) and (\ref{ddif})). For instance, if $\
\mathbf{g}_{\underline{\alpha }\underline{\beta }}$ is given as a solution
of the Einstein equations in four dimensional (4--d) general relativity and $%
\mathbf{g}_{\alpha \beta }$ is determined by a chosen $\mathcal{L},$ we can
always define certain $\mathbf{e}_{\alpha }^{\ \underline{\alpha }}$
encoding the  gravitational data into analogous mechanical ones, and/or
inversely.
In 4--d gravity, there are six independent components of \ $\ \mathbf{g}_{%
\underline{\alpha }\underline{\beta }}$ (from ten ones for a symmetric
second rank tensor, we can always fix four ones by corresponding coordinate
transforms). The values $\mathbf{e}_{\alpha }^{\ \underline{\alpha }}$ can
be defined as some solutions of algebraic  equations (\ref{fralgeq}) for given coefficients of metrics. Such constructions allow us always to
introduce (pseudo) Lagrange variables on a (pseudo) Riemannian manifold and,
inversely, any regular Lagrange mechanics can be geometrized as a Riemannian
space enabled with additional nonholonomic structure determined by $L(x,y).$

We have the following:

\begin{proposition}
A canonical N--connection $\mathbf{N}$ (\ref{clnc}) defines a canonical
almost complex structure $\mathbf{J.}$
\end{proposition}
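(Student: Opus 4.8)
The plan is to exhibit $\mathbf{J}$ explicitly in the canonical N--adapted (co)frame of Proposition \ref{pnaf} and then verify that it is a globally well--defined endomorphism of $T\mathbf{V}$ squaring to $-\mathrm{Id}$. The key structural input is that the Whitney decomposition (\ref{whitney}) induced by the canonical N--connection (\ref{clnc}) splits the tangent space into horizontal and vertical distributions of equal dimension $n$, with the adapted bases $\{\mathbf{e}_{i}\}$ and $\{e_{n+i}\}$ labelled over the same range of indices. This supplies a natural pairing $\mathbf{e}_{i}\leftrightarrow e_{n+i}$, which I would use to define $\mathbf{J}$.

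Concretely, first I would set
\[
\mathbf{J}:=\mathbf{e}_{i}\otimes \mathbf{e}^{n+i}-e_{n+i}\otimes e^{i},
\]
equivalently $\mathbf{J}(\mathbf{e}_{i})=-e_{n+i}$ and $\mathbf{J}(e_{n+i})=\mathbf{e}_{i}$, using the canonical frames (\ref{dder}) and coframes (\ref{ddif}) built from $\ ^{c}N_{i}^{a}$ of (\ref{clnc}); here the duality $\mathbf{e}_{\alpha }\rfloor \mathbf{e}^{\beta }=\delta _{\alpha }^{\beta }$ makes the cross--pairings between h-- and v--objects vanish. A direct computation on basis elements then gives $\mathbf{J}^{2}(\mathbf{e}_{i})=\mathbf{J}(-e_{n+i})=-\mathbf{e}_{i}$ and $\mathbf{J}^{2}(e_{n+i})=\mathbf{J}(\mathbf{e}_{i})=-e_{n+i}$, i.e. $\mathbf{J}^{2}=-\mathrm{Id}$, so $\mathbf{J}$ is an almost complex structure at each point.

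The main point to check --- and where the N--connection is essential --- is that $\mathbf{J}$ is a genuine global tensor field, independent of the admissible change of coordinates $(x^{i},y^{a})\rightarrow (\tilde{x}^{i}(x),\tilde{y}^{a})$. Here I would invoke the defining property of an N--connection recorded in the footnote to Definition \ref{defnc}: under such transforms the horizontal frame obeys $\mathbf{e}_{j}=(\partial \tilde{x}^{i}/\partial x^{j})\,\tilde{\mathbf{e}}_{i}$, while on $TM$ the fibre coordinates transform as $\tilde{y}^{i}=(\partial \tilde{x}^{i}/\partial x^{j})\,y^{j}$, forcing the vertical frame to obey $e_{n+j}=(\partial \tilde{x}^{i}/\partial x^{j})\,\tilde{e}_{n+i}$ with exactly the same Jacobian. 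Because the horizontal and vertical adapted bases transform under one and the same matrix (and the coframes inversely), the pairing encoded in $\mathbf{J}$ is preserved and the coefficient tensor of $\mathbf{J}$ comes out coordinate--free.

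The hardest (though still routine) step is this invariance verification, since it relies on the precise transformation rule for $\ ^{c}N_{i}^{a}$ that renders $\{\mathbf{e}_{i}\}$ tensorial --- precisely the condition singled out in Definition \ref{defnc} and re-derived for the canonical coefficients in the proof of Proposition \ref{pnaf}. Once globality is established, the label "canonical" is justified by the observation that $\mathbf{J}$ is assembled solely from the canonical data $\ ^{c}N_{i}^{a}$ determined by the generating function $\mathcal{L}$ (equivalently by $G^{a}$), with no further choices. I would close by noting that this $\mathbf{J}$, paired with the Sasaki--type d--metric (\ref{sasakmetr}), is the candidate compatible pair feeding into the almost K\"{a}hler structure used in the subsequent constructions.
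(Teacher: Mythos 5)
Your proposal is correct and follows essentially the same route as the paper: the paper's proof likewise defines $\mathbf{J}$ on the canonical N--adapted frame by $\mathbf{J}(\mathbf{e}_{i})=-e_{n+i}$, $\mathbf{J}(e_{n+i})=\mathbf{e}_{i}$ and notes $\mathbf{J}\circ \mathbf{J}=-\mathbf{I}$, concluding that this is a global structure determined by the generating function. The only difference is that you spell out the coordinate--invariance argument (both adapted frames transforming by the same Jacobian $\partial \tilde{x}^{i}/\partial x^{j}$) that the paper asserts without proof; this is a welcome elaboration, not a departure.
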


\begin{proof}
The linear operator $\mathbf{J}$ acting on $\mathbf{e}_{\alpha }=(\mathbf{e}%
_{i},e_{b})$ (\ref{dder}) is defined by
\begin{equation*}
\mathbf{J}(\mathbf{e}_{i})=-\mathbf{e}_{n+i}\mbox{\ and \
}\mathbf{J}(e_{n+i})=\mathbf{e}_{i}.
\end{equation*}

This is a global almost complex structure ($\mathbf{J\circ J=-I}$ for $%
\mathbf{I}$ being the unity matrix) on $TM$ completely determined by $L(x,y).
$ $\square $
\end{proof}

\begin{definition}
The Neijenhuis tensor field for an almost complex structure $\ \mathbf{J}$
determined by a N--connection (i.e. the curvature of N--connection) is
\begin{equation*}
\ \ ^{\mathbf{J}}\mathbf{\Omega (X,Y):}=\mathbf{\
-[X,Y]+[JX,JY]-J[JX,Y]-J[X,JY],}  \label{neijt}
\end{equation*}%
for any vectors $\mathbf{X}$ and $\mathbf{Y.}$
\end{definition}

The N--adapted (co) bases (\ref{dder}) and (\ref{ddif}) are nonholonomic
when $\lbrack \mathbf{e}_{\alpha },\mathbf{e}_{\beta }]=\mathbf{e}_{\alpha }%
\mathbf{e}_{\beta }-\mathbf{e}_{\beta }\mathbf{e}_{\alpha }=W_{\alpha \beta
}^{\gamma }\mathbf{e}_{\gamma }.$ The nontrivial (antisymmetric) anholonomy
coefficients are $W_{ia}^{b}=\partial _{a}N_{i}^{b}$ and $W_{ji}^{a}=\Omega
_{ij}^{a},$ with the coefficients of N--connection curvature computed $%
\Omega _{ij}^{a}=\frac{\partial N_{i}^{a}}{\partial x^{j}}-\frac{\partial
N_{j}^{a}}{\partial x^{i}}+N_{i}^{b}\frac{\partial N_{j}^{a}}{\partial p_{b}}%
-N_{j}^{b}\frac{\partial N_{i}^{a}}{\partial p_{b}}.$

We can introduce on a (pseudo) Riemannian manifold an analogous (pseudo)
Finsler structure defined for any $\mathcal{L=F}^{2}(x,y),$ where an
effective Finsler metric $\mathcal{F}$ is a differentiable function of class
$C^{\infty }$ in any point $(x,y)$ with $y\neq 0$ and is continuous in any
point $(x,0);$ $\mathcal{F}(x,y)>0$ if $y\neq 0;$ it satisfies the
homogeneity condition $\mathcal{F}(x,\beta y)=|\beta |\mathcal{F}(x,y)$ for
any nonzero $\beta \in \mathbb{R}$ and the Hessian (\ref{lm}) computed for $%
\mathcal{L=F}^{2}$ is positive definite. A nonholonomic manifold can be
alternatively modelled equivalently as an analogous Finsler space. On
convenience, in this work, we shall consider both types of alternative
modelling of gravity theories (with nonhomogeneous configurations, i.e.
Lagrange type, and homogeneous ones, i.e. Finsler type).

\subsection{N--adapted almost K\"{a}hler structures}

An almost K\"{a}hler geometry can be also adapted to (induced by) canonical
N--connections.

\begin{definition}
An almost symplectic structure on $\mathbf{V}$ is defined by a nondegenerate
2--form $\theta =\frac{1}{2}\theta _{\alpha \beta }(u)e^{\alpha }\wedge
e^{\beta }.$
\end{definition}

An almost Hermitian model of a nonholonomic (pseudo) Riemannian spa\-ce $%
\mathbf{V}^{2n},\dim \mathbf{V}^{2n}=2n,n\geq 1,$ equipped with an
N--connection structure $\mathbf{N}$ is defined by a triple $\mathbf{H}%
^{2n}=(\mathbf{V}^{2n},\theta ,\mathbf{J}),$ where $\mathbf{\theta (X,Y)}%
\doteqdot \mathbf{g}\left( \mathbf{JX,Y}\right) .$ In addition, we have that
a space $\mathbf{H}^{2n}$ is almost K\"{a}hler, denoted $\mathbf{K}^{2n},$
if and only if $d\mathbf{\theta }=0.$ In this paper, we consider that  a real manifold is almost K\"{a}hler if it is endowed with a closed almost symplectic 2--form $\theta$.

We recall that for pseudo--Lagrange/ Finsler modelling of Einstein gravity %
\cite{vegpla,vlfedq,vrflg} (see also discussion and references therein; for
Finsler spaces, the original result is due to \cite{matsumoto}):

\begin{theorem}
\label{thmr2}Having chosen a generating function $\mathcal{L}(x,y)$ (or $%
\mathcal{F}(x,y))$ on a (pseudo) Riemannian manifold $\mathbf{V}^{2n},$ we
can model this space as an almost K\"{a}hler geometry, i.e. $\ ^{\mathcal{L}}%
\mathbf{H}^{2n}=\ ^{\mathcal{L}}\mathbf{K}^{2n},$ where the left labels
emphasize that such structures are induced nonholonomically by $\mathcal{L}$
(or $\mathcal{F}).$
\end{theorem}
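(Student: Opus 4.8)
The plan is to realize the almost Kähler structure explicitly from the generating function $\mathcal{L}$ and then verify the closure condition $d\theta=0$ by a direct N--adapted computation. Having a generating function $\mathcal{L}(x,y)$, I would first build the canonical fundamental (Hessian) tensor $g_{ab}=\frac{1}{2}\frac{\partial^2 \mathcal{L}}{\partial y^a \partial y^b}$, the canonical semispray coefficients $G^a$ as in Theorem \ref{th1}, and then the canonical N--connection $\ ^{c}N_i^a=\partial G^a/\partial y^i$ of Proposition \ref{pnaf}. This fixes the N--adapted frame and coframe $(\mathbf{e}_\alpha,\mathbf{e}^\alpha)$ of (\ref{dder})--(\ref{ddif}), the Sasaki d--metric $\mathbf{g}$ of (\ref{sasakmetr}), and the canonical almost complex structure $\mathbf{J}$. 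By construction these data define the triple $\ ^{\mathcal{L}}\mathbf{H}^{2n}=(\mathbf{V}^{2n},\theta,\mathbf{J})$ with $\theta(\mathbf{X},\mathbf{Y})\doteqdot \mathbf{g}(\mathbf{JX},\mathbf{Y})$, so the almost Hermitian model exists as soon as the ingredients are assembled. The remaining task is precisely to show that this particular $\theta$ is closed, which is the content of the claim that $\ ^{\mathcal{L}}\mathbf{H}^{2n}$ is in fact almost Kähler.

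The key step is to write $\theta$ in closed form in terms of $\mathcal{L}$. Using the definition of $\theta$, $\mathbf{J}$ and the Sasaki metric, I expect $\theta$ to take the canonical almost symplectic form $\theta = g_{ij}(x,y)\,\mathbf{e}^{n+i}\wedge e^j$ in the N--adapted coframe, equivalently $\theta = d\omega$ for the global one--form $\omega=\frac{1}{2}\frac{\partial \mathcal{L}}{\partial y^i}\,dx^i$ (the Poincaré--Cartan one--form attached to $\mathcal{L}$). This is the crucial identification: once $\theta$ is exhibited as the exterior differential of a globally defined one--form built from $\mathcal{L}$, closure $d\theta = d(d\omega)=0$ is immediate from $d^2=0$. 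So the heart of the argument is the algebraic verification that $\mathbf{g}(\mathbf{JX},\mathbf{Y})$, computed on the frame $\mathbf{e}_\alpha$, reproduces exactly the coefficients of $d\omega$ expressed in the nonholonomic coframe.

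To carry out that verification I would compute $d\omega$ explicitly, being careful to use the holonomic differentials $dx^i,dy^a$ and then re--express the result in the N--adapted coframe $\mathbf{e}^\alpha$ of (\ref{ddif}). Here the canonical choice $\ ^{c}N_i^a=\partial G^a/\partial y^i$ enters decisively: it is exactly this N--connection that makes the mixed $dx\wedge dx$ terms of $d\omega$ organize into the N--connection curvature coefficients $\Omega_{ij}^a$ and recombine so that $d\omega = g_{ij}\,\mathbf{e}^{n+i}\wedge e^j = \theta$. I would invoke the homogeneity/regularity of $\mathcal{L}$ (Definition \ref{ldls}) to guarantee nondegeneracy of $g_{ab}$ and hence that $\theta$ is a genuine almost symplectic form, and the linear dependence of the canonical frames on the N--connection coefficients (Proposition \ref{pnaf}) to ensure everything is globally well defined on $\widetilde{TM}$.

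The main obstacle I anticipate is bookkeeping rather than conceptual: the exterior derivative $d\mathbf{e}^b$ of the nonholonomic coframe does not vanish but produces the anholonomy terms $W_{ia}^b$ and the curvature $\Omega_{ij}^a$, so $d\theta$ acquires several N--connection--torsion contributions that must be shown to cancel. The cancellation is not automatic for an arbitrary $N_i^a$; it works because $\ ^{c}N_i^a$ is derived from the same $\mathcal{L}$ that produces $g_{ab}$, giving the integrability identity $\partial g_{ij}/\partial y^k$ symmetric in the appropriate indices (a consequence of $g_{ab}$ being a Hessian). Verifying that this symmetry forces all the anholonomy corrections to drop out is the delicate point; once it is established, the equivalence $\ ^{\mathcal{L}}\mathbf{H}^{2n}=\ ^{\mathcal{L}}\mathbf{K}^{2n}$ follows, and the same argument applies verbatim to the Finsler case $\mathcal{L}=\mathcal{F}^2$ using the homogeneity of $\mathcal{F}$.
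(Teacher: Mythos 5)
Your proposal is correct and takes essentially the same route as the paper: both construct the canonical data $(\mathbf{g},\mathbf{N},\mathbf{J})$ from $\mathcal{L}$, define $\mathbf{\theta}(\mathbf{X},\mathbf{Y})\doteqdot \mathbf{g}(\mathbf{JX},\mathbf{Y})$, exhibit $\mathbf{\theta}=d\omega$ for the Poincar\'e--Cartan one--form $\omega =\frac{1}{2}\frac{\partial \mathcal{L}}{\partial y^{n+i}}dx^{i}$, and conclude $d\mathbf{\theta}=dd\omega =0$. The only difference is that you spell out the bookkeeping (anholonomy terms cancelling via the Hessian symmetry and the canonical choice $\ ^{c}N_{i}^{a}$) which the paper asserts without detail in the local computation (\ref{sform}).
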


Let us consider a metric $\mathbf{g}$ (\ref{sasakmetr}) and some structures $%
\mathbf{N}$ and $\mathbf{J}$ canonically defined by a prescribed $\mathcal{L}%
.$ We define $\mathbf{\theta (X,Y)}\doteqdot \mathbf{g}\left( \mathbf{JX,Y}%
\right) $ for any vectors $\mathbf{X}$ and $\mathbf{Y}$ and compute locally
\begin{eqnarray}
\mathbf{\theta } &=&\frac{1}{2}\theta _{\alpha \beta }(u)e^{\alpha }\wedge
e^{\beta }=\frac{1}{2}\theta _{\underline{\alpha }\underline{\beta }}(u)du^{%
\underline{\alpha }}\wedge du^{\underline{\beta }}  \label{sform} \\
&=&g_{ij}(x,y)e^{n+i}\wedge
dx^{j}=g_{ij}(x,y)(dy^{n+i}+N_{k}^{n+i}dx^{k})\wedge dx^{j}.  \notag
\end{eqnarray}%
Introducing the the form $\omega =\frac{1}{2}\frac{\partial \mathcal{L}}{%
\partial y^{n+i}}dx^{i},$ we get $\mathbf{\theta }=d\omega ,$ i.e. $d\mathbf{%
\theta }=dd\omega =0.$ We conclude that using a generating function $%
\mathcal{L}$ (or $\mathcal{F}),$ via canonical $\mathbf{g,N}$ and $\mathbf{J}
$, a (pseudo) Riemannian/Finsler/Lagrange space can be represented
equivalently as an almost K\"{a}hler geometry.

\begin{definition}
A linear connection on $\mathbf{V}^{2n}$ is a distinguished connection
(d--connection) $\mathbf{D}=(hD;vD)=\{\mathbf{\Gamma }_{\beta \gamma
}^{\alpha }=(L_{jk}^{i},\ ^{v}L_{bk}^{a};C_{jc}^{i},\ ^{v}C_{bc}^{a})\},$
with local coefficients computed with respect to (\ref{dder}) and (\ref{ddif}%
), which preserves the distribution (\ref{whitney}) under parallel
transports.
\end{definition}

A d--connection $\mathbf{D}$\ is metric compatible with a d--metric $\mathbf{%
g}$ if $\mathbf{D}_{\mathbf{X}}\mathbf{g}=0$ for any d--vector field $%
\mathbf{X.}$

\begin{definition}
An almost symplectic d--connection $\ _{\theta }\mathbf{D}$ on $\mathbf{V}%
^{2n}$ (equivalently, we can say that a d--connection is compatible with an
almost symplectic structure $\theta )$ is defined such that $\ _{\theta }%
\mathbf{D}$ is N--adapted, i.e., it is a d--connection, and $\ _{\theta }%
\mathbf{D}_{\mathbf{X}}\theta =0,$ for any d--vector $\mathbf{X.}$
\end{definition}

For N--anholonomic manifolds of even dimension, we have the following:

\begin{theorem}
There is a unique normal d--connection
\begin{eqnarray*}
\ \widehat{\mathbf{D}} &=&\left\{ h\widehat{D}=(\widehat{D}_{k},^{v}\widehat{%
D}_{k}=\widehat{D}_{k});v\widehat{D}=(\widehat{D}_{c},\ ^{v}\widehat{D}_{c}=%
\widehat{D}_{c})\right\} \\
&=&\{\widehat{\mathbf{\Gamma }}_{\beta \gamma }^{\alpha }=(\widehat{L}%
_{jk}^{i},\ ^{v}\widehat{L}_{n+j\ n+k}^{n+i}=\widehat{L}_{jk}^{i};\ \widehat{%
C}_{jc}^{i}=\ ^{v}\widehat{C}_{n+j\ c}^{n+i},\ ^{v}\widehat{C}_{bc}^{a}=%
\widehat{C}_{bc}^{a})\},
\end{eqnarray*}%
which is metric compatible, $\widehat{D}_{k}g_{ij}=0$ and $\widehat{D}%
_{c}g_{ij}=0,$ and completely defined by $\mathbf{g}$ and a prescribed $%
\mathcal{L}(x,y).$
\end{theorem}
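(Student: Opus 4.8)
The plan is to adapt the proof of the fundamental theorem of (pseudo--)Riemannian geometry to the N--adapted setting, carrying out the Koszul--type inversion separately inside the horizontal and vertical distributions selected by the canonical N--connection. Since $\widehat{\mathbf{D}}$ is required to be a d--connection, its covariant action on the N--adapted frame $\mathbf{e}_{\alpha}=(\mathbf{e}_{i},e_{b})$ of (\ref{dder}) is encoded in coefficients splitting into the four blocks $(\widehat{L}_{jk}^{i},\ {}^{v}\widehat{L}_{bk}^{a},\ \widehat{C}_{jc}^{i},\ {}^{v}\widehat{C}_{bc}^{a})$, so it suffices to show that the stated requirements determine these blocks uniquely.

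First I would expand metric compatibility in the h--direction. The condition $\widehat{D}_{k}g_{ij}=0$ reads, with respect to $\mathbf{e}_{k}$,
\begin{equation*}
\mathbf{e}_{k}g_{ij}-\widehat{L}_{ik}^{l}g_{lj}-\widehat{L}_{jk}^{l}g_{il}=0.
\end{equation*}
Writing this three times under cyclic permutation of $(i,j,k)$, adding two copies and subtracting the third, and imposing the normality requirement that $\widehat{L}_{jk}^{i}$ be symmetric in its lower indices (equivalently, that the pure h--torsion vanish), the $g$--weighted terms collapse and one solves algebraically for
\begin{equation*}
\widehat{L}_{jk}^{i}=\frac{1}{2}g^{ir}\left( \mathbf{e}_{k}g_{jr}+\mathbf{e}_{j}g_{kr}-\mathbf{e}_{r}g_{jk}\right) .
\end{equation*}
This single computation yields simultaneously existence (the right--hand side defines admissible coefficients) and uniqueness of the horizontal block.

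I would then repeat verbatim the same argument in the vertical distribution: $\widehat{D}_{c}g_{ab}=0$ together with symmetry of $\ {}^{v}\widehat{C}_{bc}^{a}$ in its lower indices gives, uniquely,
\begin{equation*}
{}^{v}\widehat{C}_{bc}^{a}=\frac{1}{2}g^{ad}\left( e_{c}g_{bd}+e_{b}g_{cd}-e_{d}g_{bc}\right) .
\end{equation*}
The two remaining blocks are fixed not by a fresh computation but by the Sasaki structure of $\mathbf{g}$: because $g_{ab}=g_{n+i\ n+j}=g_{ij}$ and the vertical indices are identified with the horizontal ones, the coefficients ${}^{v}\widehat{L}$ and $\widehat{C}$ must coincide with those already found, which is exactly the content of the identifications ${}^{v}\widehat{L}_{n+j\ n+k}^{n+i}=\widehat{L}_{jk}^{i}$ and $\widehat{C}_{jc}^{i}={}^{v}\widehat{C}_{n+j\ c}^{n+i}$ in the statement. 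Since $\mathbf{e}_{\alpha}$ and $g_{ij}$ are both built from the canonical N--connection (\ref{clnc}) and the Hessian (\ref{lm}) of $\mathcal{L}$, every coefficient is ultimately a functional of $\mathbf{g}$ and $\mathcal{L}$ alone, establishing the final claim.

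The step I expect to require care is the symmetrization underlying the Koszul inversion, because the N--adapted frame is anholonomic, $[\mathbf{e}_{\alpha},\mathbf{e}_{\beta}]=W_{\alpha\beta}^{\gamma}\mathbf{e}_{\gamma}$, with nonzero coefficients $W_{ia}^{b}=\partial_{a}N_{i}^{b}$ and $W_{ji}^{a}=\Omega_{ij}^{a}$. The key observation to verify is that these are purely \emph{mixed} --- each carries both an h-- and a v--index --- so the pure horizontal coefficients $W_{jk}^{i}$ and the pure vertical coefficients $W_{bc}^{a}$ vanish. Consequently the symmetric--lower--index conditions defining the normal d--connection introduce no hidden anholonomy corrections in either sector, the Koszul argument closes precisely as in the holonomic case, and the off--diagonal $W$'s enter only the mixed torsion components, which are unconstrained and therefore harmless to the determinacy of the system. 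Confirming this clean separation between sectors is what guarantees both existence and uniqueness.
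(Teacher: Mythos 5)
Your proof is correct and takes essentially the same route as the paper: the paper's own proof simply exhibits the Koszul-type coefficients (\ref{cdcc}) in the N--adapted frame, which are exactly the formulas your separate h-- and v--sector Koszul inversions produce. Your supporting observations --- that the pure-sector anholonomy coefficients $W_{jk}^{i}$ and $W_{bc}^{a}$ vanish, so symmetry of the lower indices is precisely the vanishing of pure h--/v--torsion and the inversion closes as in the holonomic case, and that the remaining blocks are fixed by the normality identifications --- correctly supply the uniqueness argument that the paper leaves implicit.
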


\begin{proof}
Choosing
\begin{equation}
\widehat{L}_{jk}^{i}=\frac{1}{2}g^{ih}\left( \mathbf{e}_{k}g_{jh}+\mathbf{e}%
_{j}g_{hk}-\mathbf{e}_{h}g_{jk}\right) ,\widehat{C}_{jk}^{i}=\frac{1}{2}%
g^{ih}\left( \frac{\partial g_{jh}}{\partial y^{k}}+\frac{\partial g_{hk}}{%
\partial y^{j}}-\frac{\partial g_{jk}}{\partial y^{h}}\right) ,  \label{cdcc}
\end{equation}%
we construct such $\ $\ a d--connection $\widehat{\mathbf{D}}_{\alpha }=(%
\widehat{D}_{k},\widehat{D}_{c}),$ with N--adapted coefficients $\widehat{%
\mathbf{\Gamma }}_{\beta \gamma }^{\alpha }=(\widehat{L}_{jk}^{i},\ ^{v}%
\widehat{C}_{bc}^{a}).$ $\ \square $
\end{proof}

\vskip5pt

We provide the N--adapted formulas for torsion and curvature of the normal
d--connection in Appendix \ref{asect1}.

For the purposes of this work,  this property of the
normal d--connection is very important (it follows from a straightforward verification):

\begin{theorem}
\label{th3b}The normal d--connection $\widehat{\mathbf{D}}$ defines a unique
almost symplectic d--connection, $\widehat{\mathbf{D}}\equiv \ _{\theta }%
\widehat{\mathbf{D}},$ which is N--adapted, i.e. it preserves under
parallelism the splitting (\ref{whitney}), $_{\theta }\widehat{\mathbf{D}}_{%
\mathbf{X}}\theta \mathbf{=}0$ and $\widehat{T}_{jk}^{i}=\widehat{T}%
_{bc}^{a}=0,$ see (\ref{cdtors}).
\end{theorem}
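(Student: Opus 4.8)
The plan is to reduce the almost symplectic compatibility $\ _{\theta}\widehat{\mathbf{D}}\theta = 0$ to compatibility with the almost complex structure $\mathbf{J}$, and then to read everything off the explicit coefficients (\ref{cdcc}). First I would use the defining relation $\mathbf{\theta}(\mathbf{X},\mathbf{Y}) = \mathbf{g}(\mathbf{JX},\mathbf{Y})$ together with the metric compatibility $\widehat{\mathbf{D}}\mathbf{g} = 0$ established in the preceding theorem. A direct application of the Leibniz rule gives, for any d--vectors $\mathbf{X},\mathbf{Y},\mathbf{Z}$, the identity $(\widehat{\mathbf{D}}_{\mathbf{Z}}\mathbf{\theta})(\mathbf{X},\mathbf{Y}) = (\widehat{\mathbf{D}}_{\mathbf{Z}}\mathbf{g})(\mathbf{JX},\mathbf{Y}) + \mathbf{g}((\widehat{\mathbf{D}}_{\mathbf{Z}}\mathbf{J})\mathbf{X},\mathbf{Y})$, so that the first term drops out and, by nondegeneracy of $\mathbf{g}$, the condition $\widehat{\mathbf{D}}\mathbf{\theta} = 0$ is equivalent to $\widehat{\mathbf{D}}\mathbf{J} = 0$.

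Next I would verify $\widehat{\mathbf{D}}\mathbf{J} = 0$ directly in the N--adapted basis $\mathbf{e}_{\alpha}$ of (\ref{dder}). The crucial observation is that in this basis $\mathbf{J}$ has constant components --- it sends $\mathbf{e}_{i}\mapsto -\mathbf{e}_{n+i}$ and $\mathbf{e}_{n+i}\mapsto \mathbf{e}_{i}$ --- so all anholonomic derivatives $\mathbf{e}_{\gamma}(\mathbf{J}^{\alpha}_{\beta})$ vanish and the covariant derivative collapses to the purely algebraic expression $(\widehat{\mathbf{D}}_{\gamma}\mathbf{J})^{\alpha}_{\beta} = \widehat{\mathbf{\Gamma}}^{\alpha}_{\gamma\tau}\mathbf{J}^{\tau}_{\beta} - \widehat{\mathbf{\Gamma}}^{\tau}_{\gamma\beta}\mathbf{J}^{\alpha}_{\tau}$. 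Here the index--doubling symmetry of the normal d--connection is decisive: its coefficients obey $\widehat{L}^{i}_{jk} = \ ^{v}\widehat{L}^{n+i}_{n+j\,n+k}$ and $\widehat{C}^{i}_{jc} = \ ^{v}\widehat{C}^{n+i}_{n+j\,c}$, which is exactly the block structure implemented by $\mathbf{J}$. Substituting the nonzero entries of $\mathbf{J}$ and matching the h-- and v--coefficient blocks pairwise, every term cancels, giving $\widehat{\mathbf{D}}\mathbf{J} = 0$ and hence $\ _{\theta}\widehat{\mathbf{D}}\mathbf{\theta} = 0$.

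For the torsion I would appeal to the explicit form (\ref{cdcc}): the coefficient $\widehat{L}^{i}_{jk}$ is manifestly symmetric in $(j,k)$ and $\widehat{C}^{a}_{bc}$ in $(b,c)$, since both carry the Christoffel--type shape $\tfrac{1}{2}g^{\cdot h}(\cdots)$. Reading the purely horizontal and purely vertical torsion components from the general formulas (\ref{cdtors}) of Appendix \ref{asect1}, this symmetry yields $\widehat{T}^{i}_{jk} = \widehat{L}^{i}_{jk} - \widehat{L}^{i}_{kj} = 0$ and $\widehat{T}^{a}_{bc} = \widehat{C}^{a}_{bc} - \widehat{C}^{a}_{cb} = 0$. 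That $\ _{\theta}\widehat{\mathbf{D}}$ is N--adapted, i.e. preserves the splitting (\ref{whitney}), is inherited from its being a d--connection, which is already granted by the preceding theorem.

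Finally, uniqueness follows by combining the constraints: metric compatibility, $\widehat{\mathbf{D}}\mathbf{\theta} = 0$, N--adaptation, and the vanishing of $\widehat{T}^{i}_{jk},\widehat{T}^{a}_{bc}$ pin the coefficients down to the expressions (\ref{cdcc}), which is precisely the normal d--connection already shown to be unique. I expect the single real obstacle to be the bookkeeping in the second step: one must track the off--diagonal action of $\mathbf{J}$ against all four coefficient blocks $(\widehat{L}^{i}_{jk},\ ^{v}\widehat{L}^{a}_{bk};\widehat{C}^{i}_{jc},\ ^{v}\widehat{C}^{a}_{bc})$ and invoke the identifications among them carefully. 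Note that the anholonomy coefficients $W^{\gamma}_{\alpha\beta}$ never enter this computation, because $\mathbf{J}$ is constant in the adapted frame; they resurface only in the separate torsion calculation.
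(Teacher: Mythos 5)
Your proposal is correct, and it in fact supplies more than the paper does: the paper states Theorem \ref{th3b} with only the preceding remark that it ``follows from a straightforward verification'' and never carries out the computation. Your organization of that verification is a clean and legitimate one: the Leibniz identity $(\widehat{\mathbf{D}}_{\mathbf{Z}}\mathbf{\theta })(\mathbf{X},\mathbf{Y})=(\widehat{\mathbf{D}}_{\mathbf{Z}}\mathbf{g})(\mathbf{JX},\mathbf{Y})+\mathbf{g}((\widehat{\mathbf{D}}_{\mathbf{Z}}\mathbf{J})\mathbf{X},\mathbf{Y})$ is exactly right, so metric compatibility (from the preceding theorem) reduces $\ _{\theta }\widehat{\mathbf{D}}\theta =0$ to $\widehat{\mathbf{D}}\mathbf{J}=0$, and the latter does follow purely algebraically from the constancy of $\mathbf{J}$ in the frame (\ref{dder}) together with the block identifications $\ ^{v}\widehat{L}_{n+j\ n+k}^{n+i}=\widehat{L}_{jk}^{i}$, $\widehat{C}_{jc}^{i}=\ ^{v}\widehat{C}_{n+j\ c}^{n+i}$ (which in turn rest on $g_{ij}=g_{n+i\ n+j}$ in the Sasaki metric (\ref{sasakmetr})). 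The verification the paper alludes to, and what is done in the cited sources \cite{vlfedq,vegpla,karabeg1}, is the equivalent direct computation of $\widehat{\mathbf{D}}_{\gamma }\theta _{\alpha \beta }$ in N--adapted components using $\theta =g_{ij}\,\mathbf{e}^{n+i}\wedge dx^{j}$ from (\ref{sform}); your route through $\mathbf{J}$ avoids handling $\theta _{\alpha \beta }$ explicitly and makes the cancellation structural rather than computational. Two points you should tighten. First, in the torsion step the formulas $\widehat{T}_{jk}^{i}=\widehat{L}_{jk}^{i}-\widehat{L}_{kj}^{i}$ and $\widehat{T}_{bc}^{a}=\widehat{C}_{bc}^{a}-\widehat{C}_{cb}^{a}$ are valid only because the $h$--projection of $[\mathbf{e}_{j},\mathbf{e}_{k}]$ vanishes (the commutator equals $\Omega _{jk}^{a}e_{a}$, i.e. $W_{jk}^{i}=0$) and $[e_{b},e_{c}]=0$; you gesture at this in your closing remark, but it is precisely the reason the anholonomy coefficients do not spoil the antisymmetrization, so it belongs in the torsion argument itself, not as an aside. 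Second, the uniqueness paragraph as written borders on circular (``the constraints pin the coefficients down to (\ref{cdcc})'' is the assertion to be proved); the cleaner statement is that $\mathbf{D}\theta =0$ together with $\mathbf{D}\mathbf{J}=0$ returns $\mathbf{D}\mathbf{g}=0$ (since $\mathbf{g}(\mathbf{X},\mathbf{Y})=\theta (\mathbf{X},\mathbf{JY})$), so any candidate satisfying your list of conditions is a metric--compatible N--adapted d--connection with vanishing $h$-- and $v$--torsions, and uniqueness is then inherited verbatim from the preceding theorem on the normal d--connection.
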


We note that the normal d--connection $\widehat{\mathbf{\Gamma }}_{\beta
\gamma }^{\alpha }$ is a N--anholonomic analog of the affine connection $\
^{K}\mathbf{\Gamma }_{\beta \gamma }^{\alpha }$ and Nijenhuis tensor $^{K}%
\mathbf{\Omega }_{\ \beta \gamma }^{\alpha }$ with the torsion satisfying
the condition $\ ^{K}\mathbf{T}_{\ \beta \gamma }^{\alpha }=(1/4)^{K}\mathbf{%
\Omega }_{\ \beta \gamma }^{\alpha }$ considered in Ref. \cite{karabeg1}
(those constructions were not for spaces enabled with N--connection
structure). On N--anholonomic manifolds, we can work equivalently with both
types of linear connections.

\section{Nonholonomic Fedosov Quantization}

\label{snfq} We modify the Fedosov method in a form which allows us to
quantize the Lagrange--Finsler and Einstein spaces and related
generalizations for Einstein--Finsler quantum gravity models. Nonholonomic
Chern--Weil homomorphisms are defined. Our former results on nonholonomic
deformation quantization from \cite{vqgr2,vegpla,vlfedq} are revised with
the aim to elaborate in the next section a N--adapted trace density map
formalism (originally considered in \cite{feigin,chen}) and  prove the local
index theorem for N--anholonomic manifolds.

\subsection{Nonholonomic Chern--Weil homomorphisms}

\label{ss31}Let us consider a complex vector space $\mathcal{V}^{2n},\dim $ $%
\mathcal{V}^{2n}=2n,$ endowed with a symplectic form $b^{\alpha \beta
}=-b^{\beta \alpha };$ coefficients are defined with respect to a local \
coordinate basis $\ e_{\alpha }=\ \partial _{\alpha }=\partial /\partial
u^{\alpha },$ for $\alpha ,\beta ,...=1,2,...,2n.$ We call $\ ^{d}\mathcal{V}%
=(h\mathcal{V},v\mathcal{V})=\mathcal{V}^{2n}$ to be a distinguished vector
(d--vector) space if it is enabled with a conventional h-- and v--splitting
of type (\ref{whitney}) into two symplectic subspaces with $b^{\alpha \beta
}=(b^{ij}=-b^{ji},b^{ac}=-b^{ca})$ and $z^{\alpha }=(z^{i},z^{a}).$ On
N--anholonomic manifolds, there are considered (distinguished) d-vector,
d-tensor, d-connection, d-spinor, d-group, d-algebra etc fields which
can be adapted to N--connection structure into certain ''irreducible''
h-v--components.

Deformation quantization is elaborated for formal extensions on a ''small''
parameter $v.$
In various papers, authors prefer to write $v=\hbar $ considering certain
analogy with the Plank constant. We associate to a symplectic d--vector
space $\ ^{d}\mathcal{V}:$

\begin{definition}
The Weyl d--algebra $\ ^{d}\mathcal{W}=(h\mathcal{W},v\mathcal{W})$ over the
field $\mathbb{C}((v))$ is the d--vector space $\mathbb{C}[[\ ^{d}\mathcal{V}%
]](v)$ of \ the formally completed symmetric d--algebra of \ $\ ^{d}\mathcal{%
V}$ endowed with associative multiplication (Wick product)
{\small
\begin{eqnarray}
\ ^{1}q\circ \ ^{2}q\ (z,v) &:= &\exp \left( i\frac{v}{4}\ b^{\alpha
\beta }(\frac{\delta ^{2}}{\partial z^{\alpha }\partial \ ^{1}z^{\beta }}+ \frac{\delta ^{2}}{\partial \ ^{1}z^{\beta }\partial z^{\alpha }})\right)
\times  \notag \\
&&\ ^{1}q(z,v)\ ^{2}q(\ ^{1}z,v)\mid _{z=\ ^{1}z}  \label{wickp} \\
&=&\exp \left( i\frac{v}{4}b^{ij}\ (\frac{\delta ^{2}}{\partial
z^{i}\partial \ ^{1}z^{j}}+\frac{\delta ^{2}}{\partial \ ^{1}z^{j}\partial
z^{i}})+i\frac{v}{2}(b^{ac}\frac{\delta ^{2}}{\partial z^{a}\partial \
^{1}z^{c}})\right)  \notag \\
&& \times \ ^{1}q(z,v)\ ^{2}q(\ ^{1}z,v)\mid _{z=\ ^{1}z},  \notag
\end{eqnarray}%
}
for any $\ ^{1}q,\ ^{2}q\in $ $^{d}\mathcal{W}.$
\end{definition}

In h--v--components the product (\ref{wickp}),  we use N--elongated
derivatives $\delta /\partial z^{\alpha }$ and $\delta /\partial \
^{1}z^{\alpha }$ of type (\ref{dder}) considering that, in general, we work
with nonolonomic distributions on vector spaces, algebras and manifolds/
bundles. All constructions can be redefined for usual partial derivatives with respect to local coordinate frames.  There are natural h-- and v--filtrations, correspondingly, of $h%
\mathcal{W}$ and $v\mathcal{W}$ constructed with respect to the degrees of
monomials $2[v]+[z^{i}]$ and $2v+[z^{a}].$ These filtrations define
corresponding ''N--adapted'' $(2[v]+[z^{i}])$--adic and \ $(2[v]+[z^{a}])$%
--adic topologies parametrized in the form%
\begin{eqnarray}
... &\subset &h\mathcal{W}^{1}\subset h\mathcal{W}^{0}\subset h\mathcal{W}%
^{-1}...\subset h\mathcal{W},  \label{filtr} \\
&&\mbox{ for }h\mathcal{W}^{\ ^{h}s}=\{\sum\limits_{2k+p\geq \ ^{h}s}v^{k}\
^{k}a_{i_{i}....i_{p}}(u^{\alpha })z^{i_{1}}...z^{i_{p}}\};  \notag \\
... &\subset &v\mathcal{W}^{1}\subset v\mathcal{W}^{0}\subset v\mathcal{W}%
^{-1}...\subset v\mathcal{W},  \notag \\
&&\mbox{ for }v\mathcal{W}^{\ ^{v}s}=\{\sum\limits_{2k+p\geq \ ^{v}s}v^{k}\
^{k}a_{b_{i}....b_{p}}(u^{\alpha })z^{b_{1}}...z^{b_{p}}\},  \notag
\end{eqnarray}%
for $i_{1},i_{2},...i_{p}=1,2,...,n$ and $b_{1},b_{2},...,b_{p}=n+1,...n+n.$

For Weyl d--algebras, we have to adapt to N--connections the explicit
constructions with the $(n+n)$-th Hochschild cocycle $C^{n+n}(\ ^{d}\mathcal{W%
},\ ^{d}\mathcal{W}^{\ast })$ with coefficients in the dual d--module of $%
^{d}\mathcal{W}^{\ast }=(h\mathcal{W}^{\ast },v\mathcal{W}^{\ast }).$ Let us
consider a $(n+n)$-th simplex (with standard orientation) $\bigtriangleup
_{2n}=\{(p_{1},...,p_{2n})\in \mathbb{R}^{2n};p_{1}\leq ...\leq p_{2n}\leq
1\}$ and denote the natural projection from $\ ^{d}\mathcal{W}^{\otimes
(n+n+1)}$ onto $\mathbb{C}((v))$ by $\mu _{2n}(q_{0}\otimes ...\otimes
q_{2n})=q_{0}(0)...q_{2n}(0).$ Using any elements $q_{0},q_{\alpha }\in \
^{d}\mathcal{W}$ and $\varphi \in \ ^{d}\mathcal{W}^{\otimes (n+n+1)},$ for $%
p_{0}=0,$ we provide a N--adapted generalization of the formula for $2n$--th
Hochschild cocycle \cite{feigin},%
\begin{eqnarray}
&&C^{n+n}(\ ^{d}\mathcal{W},\ ^{d}\mathcal{W}^{\ast })\ni \tau
_{n+n}(\varphi )(q_{0})  \label{ffshf} \\
&=&\mu _{2n}\left( \int\limits_{\bigtriangleup _{2n}}\prod\limits_{0\leq
\beta \leq \gamma \leq 2n}e^{v(p_{\beta }-p_{\gamma }+1/2)b_{\beta \gamma
}}\pi _{n+n}(q_{0}\otimes \varphi )\delta p_{1}\wedge ...\wedge \delta
p_{2n}\right) ,  \notag
\end{eqnarray}%
where $\pi _{n+n}(\varphi )(q_{0}\otimes ...\otimes q_{2n})=\varepsilon
^{\alpha _{1}...\alpha _{2n}}q_{0}\otimes \frac{\delta q_{1}}{\partial
z^{\alpha _{1}}}\otimes ...\otimes \frac{\delta q_{2n}}{\partial z^{\alpha
_{2n}}}$ acts on respective tensor products of nonholonomic bases of type (%
\ref{dder}) and $\varepsilon ^{\alpha _{1}...\alpha _{2n}}$ is absolutely
antisymmetric. In the formula (\ref{ffshf}), the action of antisymmetric
form $b$ on the $\alpha $--th and $\beta $--th components of $\ ^{d}\mathcal{%
W}^{\otimes (n+n+1)}$ and $q_{\gamma }\in \ ^{d}\mathcal{W}$ is denoted by
\begin{equation*}
b_{\alpha \beta }(q_{0}\otimes ...\otimes q_{2n})=b^{\alpha _{\gamma }\alpha
_{\nu }}(q_{0}\otimes ...\otimes \frac{\delta q_{\gamma }}{\partial
z^{\alpha _{\gamma }}}\otimes ...\otimes \frac{\delta q_{\nu }}{\partial
z^{\alpha _{\nu }}}\otimes ...\otimes q_{2n}).
\end{equation*}

Let us consider an associative algebra $\mathcal{A}$ with  unit over a
field $K$ with caracteristic zero and denote, for instance, by $\mathfrak{gl}%
_{k}(\ ^{d}\mathcal{W})$ the Lie algebra of $k\times k$ matrices with values
in quadratic monomials in $\ ^{d}\mathcal{W}$. Also, we  shall use matrix
algebras of type $\mathfrak{gl}_{k}(\mathcal{A})$ and dual ones, with values
in the dual module $\mathcal{A}$ $^{\ast }.$ Denoting by $C^{\bullet }(%
\mathcal{A},\mathcal{A}^{\ast })$ the Hochschild cochain complex with
coefficients in the dual module $\mathcal{A}$ $^{\ast }$ (for $\mathcal{A}%
\rightarrow \mathfrak{gl}_{k}(\mathcal{A}),$ the similar complex is written $%
C^{\bullet }(\mathfrak{gl}_{k}(\mathcal{A}),\mathfrak{gl}_{k}(\mathcal{A}%
)^{\ast })$), we can construct a chain map%
\begin{equation*}
\ ^{k}\phi :C^{\bullet }(\mathcal{A},\mathcal{A}^{\ast })\rightarrow
C^{\bullet }(\mathfrak{gl}_{k}(\mathcal{A}),\mathfrak{gl}_{k}(\mathcal{A}%
)^{\ast })
\end{equation*}%
following the formula%
\begin{eqnarray}
&&\ ^{k}\phi (\psi )(A_{1}\otimes a_{1},...,A_{r}\otimes a_{r})(A_{0}\otimes
a_{0})  \label{map2} \\
&=&\frac{1}{r!}\sum\limits_{s\in S_{r}}(-1)^{s}\psi (a_{s(1)}\otimes
...\otimes a_{s(r)})(a_{0})\ tr(A_{0}A_{s(1)}...A_{s(r)}),  \notag
\end{eqnarray}%
where $S_{r}$ denotes the group of permutations of $r$ elements, matrices $%
A_{s}\in \mathfrak{gl}_{k}(K),a_{s}\in \mathcal{A}$, and $\psi \in C^{r}(%
\mathcal{A},\mathcal{A}^{\ast }).$ The action of map on cocycle (\ref{ffshf}%
) result in the $2n$--th cocycle in the chain complex $C^{\bullet }(%
\mathfrak{gl}_{k}(\ ^{d}\mathcal{W}),\mathfrak{gl}_{k}(\ ^{d}\mathcal{W}%
)^{\ast }),$%
\begin{equation}
\ ^{k}\Theta _{n+n}=\ ^{k}\phi (\tau _{n+n}):\wedge ^{2n}\left( \mathfrak{gl}%
_{k}(\ ^{d}\mathcal{W})\right) \otimes \mathfrak{gl}_{k}(\ ^{d}\mathcal{W}%
)\rightarrow \mathbb{C(}(v)).  \label{coc2}
\end{equation}%
The cocycle (\ref{coc2}) presents a N--adapted generalization of the results
from \cite{feigin,chen} for symplectic manifolds. In our case, we construct
a trace density map for a quantum d--algebra of functions on almost K\"{a}%
hler manifolds induced by nonholonomic distributions on Lagrange and
Einstein--Finsler spaces.

The Chern--Weil homomorphisms are defined for projections of Lie algebras to
their subalgebras. N--adapted geometric constructions on nonholonomic
manifolds are with associated Lie groups/algebras and related geometric maps are
distinguished by the --connection structure. We call a Lie distinguished
algebra (d--algebra) $\ ^{d}\mathfrak{g=}$ $\ ^{h}\mathfrak{g}\oplus $ $\
^{v}\mathfrak{g}=(\ ^{h}\mathfrak{g},\ ^{v}\mathfrak{g})$ a couple of
conventional horizontal and vertical Lie algebras associated to a
N--connection splitting $T\mathbf{V}=h\mathbf{V}\oplus v\mathbf{V}$ (see
also (\ref{whitney})). A Lie d--algebra $\ ^{d}\mathfrak{g}$ may have a
d--subalgebra $\ ^{d}\mathfrak{\rho }\subset $ $\ ^{d}\mathfrak{g},$ when $\
^{d}\mathfrak{\rho }=(\ ^{h}\mathfrak{\rho }\subset \ ^{h}\mathfrak{g},\ ^{v}%
\mathfrak{\rho }\subset \ ^{v}\mathfrak{g}).$ Let us suppose there is an $\
^{h}\mathfrak{\rho }$--equivariant N--adapted projection $\ ^{N}pr:(\ ^{h}%
\mathfrak{g}\rightarrow \ ^{h}\mathfrak{\rho },\ ^{v}\mathfrak{g}\rightarrow
\ ^{v}\mathfrak{\rho })$ satisfying the properties that the $h$-- and $v$%
--components of maps commute respectively with adjoint actions of $\ ^{h}%
\mathfrak{\rho }$ and $\ ^{v}\mathfrak{\rho }$ and $\ ^{N}pr\mid _{\ ^{d}%
\mathfrak{\rho }}=Id_{\ ^{d}\mathfrak{\rho }}.$ We can characterize \ $\
^{N}pr$ by its curvature
\begin{equation}
\ ^{d}C(\ ^{1}\zeta ,\ ^{2}\zeta ):=\left[ pr(\ ^{1}\zeta ),pr(\ ^{2}\zeta )%
\right] -pr\left( [\ ^{1}\zeta ,\ ^{2}\zeta ]\right) \in Hom\left( \wedge
^{2}\ ^{d}\mathfrak{g,}\ ^{d}\mathfrak{\rho }\right) ,  \label{proj1}
\end{equation}%
for $\ ^{1}\zeta =(\ _{h}^{1}\zeta ,\ _{v}^{1}\zeta )$ and $\ ^{2}\zeta =(\
_{h}^{2}\zeta ,\ _{v}^{2}\zeta )$ which in N--adapted form splits into $h$%
--, $v$--components of curvature,%
\begin{eqnarray*}
\ ^{h}C(\ _{h}^{1}\zeta ,\ _{h}^{2}\zeta ):= &&\left[ pr(\ _{h}^{1}\zeta
),pr(\ _{h}^{2}\zeta )\right] -pr\left( [\ _{h}^{1}\zeta ,\ _{h}^{2}\zeta
]\right) \in Hom\left( \wedge ^{2}\ ^{h}\mathfrak{g,}\ ^{h}\mathfrak{\rho }%
\right) , \\
\ ^{v}C(\ _{v}^{1}\zeta ,\ _{v}^{2}\zeta ):= &&\left[ pr(\ _{v}^{1}\zeta
),pr(\ _{v}^{2}\zeta )\right] -pr\left( [\ _{v}^{1}\zeta ,\ _{v}^{2}\zeta
]\right) \in Hom\left( \wedge ^{2}\ ^{v}\mathfrak{g,}\ ^{v}\mathfrak{\rho }%
\right) .
\end{eqnarray*}

For any given adjoint invariant d--form
\begin{equation*}
\ ^{d}A=\left( \ ^{h}A\in ((S^{r}\ ^{h}\mathfrak{\rho })^{\ast })^{\ ^{h}%
\mathfrak{\rho }},\ ^{v}A\in ((S^{r}\ ^{v}\mathfrak{\rho })^{\ast })^{\ ^{v}%
\mathfrak{\rho }}\right) ,
\end{equation*}%
the formulas
\begin{eqnarray}
&&\ ^{h}\chi (\ ^{h}A)(\ _{h}^{1}\zeta _{s},\ _{h}^{2}\zeta _{s},...,\
_{h}^{2r}\zeta _{s})=  \label{cwnh} \\
&&{\qquad }\frac{1}{(2r)!}\sum\limits_{s\in S_{2r}}(-1)^{s}\ ^{h}A\left( \
^{h}C(\ _{h}^{1}\zeta _{s},\ _{h}^{2}\zeta _{s}),...,\ ^{h}C(\
_{h}^{2r-1}\zeta _{s},\ _{h}^{2r}\zeta _{s})\right) ,  \notag \\
&&\ ^{v}\chi (\ ^{v}A)(\ _{v}^{1}\zeta _{s},\ _{v}^{2}\zeta _{s},...,\
_{v}^{2r}\zeta _{s})=  \notag \\
&&{\qquad }\frac{1}{(2r)!}\sum\limits_{s\in S_{2r}}(-1)^{s}\ ^{v}A\left( \
^{v}C(\ _{v}^{1}\zeta _{s},\ _{v}^{2}\zeta _{s}),...,\ ^{v}C(\
_{v}^{2r-1}\zeta _{s},\ _{v}^{2r}\zeta _{s})\right)  \notag
\end{eqnarray}%
define a relative Lie d--algebra cocycle $\ ^{d}\chi (\ ^{d}A)=\ ^{h}\chi (\
^{h}A)+\ ^{v}\chi (\ ^{v}A)\in C^{2r}(\ ^{d}\mathfrak{g,}\ ^{v}\mathfrak{%
\rho }).$ We say that such maps determine a Chern--Weil N--adapted
homomorphism inducing also a map from the d--vector space $\left( (S^{r}\
^{d}\mathfrak{\rho })^{\ast }\right) ^{\ ^{d}\mathfrak{\rho }}$ to $H^{2r}(\
^{d}\mathfrak{g,}\ ^{d}\mathfrak{\rho }).$ Such constructions do not depend
on the type of $\ ^{N}pr$ and/or N--connection splitting.

Taking $\ ^{d}\mathfrak{g=gl}_{k}(\ ^{d}\mathcal{W})=\ \mathfrak{gl}_{k}(\ h%
\mathcal{W})\oplus \ \mathfrak{gl}_{k}(\ v\mathcal{W}),\ ^{d}\mathfrak{\rho
=gl}_{k}\oplus \mathfrak{sp}_{n+n},$ when the Lie d--algebra $\mathfrak{sp}%
_{n+n}$ is realized as a subalgebra of scalar matrices in $\mathfrak{gl}%
_{k}(\ ^{d}\mathcal{W})$ with values in quadratic monomials in $\ ^{d}%
\mathcal{W}$, \ we can define the projection $\ ^{N}pr:\ ^{d}\mathfrak{g}%
\rightarrow \ ^{d}\mathfrak{\rho }$ following the formulas%
\begin{eqnarray}
\ ^{N}pr(\zeta ) &=&\ _{0}^{N}pr(\zeta )+\ _{2}^{N}pr(\zeta ),  \label{proj2}
\\
\ _{0}^{N}pr(\zeta ) &=&\zeta _{\mid z=0},\ \ _{2}^{N}pr(\zeta )=\frac{1}{k}%
\sigma _{2}(tr(\zeta )){I}_{k},  \notag
\end{eqnarray}%
where ${I}_{k}$ is the $k\times k$ identity matrix  and $\sigma _{2}$
is the projection onto the monomials of second degree in $z$--variables.

An explicit computation (the original constructions are due to the
Feigin--Felder--Shoikhet theorem \cite{feigin}) allows us to evaluate the
N--adapted action of (\ref{coc2}) on the identity matrix $I_{k}\in $ $\ ^{d}%
\mathfrak{g}$. This provides the proof for

\begin{theorem}
\label{th1a}There is a relative, with respect to distinguished subalgebra $\
^{d}\mathfrak{\rho =gl}_{k}\oplus \mathfrak{sp}_{n+n},$ Lie d--algebra
cocycle $\ ^{k}\Theta _{n+n}\in C^{n+n}(\ ^{d}\mathfrak{g},\ ^{d}\mathfrak{g}%
^{\ast })$ $\ $parametrized by N--adapted maps $\ ^{N}\varphi $ \ determined
by (\ref{coc2}), when
\begin{equation*}
C^{n+n}(\ ^{d}\mathfrak{g},\ \ ^{d}\mathfrak{\rho })\ni \ ^{N}\varphi =\
^{k}\Theta _{n+n}(\cdot ,\ldots ,\cdot ,I_{k}):\wedge ^{2n}(\ ^{d}\mathfrak{g%
})\rightarrow \mathbb{C}((v)).
\end{equation*}%
The cohomology class of such nonholonomic maps%
\begin{equation*}
\left[ \ ^{N}\varphi \right] =\left[ \ ^{d}\chi (\ ^{d}A_{r})=\ ^{h}\chi (\
^{h}A_{r})+\ ^{v}\chi (\ ^{v}A_{r})\right]
\end{equation*}%
coincides with the image of the $r$--th component $\ ^{d}A_{r}\in \left(
(S^{r}\ ^{d}\mathfrak{\rho })^{\ast }\right) ^{\ ^{d}\mathfrak{\rho }}$ of
the adjoint invariant d--form $\ ^{d}A\in \left( (S\ ^{d}\mathfrak{\rho }%
)^{\ast }\right) ^{\ ^{d}\mathfrak{\rho }}$, when under nonholonomic
Chern--Weil homomorphism (\ref{cwnh})%
\begin{equation*}
\ ^{d}A\left( \mathfrak{X},\ldots ,\mathfrak{X}\right) =\det \left( \frac{\
^{1}\mathfrak{X}/2v}{\sinh (\ ^{1}\mathfrak{X}/2v)}\right) ^{1/2}tr\frac{\
^{2}\mathfrak{X}}{v}
\end{equation*}%
for any d--vector $\mathfrak{X}$ with values of coefficients in the
corresponding Lie d--group, $\mathfrak{X}=$ $\ ^{1}\mathfrak{X}\oplus $ $\
^{2}\mathfrak{X}\in \mathfrak{sp}_{n+n}\oplus \mathfrak{gl}_{k}.$
\end{theorem}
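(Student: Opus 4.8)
The plan is to reduce the statement to the Feigin--Felder--Shoikhet computation \cite{feigin} by systematically exploiting the h--v splitting furnished by the N--connection. Because the Weyl d--algebra factorizes as $\ ^{d}\mathcal{W}=(h\mathcal{W},v\mathcal{W})$ and every map entering the construction --- the Hochschild cocycle $\tau _{n+n}$ of (\ref{ffshf}), the chain map $\ ^{k}\phi $ of (\ref{map2}), the projection $\ ^{N}pr$ of (\ref{proj2}), and the nonholonomic Chern--Weil homomorphism $\ ^{d}\chi $ of (\ref{cwnh}) --- is N--adapted, each splits into an h--part and a v--part along the two symplectic subspaces carrying $b^{ij}$ and $b^{ac}$. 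I would therefore run the classical oscillator/simplex computation on each block and recombine, the only genuinely new feature being that the $\delta $--derivatives in (\ref{wickp}) and (\ref{ffshf}) are N--elongated.

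First I would record that $\ ^{k}\Theta _{n+n}={}^{k}\phi (\tau _{n+n})$ of (\ref{coc2}) is a cocycle in $C^{\bullet }(\mathfrak{gl}_{k}(\ ^{d}\mathcal{W}),\mathfrak{gl}_{k}(\ ^{d}\mathcal{W})^{\ast })$; this is immediate since $\tau _{n+n}$ is a Hochschild cocycle and $\ ^{k}\phi $ is a chain map. Next I would evaluate on the identity, $\ ^{N}\varphi ={}^{k}\Theta _{n+n}(\cdot ,\ldots ,\cdot ,I_{k})$, and verify that it descends to a relative cochain in $C^{n+n}(\ ^{d}\mathfrak{g},\ ^{d}\mathfrak{\rho })$. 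Relativeness (basicness with respect to $\ ^{d}\mathfrak{\rho }=\mathfrak{gl}_{k}\oplus \mathfrak{sp}_{n+n}$) follows from the $\ ^{h}\mathfrak{\rho }$--equivariance of $\ ^{N}pr$ together with the normalization $\ ^{N}pr\mid _{\ ^{d}\mathfrak{\rho }}=Id$, which force both invariance under the adjoint action and vanishing on arguments lying in $\ ^{d}\mathfrak{\rho }$.

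The core step is the explicit evaluation of the simplex integral in (\ref{ffshf}). Inserting $I_{k}$ collapses the trace weight $tr(A_{0}A_{s(1)}\ldots A_{s(r)})$ of (\ref{map2}) to the linear character $tr(\ ^{2}\mathfrak{X}/v)$ on the $\mathfrak{gl}_{k}$ factor. The remaining integral $\int_{\bigtriangleup _{2n}}\prod_{\beta \leq \gamma }e^{v(p_{\beta }-p_{\gamma }+1/2)b_{\beta \gamma }}$, antisymmetrized by $\varepsilon ^{\alpha _{1}\ldots \alpha _{2n}}$ and projected by $\mu _{2n}$, is a purely Gaussian quantity in the $z$--variables; carried out block by block and summed, it reproduces the Duflo/Harish--Chandra factor, i.e. the $\hat{A}$--genus density $\det (\ ^{1}\mathfrak{X}/2v\,/\sinh (\ ^{1}\mathfrak{X}/2v))^{1/2}$ attached to the $\mathfrak{sp}_{n+n}$ factor $\ ^{1}\mathfrak{X}$. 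This identifies the adjoint--invariant d--form $\ ^{d}A$ whose Chern--Weil image is $[\ ^{N}\varphi ]$, and since both $[\ ^{N}\varphi ]$ and $\ ^{d}\chi (\ ^{d}A_{r})$ lie in $H^{2r}(\ ^{d}\mathfrak{g},\ ^{d}\mathfrak{\rho })$ and the homomorphism (\ref{cwnh}) is independent of the chosen $\ ^{N}pr$ and N--splitting, matching on the generating invariant polynomial yields the asserted equality of classes.

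The main obstacle I anticipate is controlling the N--elongation. One must check that replacing the ordinary partials by the N--adapted operators $\delta /\partial z^{\alpha }$ of type (\ref{dder}) leaves the combinatorial Gaussian evaluation intact, and that the two adic filtrations of (\ref{filtr}) are mutually compatible, so that the formal integral converges in the N--adapted topology and the separate h-- and v--contributions assemble into a single determinant of $\ ^{1}\mathfrak{X}\in \mathfrak{sp}_{n+n}$ on the full $2n$--dimensional symplectic d--space. One should also confirm that the anholonomy coefficients $W_{\alpha \beta }^{\gamma }$ do not spoil the cocycle identity; since the evaluation is fibrewise--algebraic in $\ ^{d}\mathcal{W}$ and the Chern--Weil class is insensitive to the choice of N--splitting, these nonholonomic corrections enter only through the N--adapted frames and do not affect the resulting cohomology class.
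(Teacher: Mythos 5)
Your proposal is correct and follows essentially the same route as the paper: the paper's own proof consists precisely of evaluating the N--adapted action of the cocycle (\ref{coc2}) on the identity matrix $I_{k}$ and invoking the Feigin--Felder--Shoikhet computation, with the $h$--$v$ splitting of all objects handling the nonholonomic structure. Your extra attention to the relative-cocycle property and to the stability of the Gaussian simplex evaluation under N--elongation fills in details the paper leaves implicit (it devotes only a sentence to the whole argument), but the strategy is identical.
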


For integrable distributions on spaces under considerations, the results of
this Theorem transform into ''holonomic versions'' studied in \cite%
{feigin,chen}.

\subsection{Fedosov quantization of Einstein--Finsler spaces}

Let $\ ^{\mathcal{L}}\mathbf{K}^{2n}$ be an almost K\"{a}hler space derived
for a N--anholonomic manifold $\mathbf{V}^{2n}.$ Deformation quantization of
such spaces with nontrivial torsion can be performed following methods
elaborated in Refs. \cite{karabeg1,vqgr2,vegpla,vlfedq}. In this work,
we revise the Fedosov quantization of Lagrange--Finsler and Einstein spaces
in order to include in the scheme the bi--connection formalism which can be
used for a perturbative model of quantum gravity \cite{vggr}.

We introduce the tensor $\mathbf{\Lambda }^{\alpha \beta }\doteqdot \theta
^{\alpha \beta }-i\ \mathbf{g}^{\alpha \beta },$ where $\theta ^{\alpha
\beta }$ is a form (\ref{sform}) where values with ''up'' indices are
constructed using $\ \mathbf{g}^{\alpha \beta }$ being the inverse to $%
\mathbf{g}_{\alpha \beta }$ (\ref{sasakmetr}). Considering a nonholonomic
vector bundle $\mathcal{E}$ of rank $k$ over $\ ^{\mathcal{L}}\mathbf{K}%
^{2n},$ and $\mathbf{V}^{2n},$ we denote by $End(\mathcal{E})$ \ the bundle
of endomorphisms of $\ ^{\mathcal{L}}\mathbf{K}^{2n}$ and by $End_{\mathbf{V}%
}=Sec\left( \ ^{\mathcal{L}}\mathbf{K}^{2n},End(\mathcal{E})\right) $ the
d--algebra of global sections of $End(\mathcal{E}).$

The formalism of deformation quantization can be developed by using $%
C^{\infty }(\ ^{\mathcal{L}}\mathbf{K}^{2n})[[v]]$, the space of formal
series of variable $v$ with coefficients from $C^{\infty }(\ ^{\mathcal{L}}%
\mathbf{K}^{2n})$ on a Poisson manifold $(\ ^{\mathcal{L}}\mathbf{K}%
^{2n},\{\cdot ,\cdot \}).$
In this work, we deal with the almost Poisson structure defined by the
canonical almost symplectic structure. An associative d--algebra structure
on $C^{\infty }(\ ^{\mathcal{L}}\mathbf{K}^{2n})[[v]]$ can be defined
canonically with a $v$--linear and $v$--adically continuous star product
\begin{equation}
\ ^{1}f\ast \ ^{2}f=\sum\limits_{r=0}^{\infty }\ _{r}C(\ ^{1}f,\ ^{2}f)\
v^{r},  \label{starpb}
\end{equation}%
where $\ _{r}C,r\geq 0,$ are bilinear operators on $C^{\infty }(\ ^{\mathcal{%
L}}\mathbf{K}^{2n}),$ for $\ _{0}C(\ ^{1}f,\ ^{2}f)=\ ^{1}f\ \ ^{2}f$ and $\
_{1}C(\ ^{1}f,\ ^{2}f)-\ _{1}C(\ ^{2}f,\ ^{1}f)=i\{\ ^{1}f,\ ^{2}f\};$\ $i$
being the complex unity. Such a $\ast $--operation defines a quantization
the d--algebra $End_{\mathbf{V}},$ i.e. an associative $\mathbb{C}((v))$%
--linear product in $End_{\mathbf{V}}$ $((v)),$ associated to a linear
connection $^{\ \mathcal{E}}\partial $ in $\mathcal{E},$ defined by a
1--form $\ ^{\mathcal{E}}\mathbf{\Gamma }$ with coefficients in a vector
space of dimension $k.$ Considering any bi--differential operators $\
_{r}Q:End_{\mathbf{V}}\otimes End_{\mathbf{V}}\rightarrow End_{\mathbf{V}}$
such that, for any $a,b$ $\in End_{\mathbf{V}},$ $\ _{1}Q(a,b)-$ $\
_{1}Q(b,a)=\theta ^{\alpha \beta }\ ^{\ \mathcal{E}}\partial _{\alpha }(a)\
^{\ \mathcal{E}}\partial _{\alpha }(b),$ we define a ''total'' star product
(quantization of nonholonomic $\mathcal{E}$) as
\begin{equation}
a\ast b=ab+\sum\limits_{r=1}^{\infty }\ _{r}Q(\ a,b)\ v^{r}.  \label{starp}
\end{equation}

The product (\ref{starp}) is used for constructing a formal Wick product
\begin{equation}
a\circ b\ (z)\doteqdot \exp \left( i\frac{v}{2}\ \mathbf{\Lambda }^{\alpha
\beta }\frac{\partial ^{2}}{\partial z^{\alpha }\partial \ ^{1}z^{\beta }}%
\right) a(z)b(\ ^{1}z)\mid _{z=\ ^{1}z},  \label{fpr}
\end{equation}%
for two elements $a$ and $b$ defined by series of type
\begin{equation}
a(v,u,z)=\sum\limits_{r\geq 0,|\{\alpha \}|\geq 0}\ a_{r,\{\alpha
\}}(u)z^{\{\alpha \}}\ v^{r},  \label{formser}
\end{equation}%
where by $\{\alpha \}$ we label a multi--index. \ In terms of N--elongated
derivatives, the product (\ref{fpr}) \ can be written similarly to (\ref%
{wickp}) for the Weyl d--algebra $\ ^{d}\mathcal{W}$. \ In (\ref{formser}), $%
z^{\alpha }$ are fiber coordinates of the tangent bundle $T\ \ ^{\mathcal{L}}%
\mathbf{K}^{2n}\simeq T\mathbf{V}^{2n}$ and $a_{r,\{\alpha \}}(u)=a_{r,\
^{1}\alpha \ ^{2}\alpha \ldots \ ^{l}\alpha }(u)$ can be represented as
sections of $End(\mathcal{E})\otimes S^{l}(T^{\ast }\mathbf{V}^{2n}).$

The formulas (\ref{starp})--(\ref{formser}) define a formal Wick algebra $%
\mathbf{W}_{u}$ associated with the tangent space $T_{u}\mathbf{V}^{2n},$
for $u\in \mathbf{V}^{2n}.$ The fibre product (\ref{fpr}) can be trivially
extended to the space of $\ \mathbf{W}$--valued N--adapted differential
forms $\mathcal{W}\otimes \Lambda $ by means of the usual exterior product
of the scalar forms $\Lambda ,$ where $\ \mathcal{W}$ denotes the sheaf of
smooth sections of $\mathbf{W.}$ There is a standard grading on $\Lambda $
denoted $\deg _{a}.$ We can introduce gradings $\deg _{v},\deg _{s},\deg
_{a} $ on $\ \mathcal{W}\otimes \Lambda $ defined on homogeneous elements $%
v,z^{\alpha }$ and $\mathbf{e}^{\alpha }$ (\ref{dder}) as follows: $\deg
_{v}(v)=1,$ $\deg _{s}(z^{\alpha })=1,$ $\deg _{a}(\mathbf{e}^{\alpha })=1,$
and all other gradings of the elements $v,z^{\alpha },\mathbf{e}^{\alpha }$
are set to zero. The product $\circ $ from (\ref{fpr}) on $\ \mathcal{W}%
\otimes \mathbf{\Lambda }$ is bigraded. This is written w.r.t the grading $%
Deg=2\deg _{v}+\deg _{s}$ and the grading $\deg _{a}.$ The filtration (\ref%
{filtr}) of the Weyl d--algebra gives also a natural filtration of the
nonholonomic Weyl d--algebra bundle $\ \mathbf{W}(End_{\mathbf{V}})$ whose
sections are the formal power series (\ref{formser}). The N--connection
structure on $T\mathbf{V}^{2n}$ determines also h-- and v--splitting with
filtrations in $h\mathbf{W}(End_{\mathbf{V}})$ \ and $\ v\mathbf{W}(End_{%
\mathbf{V}}).$ For holonomic configurations the algebraic and topological
properties are examined in \cite{chen}.

In what follows we shall use the d--algebra $\ ^{d}\mathbf{\Omega }^{\bullet
}(\ ^{\mathcal{L}}\mathbf{K}^{2n})$ of exterior d--forms on $\ ^{\mathcal{L}}%
\mathbf{K}^{2n}$ as an d--algebra embedded into $\ ^{d}\mathbf{\Omega }%
^{\bullet }(\mathbf{W}(End_{\mathbf{V}})).$ For any exterior d--form $\varpi
\in \ ^{d}\mathbf{\Omega }^{\bullet }(\ ^{\mathcal{L}}\mathbf{K}^{2n}),$ the
is a map sending it into the scalar matrix $\varpi I_{k}\in $ $\ ^{d}\mathbf{%
\Omega }^{\bullet }(\mathbf{W}(End_{\mathbf{V}})).$ This natural map, $i:\
^{d}\mathbf{\Omega }^{\bullet }(\ ^{\mathcal{L}}\mathbf{K}^{2n})\rightarrow $
$\ ^{d}\mathbf{\Omega }^{\bullet }(\mathbf{W}(End_{\mathbf{V}})),$ defines
the embedding of d--forms.

We consider a d--connection $\mathbf{D}$ on $T\mathbf{V}^{2n}$ which is
compatible to the symplectic d--form $\theta ^{\alpha \beta }$ (\ref{sform}%
), $\mathbf{D}\theta =0,$ and a connection $^{\ \mathcal{E}}\partial $ on $%
\mathcal{E}.$ There is a linear d--operator $\ ^{d}\mathcal{D}:\ ^{d}\mathbf{%
\Omega }^{\bullet }(\mathbf{W}(End_{\mathbf{V}}))\rightarrow \ ^{d}\mathbf{%
\Omega }^{\bullet +1}(\mathbf{W}(End_{\mathbf{V}})),$ i.e.%
\begin{equation}
\ ^{d}\mathcal{D}=\delta u^{\alpha }\left( \mathbf{e}_{\alpha }-\mathbf{%
\Gamma }_{\beta \gamma }^{\alpha }(u)z^{\gamma }\frac{\delta }{\partial
z^{\alpha }}\right) +[\ ^{\mathcal{E}}\mathbf{\Gamma ,\cdot }],
\label{dfoper}
\end{equation}%
where $\mathbf{\Gamma }_{\beta \gamma }^{\alpha }$ are coefficients of $%
\mathbf{D}$ and $\ ^{\mathcal{E}}\mathbf{\Gamma }$ is the connection form of
$^{\ \mathcal{E}}\partial .$ The curvature and torsion of such operators,
for instance, for $\mathbf{D=}\widehat{\mathbf{D}}\equiv \ _{\theta }%
\widehat{\mathbf{D}},$ see (\ref{cdcc}) is computed in \cite%
{vqgr2,vegpla,vlfedq}. In our case, there is an additional term defined by $%
\ ^{\mathcal{E}}\mathbf{\Gamma }$ but this does not modify substantially the
properties of $\ ^{d}\mathcal{D}$ determined by canonical Lagrange--Finsler
d--connections. We shall write $\ ^{d}\widehat{\mathcal{D}}$ if such a
connection is induced by $\ _{\theta }\widehat{\mathbf{D}}.$

\begin{definition}
A Fedosov--Finsler normal d--connection $\ _{r}^{d}\mathcal{D}$ \ is a
nilpotent N--adapted derivation of the graded d--algebra,
\begin{equation}
\ _{r}^{d}\mathcal{D}=\ ^{d}\mathcal{D}+v^{-1}\left[ \ ^{\theta }r,\cdot %
\right] ,  \label{ffdc}
\end{equation}%
for $\ ^{\theta }r=z^{\alpha }\theta _{\alpha \beta }(u)\delta u^{\beta }+r,$
where $r$ is an element in the set of d--forms $\ ^{d}\mathbf{\Omega }^{1}(%
\mathbf{W}(End_{\mathbf{V}}).$
\end{definition}

Any derivations of a d--connection $\mathbf{D}$ and a connection $^{\
\mathcal{E}}\partial $ can be absorbed into a d--form $r\in $ $\ ^{d}\mathbf{%
\Omega }^{1}(\mathbf{W}(End_{\mathbf{V}}).$ Two potentials $\ ^{\mathcal{E}}%
\mathbf{\Gamma }$ and $\ ^{\mathcal{E}}\widetilde{\mathbf{\Gamma }}$ result
in equivalent Fedosov--Finsler d--connections (\ref{ffdc}) if
\begin{equation}
\ ^{\mathcal{E}}\widetilde{\mathbf{\Gamma }}=\mathbf{B}^{-1}\circ \left( \ ^{%
\mathcal{E}}\mathbf{\Gamma \circ B+}v\ ^{d}\mathcal{D}\right) ,
\label{condeq}
\end{equation}%
where $\mathbf{B}$ belongs to the  affine
subspace $I_{k}\oplus Sec(\mathbf{W}^{1}(End_{\mathbf{V}}))$ in \newline $Sec(\mathbf{%
W}(End_{\mathbf{V}}))$ consisting of the sums $\mathbf{B=I}_{k}+\mathbf{B}%
_{1},$ for an arbitrary d--vector  $\mathbf{B}_{1}\in Sec(\mathbf{W}^{1}(End_{%
\mathbf{V}}))$ distinguished by N--connection. This results in equality
\begin{equation*}
\ _{r}^{d}\widetilde{\mathcal{D}}=\ _{r}^{d}\mathcal{D}+\left[ \mathbf{B}%
^{-1}\circ \ _{r}^{d}\mathcal{D}\mathbf{B,\cdot }\right].
\end{equation*}

For a trivial connection $^{\ \mathcal{E}}\partial ,$ Fedosov--Finsler
normal d--connections are completely determined by fundamental geometric
objects on $\ ^{\mathcal{L}}\mathbf{K}^{2n}.$ The normal d--connection $%
\widehat{\mathbf{D}}\mathbf{=\{}\widehat{\mathbf{\Gamma }}_{\alpha \beta
}^{\gamma }\mathbf{\}}$ (\ref{cdcc}) can be extended to the d--operator
\begin{equation}
\widehat{\mathbf{D}}\left( a\otimes \lambda \right) \doteqdot \left( \mathbf{%
e}_{\alpha }(a)-u^{\beta }\ \widehat{\mathbf{\Gamma }}_{\alpha \beta
}^{\gamma }\ ^{z}\mathbf{e}_{\alpha }(a)\right) \otimes (\mathbf{e}^{\alpha
}\wedge \lambda )+a\otimes d\lambda ,  \label{cdcop}
\end{equation}%
on $\mathcal{W}\otimes \Lambda ,$ where $^{z}\mathbf{e}_{\alpha }$ is $%
\mathbf{e}_{\alpha }$ (\ref{dder}) redefined in $z$--variables.

\begin{definition}
The Fedosov d--operators $\delta $ and $\delta ^{-1}$ on $\mathcal{W}\otimes
\mathbf{\Lambda }$ are
\begin{equation}
\delta (a)=\ \mathbf{e}^{\alpha }\wedge \mathbf{\ }^{z}\mathbf{e}_{\alpha
}(a),\ \mbox{and\ }\delta ^{-1}(a)=\left\{
\begin{array}{c}
\frac{i}{p+q}z^{\alpha }\ \mathbf{e}_{\alpha }(a),\mbox{ if }p+q>0, \\
{\qquad 0},\mbox{ if }p=q=0,%
\end{array}%
\right.  \label{feddop}
\end{equation}%
where any $a\in \mathcal{W}\otimes \mathbf{\Lambda }$ is homogeneous w.r.t.
the grading $\deg _{s}$ and $\deg _{a}$ with $\deg _{s}(a)=p$ and $\deg
_{a}(a)=q.$
\end{definition}

The d--operators (\ref{feddop}) define a N--adapted map,
\begin{equation}
a=(\delta \ \delta ^{-1}+\delta ^{-1}\ \delta +\sigma )(a),  \label{doper1}
\end{equation}
where $a\longmapsto \sigma (a)$ is the projection on the $(\deg _{s},\deg
_{a})$--bihomogeneous part of $a$ of degree zero, $\deg _{s}(a)=\deg
_{a}(a)=0;$ $\delta $ is also a $\deg _{a}$--graded derivation of the
d--algebra $\left( \mathcal{W}\otimes \mathbf{\Lambda ,\circ }\right) .$

The Fedosov--Finsler normal d--connection $\ _{r}^{d}\mathcal{D}$ (\ref{ffdc}%
) can written using the d--operator $\delta ,$ treated as the Koszul
derivation of the d--algebra \newline $\ ^{d}\mathbf{\Omega }^{\bullet }(\mathbf{W}%
(End_{\mathbf{V}})).$ The corresponding Koszul N--adapted differential
$$\delta ^{-1}(a)=z^{\alpha }\ \iota \mathbf{e}_{\alpha
}\int\limits_{0}^{1}a(u,v,\tau z,\tau \delta u)\frac{d\tau }{\tau },$$ where
the symbol $\iota \mathbf{e}_{\alpha }$ is used for the contraction of an
exterior d--form with d--vector $\mathbf{e}_{\alpha },$ for $\delta ^{-1}$
extended to $Sec(\mathbf{W}(End_{\mathbf{V}}))$ by zero. Using the
d--operator (\ref{doper1}), we can verify that such a $\delta ^{-1}$ is
really a homotopy d--operator for $\delta .$ Really, we have that for any $%
a\in \ ^{d}\mathbf{\Omega }(\mathbf{W}(End_{\mathbf{V}}))$ we can express
\begin{equation}
a=\sigma (a)+\delta \ \delta ^{-1}a+\delta ^{-1}\ \delta a,  \label{aux11}
\end{equation}%
where $\sigma $ is a N--adapted natural projection
\begin{equation}
\sigma (a)=a_{\mid z=0,\delta u=0},  \label{sigmaproj}
\end{equation}%
when $a\in \ ^{d}\mathbf{\Omega }^{\bullet }(\mathbf{W}(End_{\mathbf{V}})),$
i.e. it is defined a map of $\ ^{d}\mathbf{\Omega }^{\bullet }(\mathbf{W}%
(End_{\mathbf{V}}))$ onto the d--algebra of endomorphisms $End_{\mathbf{V}%
}(v)$ preserving $h$-- and $v$--splitting.

\begin{theorem}
\label{thgf}Any (pseudo) Lagrange--Finsler/ Riemanian metric $\mathbf{g}$ (%
\ref{sasakmetr}) defines a flat normal Fedosov d--connec\-ti\-on
\begin{equation*}
\ \widehat{\mathcal{D}}\doteqdot -\ \delta +\widehat{\mathbf{D}}-\frac{i}{v}%
ad_{Wick}(r)
\end{equation*}%
satisfying the condition $\widehat{\mathcal{D}}^{2}=0,$ where the unique
element $r\in $ $\mathcal{W}\otimes \mathbf{\Lambda ,}$ $\deg _{a}(r)=1,$ $%
\delta ^{-1}r=0,$ solves the equation
\begin{equation*}
\ \delta r=\widehat{\mathcal{T}}\ +\widehat{\mathcal{R}}+\widehat{\mathbf{D}}%
r-\frac{i}{v}r\circ r
\end{equation*}%
and this element can be computed recursively with respect to the total
degree $Deg$ as follows:%
\begin{eqnarray*}
r^{(0)} &=&r^{(1)}=0,r^{(2)}=\delta ^{-1}\widehat{\mathcal{T}},r^{(3)}=\
\delta ^{-1}\left( \widehat{\mathcal{R}}+\widehat{\mathbf{D}}r^{(2)}-\frac{i%
}{v}r^{(2)}\circ r^{(2)}\right) , \\
r^{(k+3)} &=&\ \ \delta ^{-1}\left( \widehat{\mathbf{D}}r^{(k+2)}-\frac{i}{v}%
\sum\limits_{l=0}^{k}r^{(l+2)}\circ r^{(l+2)}\right) ,k\geq 1,
\end{eqnarray*}%
where by $a^{(k)}$ we denote the $Deg$--homogeneous component of degree $k$
of an element $a\in $ $\ \mathcal{W}\otimes \mathbf{\Lambda }.$
\end{theorem}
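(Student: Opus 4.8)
The plan is to reproduce the classical Fedosov argument, but carried out entirely inside the N--adapted Weyl d--algebra $\mathcal{W}\otimes\mathbf{\Lambda}$ and keeping careful track of the torsion of $\widehat{\mathbf{D}}$. First I would record the three structural identities on which everything rests: $\delta^{2}=0$; that the base d--connection squares to the inner derivation determined by its curvature, $\widehat{\mathbf{D}}^{2}=-\frac{i}{v}\,ad_{Wick}(\widehat{\mathcal{R}})$, with $\widehat{\mathcal{R}}$ the curvature lifted to a $\deg_{s}=2$ element of $\mathcal{W}\otimes\mathbf{\Lambda}$; and the anticommutator $\delta\widehat{\mathbf{D}}+\widehat{\mathbf{D}}\delta=-\frac{i}{v}\,ad_{Wick}(\widehat{\mathcal{T}})$ (all signs fixed by the chosen lift of curvature and torsion). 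This last identity is the place where the nonholonomic structure genuinely enters: since $\delta$ is the inner derivation attached to the linear part $z^{\alpha}\theta_{\alpha\beta}\delta u^{\beta}$ of $\ ^{\theta}r$ and $\widehat{\mathbf{D}}\theta=0$, the anticommutator is governed by $\widehat{\mathbf{D}}(z^{\alpha}\theta_{\alpha\beta}\delta u^{\beta})$, whose nonvanishing part is precisely the effective torsion d--form $\widehat{\mathcal{T}}$ generated by the anholonomy coefficients $W_{\alpha\beta}^{\gamma}$. In the integrable (torsion--free) case this anticommutator vanishes and one recovers the standard Fedosov identities.

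Granting these, I would compute the curvature of $\widehat{\mathcal{D}}=-\delta+\widehat{\mathbf{D}}-\frac{i}{v}\,ad_{Wick}(r)$. Because $-\frac{i}{v}\,ad_{Wick}(\cdot)$ is a $\deg_{a}$--graded derivation of $(\mathcal{W}\otimes\mathbf{\Lambda},\circ)$, expanding $\widehat{\mathcal{D}}^{2}$ and applying the Jacobi identity collapses everything into a single inner derivation, and with the lifts normalized so that the inhomogeneous terms appear as $\widehat{\mathcal{T}}+\widehat{\mathcal{R}}$ one obtains
\[
\widehat{\mathcal{D}}^{2}=-\frac{i}{v}\,ad_{Wick}\Big(\widehat{\mathcal{T}}+\widehat{\mathcal{R}}-\delta r+\widehat{\mathbf{D}}r-\frac{i}{v}\,r\circ r\Big).
\]
Since an element of $\mathcal{W}\otimes\mathbf{\Lambda}$ lies in $\ker ad_{Wick}$ exactly when it is central ($z$--independent, $\deg_{s}=0$), flatness $\widehat{\mathcal{D}}^{2}=0$ is equivalent to the argument being central. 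Imposing the stronger normalization that its $\deg_{s}>0$ part vanish gives precisely the Fedosov equation $\delta r=\widehat{\mathcal{T}}+\widehat{\mathcal{R}}+\widehat{\mathbf{D}}r-\frac{i}{v}\,r\circ r$ of the statement; a possible leftover central term is a scalar curvature form that does not affect $ad_{Wick}$ and so does not obstruct flatness.

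To solve the Fedosov equation I would use the Hodge--type decomposition $a=\sigma(a)+\delta\,\delta^{-1}a+\delta^{-1}\,\delta a$ of (\ref{aux11}). Inserting $\deg_{a}(r)=1$ (so $\sigma(r)=0$) and the gauge condition $\delta^{-1}r=0$ turns the equation into the fixed--point form
\[
r=\delta^{-1}\Big(\widehat{\mathcal{T}}+\widehat{\mathcal{R}}+\widehat{\mathbf{D}}r-\frac{i}{v}\,r\circ r\Big).
\]
The decisive quantitative fact is that $\delta^{-1}$ strictly raises the total degree $Deg=2\deg_{v}+\deg_{s}$ by one, while $\widehat{\mathbf{D}}$ preserves $Deg$ and $\frac{i}{v}(\cdot\circ\cdot)$ is nondecreasing in $Deg$; since $\widehat{\mathcal{T}}$ and $\widehat{\mathcal{R}}$ have the minimal degrees $1$ and $2$, the map is a contraction for the $Deg$--filtration (\ref{filtr}). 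In the complete filtered d--algebra this produces a unique solution, computed order by order as $r^{(2)}=\delta^{-1}\widehat{\mathcal{T}}$, $r^{(3)}=\delta^{-1}(\widehat{\mathcal{R}}+\widehat{\mathbf{D}}r^{(2)}-\frac{i}{v}r^{(2)}\circ r^{(2)})$, and in general by the recursion of the statement.

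The main obstacle I anticipate is the self--consistency step: the iteration only guarantees $r=\delta^{-1}(\mathrm{RHS})$, and one must still check that this $r$ solves the original $\delta r=\mathrm{RHS}$, i.e. that $\mathrm{RHS}$ has no $\sigma$--component and is $\delta$--exact. I would establish this by induction on $Deg$, applying $\delta$ to the recursion and using $\delta^{2}=0$ together with the Bianchi identities for $\widehat{\mathbf{D}}$ and the compatibility $\widehat{\mathbf{D}}\theta=0$ to kill the unwanted $\sigma$-- and $\delta^{-1}\delta$--parts at each order. Throughout, one verifies that $\delta$, $\delta^{-1}$, $\widehat{\mathbf{D}}$ and the Wick product built from $\mathbf{\Lambda}^{\alpha\beta}$ all respect the $h$--$v$ splitting, so every homogeneous component $r^{(k)}$ is automatically N--adapted and the resulting $\widehat{\mathcal{D}}$ preserves the distinguished structure.
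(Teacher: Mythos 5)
Your proposal is correct and takes essentially the same route as the paper: the paper's proof simply defers to the standard Fedosov/Karabegov--Schlichenmaier recursion, citing that $\widehat{\mathcal{D}}^{2}=0$ can be verified ``in N--adapted form'' with the torsion $\widehat{\mathcal{T}}$ and curvature $\widehat{\mathcal{R}}$ of the normal d--connection, which is precisely the argument you reconstruct (structural identities, collapse of $\widehat{\mathcal{D}}^{2}$ into a single $ad_{Wick}$ term, the $\delta^{-1}$ fixed--point iteration in the $Deg$--filtration, and the Bianchi--identity consistency check). Your write--up actually supplies the details the paper leaves to the references, so no gap is introduced.
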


\begin{proof}
Similarly to constructions provided in \cite{fed1,fed2,karabeg1}, using
Fedosov's d--operators (\ref{feddop})  \cite{vqgr2,vegpla,vlfedq},
we verify the conditions of the theorem. Straightforward verification of
the property $\widehat{\mathcal{D}}^{2}=0,$ with formal series of type (\ref%
{formser}), for $r$, and can be performed in N--adapted form for $\widehat{%
\mathbf{D}},$ with torsion $\widehat{\mathcal{T}}$, (\ref{cart1}), and
curvature, $\widehat{\mathcal{R}}$ (\ref{cart2}). $\square $
\end{proof}

\vskip5pt

The conditions of this Theorem can be redefined for a nonholonomic vector
bundle $\mathcal{E}$ of rank $k$ over $\ ^{\mathcal{L}}\mathbf{K}^{2n}$. The
N--adapted constructions are similar  for $h$-- and $v$--components  provided in
section 5.3 of \cite{chen} and Theorem 2 and Remark 2 in \cite{fed2}. We
summarize four necessary statements in

\begin{corollary}
\label{corolrem}--\textbf{Remarks: }For a normal d--connection $\widehat{%
\mathbf{D}}\equiv \ _{\theta }\widehat{\mathbf{D}}$ (\ref{cdcc}) on $\ ^{%
\mathcal{L}}\mathbf{K}^{2n},\ $ a connection $^{\ \mathcal{E}}\partial $ in
a nonholonomic vector bundle $\mathcal{E}$ $\ $of rank $k$ over $\ ^{%
\mathcal{L}}\mathbf{K}^{2n}$ and $\ ^{v}\Omega $ \ is \ a series of closed
distinguished (by N--connections) two--forms in $v\Omega ^{2}(\ ^{\mathcal{L}%
}\mathbf{K}^{2n})$. We can derive from Theorem \ref{thgf} the following

\begin{enumerate}
\item There is a nilpotent N--adapted derivation\newline
$\ _{r}^{d}\widehat{\mathcal{D}}=\ ^{d}\widehat{\mathcal{D}}+v^{-1}\left[ \
r-\theta _{\alpha \beta }(u)z^{\beta }\mathbf{e}^{\alpha },\cdot \right] $ \
(determined respectively as/by (\ref{ffdc}), (\ref{sform}) and (\ref{ddif})
) when the distinguished Fedosov--Weyl curvature%
\begin{eqnarray}
\ ^{W}\widehat{\mathcal{C}} &=&v(\widehat{\mathcal{R}}+\ ^{V}\mathcal{R})+2\
^{d}\mathcal{D}\ \left( \ ^{\mathcal{E}}\mathbf{\Gamma }\right) +v^{-1}[\ ^{%
\mathcal{E}}\mathbf{\Gamma ,}\ ^{\mathcal{E}}\mathbf{\Gamma }]  \label{fwdc}
\\
&=&-\theta +\ ^{v}\Omega ,  \notag
\end{eqnarray}%
where $\ ^{V}\mathcal{R}$ \ is the d--curvature of $\ ^{\mathcal{E}}\mathbf{%
\Gamma ,}$ $\ ^{d}\mathcal{D}$ \ is given by (\ref{dfoper}) and the element $%
r\in \ ^{d}\mathbf{\Omega }^{1}(\mathbf{W}^{2}(End_{\mathbf{V}}))$ satisfies
the condition $\delta ^{-1}r=0.$

\item A star product (\ref{starp}) in $End_{\mathbf{V}}((v))$ is induced via
a d--vector space isomorphism
\begin{equation}
\chi :End_{\mathbf{V}}((v))\simeq \ ^{\widehat{\mathbf{D}}}Sec(\mathbf{W}%
(End_{\mathbf{V}}))  \label{isom}
\end{equation}%
from $End_{V}((v))$ to the d--algebra $\ ^{\widehat{\mathbf{D}}}Sec(\mathbf{W%
}(End_{\mathbf{V}}))$ of \ flat sections of $\mathbf{W}(End_{\mathbf{V}})$
with respect to the Fedosov--Finsler normal d--connection $\ _{r}^{d}%
\widehat{\mathcal{D}}$ \ (\ref{ffdc}).

\item A d--connection $\ _{r}^{d}\widehat{\mathcal{D}}$  is equivalent in
the sence (\ref{condeq}) to any $\ _{r}^{d}\mathcal{D}$  when respective
distinguished Fedosov--Weyl curvatures (\ref{fwdc}), $\ ^{W}\widehat{%
\mathcal{C}}$ \ and $\ ^{W}\mathcal{C}$, represent the same de Rham
cohomology class in $H_{DR}^{2}(\ ^{\mathcal{L}}\mathbf{K}^{2n})[[v]].$

\item For a trivial d--vector bundle of rank $k=1,$ the star product (\ref%
{starp}) and isomorphism $\chi $ (\ref{isom}) result in the star product (%
\ref{starpb}) when the cohomology class of $\ ^{W}\widehat{\mathcal{C}}$ \ (%
\ref{fwdc}) characterizes the star product in $C(\ ^{\mathcal{L}}\mathbf{K}%
^{2n})((v)).$ Here we emphasize that there are two classes of equivalence,
for the $h$-- and $v$--components.
\end{enumerate}
\end{corollary}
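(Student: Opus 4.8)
The plan is to derive all four items as consequences of Theorem \ref{thgf}, re--running Fedosov's iterative scheme in the N--adapted, $End(\mathcal{E})$--valued setting rather than proving anything from scratch. The only genuinely new data relative to Theorem \ref{thgf} are the connection $^{\ \mathcal{E}}\partial$ on $\mathcal{E}$ and the correction series $\ ^{v}\Omega$, so throughout I would track how the extra term $[\ ^{\mathcal{E}}\mathbf{\Gamma},\cdot]$ in the operator $\ ^{d}\mathcal{D}$ of (\ref{dfoper}) and its curvature $\ ^{V}\mathcal{R}$ enter the construction, always checking that the $h$--$v$ splitting (\ref{whitney}) is preserved at each stage.

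For item 1 I would first record that the Fedosov d--operators $\delta,\delta^{-1}$ of (\ref{feddop}) still obey the homotopy identities (\ref{doper1}) and (\ref{aux11}) once the bundle is adjoined, since $\delta^{-1}$ acts only on the $z$-- and $\delta u$--degrees and commutes with $End(\mathcal{E})$--valued coefficients. Next I would compute the Fedosov--Weyl curvature of $\ _{r}^{d}\widehat{\mathcal{D}}$ exactly as in Theorem \ref{thgf}: the square of a derivation of the form (\ref{ffdc}) equals $v^{-1}[\ ^{W}\widehat{\mathcal{C}},\cdot]$, and the bundle contributions $v\ ^{V}\mathcal{R}$, $2\ ^{d}\mathcal{D}(\ ^{\mathcal{E}}\mathbf{\Gamma})$ and $v^{-1}[\ ^{\mathcal{E}}\mathbf{\Gamma},\ ^{\mathcal{E}}\mathbf{\Gamma}]$ enter $\ ^{W}\widehat{\mathcal{C}}$ additively, giving (\ref{fwdc}). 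Imposing $\ ^{W}\widehat{\mathcal{C}}=-\theta+\ ^{v}\Omega$ with $\ ^{v}\Omega$ central and closed reduces flatness to the fixed--point equation $\delta r=\widehat{\mathcal{T}}+\widehat{\mathcal{R}}+\ ^{V}\mathcal{R}+\widehat{\mathbf{D}}r-(i/v)r\circ r$, solved uniquely under $\delta^{-1}r=0$ by the same $Deg$--graded recursion as in Theorem \ref{thgf}, with $\ ^{V}\mathcal{R}$ merely modifying the seeds $r^{(2)},r^{(3)}$. Nilpotency $(\ _{r}^{d}\widehat{\mathcal{D}})^2=0$ is then immediate because $\ ^{W}\widehat{\mathcal{C}}$ is central.

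Items 2--4 are then formal. For item 2 I would invoke the standard Fedosov isomorphism: the projection $\sigma$ of (\ref{sigmaproj}), restricted to the flat sections $\ ^{\widehat{\mathbf{D}}}Sec(\mathbf{W}(End_{\mathbf{V}}))$, is bijective onto $End_{\mathbf{V}}((v))$, its inverse $\chi$ of (\ref{isom}) being reconstructed degree--by--degree via $\delta^{-1}$ using flatness; transporting the Wick product (\ref{fpr}) through $\chi$ yields the star product (\ref{starp}). Item 3 is the Fedosov classification statement: a gauge change (\ref{condeq}) by $\mathbf{B}\in I_k\oplus Sec(\mathbf{W}^1(End_{\mathbf{V}}))$ conjugates $\ _{r}^{d}\widehat{\mathcal{D}}$ into another Fedosov--Finsler connection $\ _{r}^{d}\mathcal{D}$, and two such are gauge--equivalent precisely when their Fedosov--Weyl curvatures are cohomologous in $H_{DR}^{2}(\ ^{\mathcal{L}}\mathbf{K}^{2n})[[v]]$; N--adaptation makes this class split into independent $h$-- and $v$--components. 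Item 4 is the specialization $k=1$ with $^{\ \mathcal{E}}\partial$ trivial, where $\chi$ and (\ref{starp}) collapse to the scalar construction (\ref{wickp}) and the star product (\ref{starpb}) on $C^\infty(\ ^{\mathcal{L}}\mathbf{K}^{2n})$, the characteristic class being read off the two components of $\ ^{W}\widehat{\mathcal{C}}$.

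The hard part will be item 1, namely verifying that adjoining the bundle connection spoils neither the solvability of the recursion nor the N--adaptation. Concretely I would check that $\ ^{d}\mathcal{D}$ in (\ref{dfoper}) remains a $\deg_a$--graded derivation of $(\mathcal{W}\otimes\mathbf{\Lambda},\circ)$ after inserting $[\ ^{\mathcal{E}}\mathbf{\Gamma},\cdot]$, that the Bianchi--type identities needed to collapse $(\ _{r}^{d}\widehat{\mathcal{D}})^2$ to a central term survive the nonvanishing mixed blocks of the normal d--connection torsion $\widehat{\mathcal{T}}$ (recall $\widehat{T}_{jk}^{i}=\widehat{T}_{bc}^{a}=0$ by Theorem \ref{th3b}, but the off--block components need not vanish), and that every recursion step respects the Whitney splitting (\ref{whitney}). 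Once this bookkeeping is settled, the remaining items are N--adapted transcriptions of the holonomic arguments of \cite{fed1,fed2,chen}.
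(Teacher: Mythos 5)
Your proposal is correct and takes essentially the same route as the paper: the paper gives no detailed proof of this corollary, asserting only that the conditions of Theorem \ref{thgf} can be redefined for the nonholonomic bundle $\mathcal{E}$ by dubbing for $h$-- and $v$--components the holonomic constructions of section 5.3 of \cite{chen} and Theorem 2 and Remark 2 of \cite{fed2}, which is precisely the re--run of the Fedosov recursion with the bundle connection $\ ^{\mathcal{E}}\partial$ adjoined that you carry out (curvature centrality giving nilpotency, $\sigma$/$\chi$ bijection onto flat sections, gauge classification by the class of $\ ^{W}\widehat{\mathcal{C}}$, and the $k=1$ specialization). Your sketch is in fact more explicit than the paper's, and the bookkeeping points you flag (the derivation property of $\ ^{d}\mathcal{D}$ after inserting $[\ ^{\mathcal{E}}\mathbf{\Gamma},\cdot]$, survival of the Bianchi--type identities despite the mixed torsion blocks, and preservation of the splitting (\ref{whitney})) are exactly what the cited holonomic references settle and what N--adaptation must preserve.
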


For simplicity, we provided N--adapted constructions only for the base
nonolonomic manifold $\ ^{\mathcal{L}}\mathbf{K}^{2n}$ of a vector bundle $%
\mathcal{E}$. In general, a N--connecti\-on structure can be considered for
the tangent space $T\mathcal{E}$ (we omit such considerations in this work).

\section{Algebraic Index Theorem for Lagrange and \newline
Einstein--Finsler Spaces}

\label{sasindth} Various versions and modifications of the algebraic index
theorem generalize the Atiyah--Singer theorem \cite{atiyah} from the case of
a cotangent bundle to arbitrary symplectic and Poisson manifolds \cite%
{bressler,calaque,chen,nest}. Generalizations were also considered for
nonholonomic manifolds and Lagrange-Finsler and Einstein gerbes \cite%
{vgonz}. In this section, we apply  N-adapted techniques to provide
a local index theorem for almost K\"{a}hler models of Lagrange-Finser and
Einstein spaces.

\subsection{N--adapted trace density maps}

We use the Feigin--Felder--Shoikhet (FFS) method \cite{feigin} in order to
construct a natural, in our case, N--adapted density map. For vector bundles
on symplectic manifolds such maps were considered for a proof of the local
algebraic index theorem \cite{chen}. In our approach, the constructions
should be conventionally dubbed for $h$-- and $v$--components of geometric
objects distinguished by N--connection structure.

\begin{definition}
A N--adapted (i.e. preserving a Whitney sum $T\ ^{\mathcal{L}}\mathbf{K}%
^{2n}=h\ ^{\mathcal{L}}\mathbf{K}^{2n}\oplus v\ ^{\mathcal{L}}\mathbf{K}%
^{2n} $ of type (\ref{whitney})) $\mathbb{C}$--linear nonholonomic map
\begin{equation*}
\ _{N}^{d}tr:\ End_{\mathbf{V}}((v))\rightarrow H^{2n^{\prime }}(\mathbf{K}%
^{2n})((v))
\end{equation*}%
vanishing on commutators $\ _{N}^{d}tr\left( a\ast b-b\ast a\right) =0,$
defined by star d--operator (\ref{starp}) for any $a,b\in \ End_{\mathbf{V}%
}((v)),$ is called a trace density d--map.
\end{definition}

In the above formula for $\ _{N}^{d}tr,$ we write $H^{2n^{\prime }}$ with a
prime index in order to avoid confusions with the dimension $n$ in $\mathbf{K%
}^{2n}.$

Using the nonholonomic cocycle $\ ^{k}\Theta _{n+n}$ (\ref{coc2}) of
d--group $\ ^{d}\mathfrak{g=gl}_{k}(\ ^{d}\mathcal{W})=\ \mathfrak{gl}_{k}(\
h\mathcal{W})\oplus \ \mathfrak{gl}_{k}(\ v\mathcal{W})$ relative to
N--adapted ''shifts'' $\ ^{d}\mathfrak{\rho =gl}_{k}\oplus \mathfrak{sp}%
_{n+n},$ we can construct a nonholonomic map $\ ^{\mathcal{D}}\digamma :\
End_{\mathbf{V}}((v))\rightarrow $ $\ ^{d}\mathbf{\Omega }^{2n^{\prime }}(%
\mathbf{K}^{2n})$ following the formula%
\begin{equation}
\ ^{\mathcal{D}}\digamma (a)=v^{-n}\ ^{k}\Theta _{n+n}\left( \ ^{\theta }r,%
\mathbf{...,}\ ^{\theta }r,\chi (a)\right) ,  \label{ndistm}
\end{equation}%
where $\ ^{\theta }r\in \ ^{d}\mathbf{\Omega }^{1}(\mathbf{W}(End_{\mathbf{V}%
}))$ is the distinguished 1--form considered in the \ formula for the
Fedosov--Finsler d--connection $\ _{r}^{d}\mathcal{D}$ (\ref{ffdc}) and $%
\chi (a)$ is the isomorphism (\ref{isom}).

Let us consider respectively two nonholonomic vector bundles $\ ^{1}\mathcal{%
E}$ and $\ ^{2}\mathcal{E}$ of rank $k$ over $\ ^{\mathcal{L}}\mathbf{K}%
^{2n} $ endowed with Fedosov--Finsler d--connections $\ _{r}^{d}\widehat{%
\mathcal{D}}_{1}$ and $\ _{r}^{d}\widehat{\mathcal{D}}_{2}$ \ on $\mathbf{W}%
(End_{\ ^{1}\mathbf{V}})$ and $\mathbf{W}(End_{\ ^{2}\mathbf{V}});$ elements
$\ ^{1}a\in \ End_{\ ^{1}\mathbf{V}}$ and a pair of N--adapted endomorphisms
$a,b\in \ End_{\mathbf{V}}((v))$ corresponding to $\mathcal{E} $ over $\ ^{%
\mathcal{L}}\mathbf{K}^{2n};$ two equivalent d--operators $\ ^{\mathcal{E}}%
\widetilde{\mathbf{\Gamma }}$ and $\ ^{\mathcal{E}}\mathbf{\Gamma }$ in the
sense of (\ref{condeq}) with $\mathbf{B}\in I_{k}\oplus Sec(\mathbf{W}%
^{1}(End_{\mathbf{V}}))$ and corresponding equivalent Fedosov--Finsler
d--operators $\ _{r}^{d}\mathcal{D}$ and $\ _{r}^{d}\widetilde{\mathcal{D}};$
a N--adapted projection $\sigma $ (\ref{sigmaproj}) and isomorphism $\chi $ (%
\ref{isom}); the $r$--th component $\ ^{d}A_{r}\in \left( (S^{r}\ ^{d}%
\mathfrak{\rho })^{\ast }\right) ^{\ ^{d}\mathfrak{\rho }}$; the identity
endomorphism $I_{k}\in End_{\mathbf{V}};$ we also consider curvature $\
^{d}C(\ ^{1}\zeta ,\ ^{2}\zeta )$ (\ref{proj1}) and projection (\ref{proj2}%
). Applying respectively the N--adapted constructions for $\ ^{k}\Theta
_{n+n}$ in section \ref{ss31} (see formulas (\ref{ffshf}) -- (\ref{coc2})
and Theorem \ref{th1a}) and dubbing for $h$--$v$--components the respective
computations from \cite{feigin} we prove

\begin{theorem}
\label{th4}The nonholonomic map $\ ^{\mathcal{D}}\digamma $ (\ref{ndistm})
is characterized by properties that $\ ^{\mathcal{D}_{1}\oplus \mathcal{D}%
_{2}}\digamma (\ ^{1}a\oplus 0)=\ ^{\mathcal{D}_{1}}\digamma (\ ^{1}a)$ and
three classes of N--adapted forms $\ ^{\mathcal{D}}\digamma (a\ast b-b\ast
a),\ ^{\widetilde{\mathcal{D}}}\digamma \left( \sigma \left( \mathbf{B}%
^{-1}\circ \chi (\ ^{1}a)\circ \mathbf{B}\right) \right) -\ ^{\mathcal{D}%
}\digamma (\ ^{1}a)$ and $\ ^{\mathcal{D}}\digamma \left( I_{k}\right)
-v^{n^{\prime }}$ $^{d}A_{n^{\prime }}\left( \ ^{d}C(\ ^{\theta }r,\ \
^{\theta }r),...,\ ^{d}C(\ ^{\theta }r,\ \ ^{\theta }r)\right) $ are of type
$\delta \ ^{d}\mathbf{\Omega }^{2n^{\prime }-1}(\mathbf{K}^{2n})((v)),$ \
for $0$ denoting the trivial endomorphysm of $\ ^{2}\mathcal{E}$ \ and $%
\delta $ being the de Rham differential computed in N--adapted form.
\end{theorem}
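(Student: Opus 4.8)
The plan is to adapt, sector by sector across the $h$/$v$--splitting, the Feigin--Felder--Shoikhet argument of \cite{feigin} together with its vector--bundle refinement \cite{chen}, taking as the central algebraic input the Hochschild cocycle property of $\ ^{k}\Theta_{n+n}$ (\ref{coc2}) recorded in Theorem \ref{th1a}. Every ingredient of $\ ^{\mathcal{D}}\digamma$ in (\ref{ndistm}) --- the cocycle $\ ^{k}\Theta_{n+n}$, the Fedosov--Finsler $1$--form $\ ^{\theta}r$ of (\ref{ffdc}), and the isomorphism $\chi$ (\ref{isom}) --- respects the Whitney decomposition (\ref{whitney}), so each identity is established on the $h$-- and $v$--pieces and then added.

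First I would dispose of the direct--sum relation $\ ^{\mathcal{D}_1\oplus\mathcal{D}_2}\digamma(\ ^1a\oplus 0)=\ ^{\mathcal{D}_1}\digamma(\ ^1a)$. This is pure naturality: in the chain map (\ref{map2}) the factor $tr(A_0A_{s(1)}\cdots A_{s(r)})$ and the Wick product (\ref{fpr}) are computed blockwise over $End_{\ ^1\mathbf{V}}\oplus End_{\ ^2\mathbf{V}}$, so inserting $0$ into the $\ ^2\mathcal{E}$--slot kills every contribution from the second bundle and leaves exactly the density built from $\ _r^d\widehat{\mathcal{D}}_1$.

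The substance is the $\delta$--exactness (with $\delta$ now the N--adapted de Rham differential) of the three listed forms. For the commutator $\ ^{\mathcal{D}}\digamma(a\ast b-b\ast a)$ I would use that $\chi$ is an algebra isomorphism onto flat sections (item 2 of Corollary \ref{corolrem}), whence $\chi(a\ast b)=\chi(a)\circ\chi(b)$; the vanishing Hochschild coboundary of $\tau_{n+n}$ (\ref{ffshf}), after contraction with $2n$ copies of the flat form $\ ^{\theta}r$ and use of the flatness of the Fedosov--Finsler connection (Theorem \ref{thgf}), turns into a Stokes identity on the simplex $\bigtriangleup_{2n}$ whose boundary terms telescope into the de Rham differential of a $(2n'-1)$--form. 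For the equivalence term I would invoke the gauge relation (\ref{condeq}): conjugation by $\mathbf{B}\in I_k\oplus Sec(\mathbf{W}^1(End_{\mathbf{V}}))$ intertwines $\ _r^d\mathcal{D}$ and $\ _r^d\widetilde{\mathcal{D}}$, so the two densities differ by $\ ^d\mathcal{D}$ (\ref{dfoper}) applied to an explicit $\mathbf{W}$--valued form; feeding this through $\sigma$ (\ref{sigmaproj}) and the homotopy decomposition (\ref{aux11}) yields a de Rham exact remainder. The identity term is then precisely Theorem \ref{th1a}: evaluating $\ ^{k}\Theta_{n+n}$ on $I_k$ reproduces, modulo an exact form, the Chern--Weil representative (\ref{cwnh}) of the class $\ ^dA_{n'}$ assembled from the projection curvature (\ref{proj1})--(\ref{proj2}), which is the stated $v^{n'}\ ^dA_{n'}(\ ^dC(\ ^{\theta}r,\ ^{\theta}r),\dots,\ ^dC(\ ^{\theta}r,\ ^{\theta}r))$.

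The main obstacle I anticipate is carrying the $h$/$v$--bookkeeping consistently through the telescoping step, since the Wick product (\ref{wickp}) carries different normalizations ($iv/4$ on $b^{ij}$ versus $iv/2$ on $b^{ac}$) on the two sectors; one must check that the cancellations producing the $\delta$--exact forms survive this asymmetry. This is exactly where the N--adapted cocycle property of Theorem \ref{th1a} is decisive, so the task reduces to verifying that the boundary contributions in each sector separately assemble into an N--adapted exact $(2n'-1)$--form, after which the $h$-- and $v$--results recombine into the claimed identities.
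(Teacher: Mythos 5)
Your proposal takes essentially the same route as the paper: the paper's own proof consists precisely of invoking the N--adapted cocycle constructions of section \ref{ss31} (formulas (\ref{ffshf})--(\ref{coc2}) and Theorem \ref{th1a}) and then ``dubbing for $h$--$v$--components the respective computations'' of \cite{feigin} (with the vector--bundle refinement of \cite{chen}), which is exactly your sector--by--sector adaptation across the splitting (\ref{whitney}). Your write--up is in fact more explicit than the paper's proof --- the blockwise naturality argument for the direct sum, the gauge relation (\ref{condeq}) with the homotopy decomposition (\ref{aux11}) for the equivalence term, and the evaluation on $I_{k}$ via Theorem \ref{th1a} for the identity term --- but these are exactly the FFS steps the paper delegates to the cited references, so the two arguments coincide in substance.
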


From this theorem, we get

\begin{corollary}
We define a N--adapted trace density map
\begin{equation}
\ _{N}^{d}tr(a)=\left[ \ ^{\mathcal{D}}\digamma \left( a\right) \right] :\
End_{\mathbf{V}}((v))\rightarrow H^{2n^{\prime }}(\mathbf{K}^{2n})((v))
\label{ffsd}
\end{equation}%
if the star--product $\ast $ in $End_{\mathbf{V}},$ see (\ref{starp}), is
determined by an isomorphism $\chi $ (\ref{isom}) and a normal d--connection
$\widehat{\mathbf{D}}\equiv \ _{\theta }\widehat{\mathbf{D}}$ (\ref{cdcc}).
\end{corollary}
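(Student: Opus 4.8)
The plan is to obtain all four defining properties of a trace density d--map directly from Theorem \ref{th4} and the explicit construction of $\ ^{\mathcal{D}}\digamma$ in (\ref{ndistm}). First I would record that $\ ^{\mathcal{D}}\digamma$ is manifestly $\mathbb{C}$--linear in its argument: it is obtained by inserting $\chi(a)$ into the last slot of the cocycle $\ ^{k}\Theta_{n+n}$ (\ref{coc2}), which is linear in each entry, and $\chi$ (\ref{isom}) is a d--vector space isomorphism. For the same reason the map is N--adapted, since $\ ^{k}\Theta_{n+n}$, the isomorphism $\chi$, and the projection $\sigma$ (\ref{sigmaproj}) are all constructed to respect the Whitney sum $T\ ^{\mathcal{L}}\mathbf{K}^{2n}=h\ ^{\mathcal{L}}\mathbf{K}^{2n}\oplus v\ ^{\mathcal{L}}\mathbf{K}^{2n}$ into h-- and v--components. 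Passing to cohomology classes therefore yields a $\mathbb{C}$--linear, splitting--preserving map into $H^{2n^{\prime}}(\mathbf{K}^{2n})((v))$, which settles the linearity and N--adaptedness clauses of the definition of a trace density d--map.

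Second, I would verify that $[\ ^{\mathcal{D}}\digamma(a)]$ is a well--defined cohomology class. The key input is that $\ ^{\mathcal{D}}\digamma(a)$ is a closed $2n^{\prime}$--form, which follows from the cocycle identity for $\ ^{k}\Theta_{n+n}$ established in Theorem \ref{th1a} together with the Maurer--Cartan type equation $\delta r=\widehat{\mathcal{T}}+\widehat{\mathcal{R}}+\widehat{\mathbf{D}}r-\frac{i}{v}r\circ r$ satisfied by $\ ^{\theta}r$ in Theorem \ref{thgf}, which encodes the flatness $\widehat{\mathcal{D}}^{2}=0$ of the Fedosov--Finsler d--connection (\ref{ffdc}). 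Independence of the class from the chosen representative d--connection within its equivalence class (\ref{condeq}) is precisely the second exactness assertion of Theorem \ref{th4}: replacing $\ ^{\mathcal{E}}\mathbf{\Gamma}$ by an equivalent $\ ^{\mathcal{E}}\widetilde{\mathbf{\Gamma}}$ through $\mathbf{B}$ alters $\ ^{\mathcal{D}}\digamma$ only by a form in $\delta\ ^{d}\mathbf{\Omega}^{2n^{\prime}-1}(\mathbf{K}^{2n})((v))$, so the cohomology class is unchanged and $\ _{N}^{d}tr$ is independent of these choices.

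Third, the essential trace property is immediate. By the first exactness statement of Theorem \ref{th4}, the form $\ ^{\mathcal{D}}\digamma(a\ast b-b\ast a)$ belongs to $\delta\ ^{d}\mathbf{\Omega}^{2n^{\prime}-1}(\mathbf{K}^{2n})((v))$, i.e.\ it is $\delta$--exact; hence its class vanishes and $\ _{N}^{d}tr(a\ast b-b\ast a)=[\ ^{\mathcal{D}}\digamma(a\ast b-b\ast a)]=0$, with $\ast$ the star d--operator (\ref{starp}). Combining this with the linearity, N--adaptedness, and well--definedness established above shows that $\ _{N}^{d}tr$ meets every clause of the definition of a trace density d--map, which proves the Corollary.

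I expect the main obstacle to be the closedness verification in the nonholonomic setting. In the holonomic Feigin--Felder--Shoikhet computation \cite{feigin,chen}, closedness of the trace density follows from the cocycle relation together with flatness of the Fedosov connection; in the N--adapted framework the Fedosov--Finsler differential $\ _{r}^{d}\mathcal{D}$ (\ref{ffdc}) carries additional anholonomy contributions through the coefficients $W_{\alpha\beta}^{\gamma}$, and one must confirm that, when these terms are fed through $\ ^{k}\Theta_{n+n}$ and through the interior/exterior operations on $\mathcal{W}\otimes\mathbf{\Lambda}$, they organize consistently into the h--graded and v--graded pieces so that the total $2n^{\prime}$--form remains $\delta$--closed. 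This is the only point where the nonholonomic distribution genuinely intervenes; once it is handled by dubbing separately the h-- and v--components of the standard argument, the remaining conclusions are formal consequences of Theorem \ref{th4}.
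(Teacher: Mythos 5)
Your proposal is correct and follows essentially the same route as the paper, which obtains the Corollary directly from Theorem \ref{th4}: vanishing on commutators comes from the first exactness statement (the form $\ ^{\mathcal{D}}\digamma (a\ast b-b\ast a)$ lies in $\delta \ ^{d}\mathbf{\Omega }^{2n^{\prime }-1}(\mathbf{K}^{2n})((v))$, so its class is zero), and independence of the choices entering the construction comes from the second exactness statement, with linearity and N--adaptedness immediate from the multilinearity of $\ ^{k}\Theta _{n+n}$ and the d--vector space isomorphism $\chi $. The one superfluous element is your ``main obstacle'': since $\ ^{\mathcal{D}}\digamma (a)$ is a form of top degree $2n^{\prime }=2n$ on the $2n$--dimensional manifold $\mathbf{K}^{2n}$, it is closed automatically for dimension reasons, so no appeal to the cocycle identity of Theorem \ref{th1a} or to the flatness equation of Theorem \ref{thgf} is required at that point.
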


Finally, we note that for holonomic structures \ and vanishing distortion
tensor (\ref{cdeftc}) the formula (\ref{ffsd}) transform into the so--called
FFS trace density map, see details in \cite{feigin} and section 2 of \cite%
{chen}.

\subsection{Main result:\ the nonholonomic algebraic index theorem}

We conclude our constructions on nonholonomic deformation quantization and
N--adapted index theorem for an almost K\"{a}hler space $\ ^{\mathcal{L}}%
\mathbf{K}^{2n}$ defined by a regular $\mathcal{L}$ on a N--anholonomic
manifold $\mathbf{V}^{2n}.$ Let $\ast $ be a star product (\ref{starpb}) in
the d--vector space $C(\ ^{\mathcal{L}}\mathbf{K}^{2n})((v))$ of smooth
functions on $\ ^{\mathcal{L}}\mathbf{K}^{2n},$ with induced (\ref{starp})
following conditions of the statement 2 in Corollary \ref{corolrem} via
 the isomorphism $\chi $ (\ref{isom}). We consider an idempotent $\ ^{p}\zeta $
in the matrix d--algebra $\ ^{d}\mathfrak{g}$ $=\mathfrak{gl}_{k}\left( \
_{c}^{+}\mathcal{A}\right) ,$ where the subalgebra
 $\ _{c}^{+}\mathcal{A=}C(\ ^{\mathcal{L}}\mathbf{K}^{2n})[[v]]\subset \ _{c}%
\mathcal{A=}C(\ ^{\mathcal{L}}\mathbf{K}^{2n})((v)),$ %
and use the same symbol $\ _{r}^{d}\mathcal{D}$ for the naturally extended
Fedosov--Finsler d--connection on the Weyl d--algebra bundle $\ \mathbf{W}%
(End(l_{k}))$ associated with the trivial bundle $l_{k}$ of rank $k.$ At the
next step, we assign to $\ ^{p}\zeta $ the top degree de Rham cohomology
class%
\begin{equation*}
cl(\ ^{p}\zeta )=[\ ^{\mathcal{D}}\digamma \left(\ ^{p}\zeta \right) ]\in
H^{2n^{\prime }}(\mathbf{K}^{2n})((v)).
\end{equation*}%
For the \ $K_{0}$--group of the d--algebra $\ _{c}^{+}\mathcal{A}$ to $%
H^{2n^{\prime }}(\mathbf{K}^{2n})((v))$ we get $cl:K_{0}(\ _{c}^{+}\mathcal{A%
})\rightarrow H^{2n^{\prime }}(\mathbf{K}^{2n})((v))$ following statements
of Theorem \ref{th4}.

Certain properties of a Fedosov quantized Lagrange--Finsler/ Einstein
geometric model are characterized by star product (\ref{starpb}), and (\ref%
{starp}), determined by the principal part of idempotents of $\mathfrak{gl}%
_{k}\left( \ _{c}^{+}\mathcal{A}\right) .$

\begin{proposition}
--\textbf{Definition.} The principal part $\ ^{0}\zeta $ of an idempotent $\
^{p}\zeta \in \mathfrak{gl}_{k}\left( \ _{c}^{+}\mathcal{A}\right) $ is
determined by the \ zeroth term $\ ^{0}\zeta =\ ^{p}\zeta _{|v=0}.$ There
are two important properties for such principal parts:

\begin{enumerate}
\item Any idempotend in the matrix d--algebra $\mathfrak{gl}_{k}C(\ ^{%
\mathcal{L}}\mathbf{K}^{2n})$ combined with the operation of taking the
principal part gives a well--defined principal symbol map%
\begin{equation}
\mathcal{K}:K_{0}(\ _{c}^{+}\mathcal{A})\rightarrow K_{0}(\ ^{\mathcal{L}}%
\mathbf{K}^{2n}).  \label{prinsind}
\end{equation}

\item If two idempotents $\ ^{p}\zeta _{1},$ $\ ^{p}\zeta _{2}\in \left(
End_{V}[[v]],\ast \right) ,$ for a vector bundle $\mathcal{E}$ of rank $k$
over $\ ^{\mathcal{L}}\mathbf{K}^{2n}$ and a star product $\ast $ (\ref%
{starp}) defined by Fedosov procedure, have equal principal parts, $\
^{p}\zeta _{1|v=0}=\ ^{p}\zeta _{2|v=0},$ then we get coincidence of
cohomology classes, $\ ^{\mathcal{D}}\digamma \left( \ ^{p}\zeta _{1}\right)
=\ ^{\mathcal{D}}\digamma \left( \ ^{p}\zeta _{2}\right) .$
\end{enumerate}
\end{proposition}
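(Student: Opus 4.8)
The plan is to handle the two assertions in turn, deriving both from a single structural fact: reduction modulo $v$ is a unital algebra homomorphism, and the map $\ ^{\mathcal{D}}\digamma$ is a trace at the level of cohomology (Theorem \ref{th4}). The first assertion is then pure functoriality of $K_{0}$, while the second reduces to conjugating one idempotent into the other by an element $I_{k}+O(v)$ and invoking the trace property.

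First I would set up the principal symbol map. The leading term of the star product (\ref{starpb}) is the pointwise product, $\ _{0}C(\ ^{1}f,\ ^{2}f)=\ ^{1}f\ ^{2}f$, so evaluation at $v=0$ is a unital $\mathbb{C}$--algebra homomorphism $\ _{c}^{+}\mathcal{A}=C(\ ^{\mathcal{L}}\mathbf{K}^{2n})[[v]]\rightarrow C(\ ^{\mathcal{L}}\mathbf{K}^{2n})$, which extends entrywise to $\mathfrak{gl}_{k}$. Any $\ast$--idempotent $\ ^{p}\zeta$ (satisfying $\ ^{p}\zeta\ast\ ^{p}\zeta=\ ^{p}\zeta$) is carried to its principal part $\ ^{0}\zeta=\ ^{p}\zeta_{|v=0}$, and the idempotent relation degenerates at $v=0$ to $\ ^{0}\zeta\cdot\ ^{0}\zeta=\ ^{0}\zeta$, so $\ ^{0}\zeta$ is a genuine idempotent of $\mathfrak{gl}_{k}C(\ ^{\mathcal{L}}\mathbf{K}^{2n})$. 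Since $K_{0}$ is functorial for unital homomorphisms, this assignment descends to classes and automatically respects stabilization $\ ^{p}\zeta\mapsto\ ^{p}\zeta\oplus 0$ and $\ast$--equivalence; by Serre--Swan the $K_{0}$ of the commutative d--algebra is $K_{0}(\ ^{\mathcal{L}}\mathbf{K}^{2n})$, which yields $\mathcal{K}$ of (\ref{prinsind}).

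For the second assertion, the key step is a conjugation lemma: two idempotents $\ ^{p}\zeta_{1},\ ^{p}\zeta_{2}\in(End_{\mathbf{V}}[[v]],\ast)$ with $\ ^{p}\zeta_{1|v=0}=\ ^{p}\zeta_{2|v=0}=\ ^{0}\zeta$ satisfy $\ ^{p}\zeta_{2}=U\ast\ ^{p}\zeta_{1}\ast U^{\ast-1}$ for some $\ast$--invertible $U=I_{k}+O(v)$. I would build $U$ recursively in the $v$--adic topology so that $U\ast\ ^{p}\zeta_{1}\ast U^{\ast-1}$ matches $\ ^{p}\zeta_{2}$ to ever higher order. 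At the inductive step, after arranging agreement modulo $v^{k+1}$, denote the conjugated idempotent again by $\ ^{p}\zeta_{1}$ and write $\ ^{p}\zeta_{2}-\ ^{p}\zeta_{1}=v^{k+1}c+O(v^{k+2})$; expanding both idempotent identities to this order forces $c=\ ^{0}\zeta\,c+c\,\ ^{0}\zeta$, so $\ ^{0}\zeta\,c\,\ ^{0}\zeta=(I_{k}-\ ^{0}\zeta)c(I_{k}-\ ^{0}\zeta)=0$ and $c$ is purely off--diagonal for the block decomposition by $\ ^{0}\zeta$ and $I_{k}-\ ^{0}\zeta$. Then the correction $I_{k}+v^{k+1}g$ with $g=(I_{k}-\ ^{0}\zeta)c\,\ ^{0}\zeta-\ ^{0}\zeta\,c(I_{k}-\ ^{0}\zeta)$ gives $[g,\ ^{0}\zeta]=c$ and removes the discrepancy to order $v^{k+2}$; since $\ ^{0}\zeta$ and $c$ are N--adapted, each such correction respects the $h$--$v$ splitting, and the limit $U$ lives in the completed N--adapted algebra.

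Finally I would conclude via the trace property. Setting $a=U$ and $b=\ ^{p}\zeta_{1}\ast U^{\ast-1}$ gives $a\ast b=\ ^{p}\zeta_{2}$ and $b\ast a=\ ^{p}\zeta_{1}$, so $\ ^{p}\zeta_{2}-\ ^{p}\zeta_{1}=a\ast b-b\ast a$ is a $\ast$--commutator. By Theorem \ref{th4} the form $\ ^{\mathcal{D}}\digamma(a\ast b-b\ast a)$ lies in $\delta\ ^{d}\mathbf{\Omega}^{2n^{\prime}-1}(\mathbf{K}^{2n})((v))$, hence is a coboundary, and the equality of cohomology classes $[\ ^{\mathcal{D}}\digamma(\ ^{p}\zeta_{1})]=[\ ^{\mathcal{D}}\digamma(\ ^{p}\zeta_{2})]$ follows. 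The main obstacle is the recursive conjugation lemma: although the algebra of the inductive step is classical, checking that the off--diagonal correction $g$ stays compatible with the N--connection splitting at every order, so that $U$ genuinely belongs to the N--adapted completed algebra rather than merely to its non--adapted envelope, is the point that requires care in the nonholonomic setting.
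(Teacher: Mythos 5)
Your proposal is correct, but it is worth noting how it relates to what the paper actually does: the paper's proof is almost entirely by citation --- part 1 is referred to the lifting theorem of Boutet de Monvel--Guillemin \cite{bout} (every idempotent of $\mathfrak{gl}_{k}(C(\mathcal{M}))$ is the principal part of some $\ast$--idempotent) together with Fedosov's explicit formulas \cite{fed2}, and part 2 is referred to Theorem 6.1.3 of Fedosov's monograph, both ``dubbed for $h$-- and $v$--components.'' You instead reconstruct the underlying arguments in a self--contained way, and your reconstruction is sound: reduction at $v=0$ is a unital homomorphism because $\ _{0}C(\ ^{1}f,\ ^{2}f)=\ ^{1}f\ ^{2}f$, so functoriality of $K_{0}$ plus Serre--Swan gives well--definedness of $\mathcal{K}$ in (\ref{prinsind}); and your order--by--order conjugation lemma, combined with linearity of $\ ^{\mathcal{D}}\digamma$ and the commutator--vanishing property of Theorem \ref{th4}, correctly yields coincidence of the cohomology classes in part 2 (which is the intended reading of the displayed equality). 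Two remarks. First, your part 1 establishes well--definedness but not the content of the paper's citation of \cite{bout}, which is surjectivity of $\mathcal{K}$ (existence of quantum lifts of classical idempotents); that is not needed for the statement as literally formulated, but it is the ingredient that makes the index formula (\ref{mainf}) apply to every class of $K_{0}(\ ^{\mathcal{L}}\mathbf{K}^{2n})$, so it deserves a sentence if your proof is meant to replace the paper's. Second, your recursion can be bypassed entirely by Fedosov's one--line intertwiner $U=\ ^{p}\zeta _{2}\ast \ ^{p}\zeta _{1}+(I_{k}-\ ^{p}\zeta _{2})\ast (I_{k}-\ ^{p}\zeta _{1})$: one checks directly that $U\ast \ ^{p}\zeta _{1}=\ ^{p}\zeta _{2}\ast \ ^{p}\zeta _{1}=\ ^{p}\zeta _{2}\ast U$, and equality of principal parts gives $U\equiv I_{k}\ (\mathrm{mod}\ v)$, hence $U$ is $\ast$--invertible by a $v$--adic geometric series; this $U$ is manifestly built from N--adapted data, so the compatibility with the $h$--$v$ splitting that you flag as the delicate point in your inductive construction becomes automatic.
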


\begin{proof}
The first statement above follows from a result in \cite{bout}, that for
any idempotent $\zeta $ in the matrix algebra $\mathfrak{gl}_{k}\left( C(%
\mathcal{M}\right) )$, for a symplectic manifold $\mathcal{M}$, we can
define an idempotent $\zeta ^{\prime }$ in $\mathfrak{gl}_{k}\left( C(%
\mathcal{M}\right) [[v]],\ast ),$ where $\left( C(\mathcal{M}\right)
[[v]],\ast )$ is a subalgebra of $\left( C(\mathcal{M}\right) ((v)),\ast ),$
with a given star product $\ast ,$ when the principal part of $\zeta
^{\prime }$ is $\zeta .$ In our approach, we work with N--adapted structures
on $C(\ ^{\mathcal{L}}\mathbf{K}^{2n})$ $\ $and have to consider for $h$--
and $v$--components and almost symplectic d--connection $\widehat{\mathbf{D}}%
\equiv \ _{\theta }\widehat{\mathbf{D}}$ (\ref{cdcc}) on $\ ^{\mathcal{L}}%
\mathbf{K}^{2n}$ the explicit formula for principal parts in \cite{fed2}.\ \
The second statement is a straightforward generalization for $\ ^{\mathcal{L}%
}\mathbf{K}^{2n}$ of the proof of theorem 6.1.3 in the just mentioned
Fedosov's monograph. $\square $
\end{proof}

\begin{theorem}
\label{th5}The cohomology class $cl(\ ^{0}k)$ for any element $\ ^{0}k\in
K_{0}(\ _{c}^{+}\mathcal{A})$ coincides with the top component of the cup product%
\begin{equation}
cl(\ ^{0}k)=\left[ \widehat{A}(\ ^{\mathcal{L}}\mathbf{K}^{2n})\ e^{-v^{-1}\
^{W}\widehat{\mathcal{C}}}\ ch\left( \mathcal{K}(\ ^{0}k)\right) \right]
_{n+n},  \label{mainf}
\end{equation}%
where $ch\left( \mathcal{K}(\ ^{0}k)\right) $ is the the Chern character of
the principal symbol $\mathcal{K}(\ ^{0}k)$ (\ref{prinsind}) of $\ ^{0}k,$
the exponent is determined by $\ ^{W}\widehat{\mathcal{C}}=-\theta +\
^{v}\Omega $ (\ref{fwdc}) (in the holonomic case it transforms into the
so--called Deligne--Fedosov class) here computed by the star--product in $\
_{c}^{+}\mathcal{A};$ we also consider in (\ref{mainf}) the so--called $%
\widehat{A}$--genus of $\ ^{\mathcal{L}}\mathbf{K}^{2n}.$
\end{theorem}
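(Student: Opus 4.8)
The plan is to compute the top-degree class $cl(\ ^{0}k)\in H^{2n^{\prime }}(\mathbf{K}^{2n})((v))$ by representing $\ ^{0}k$ by an idempotent $\ ^{p}\zeta \in \mathfrak{gl}_{k}(\ _{c}^{+}\mathcal{A})$, setting $cl(\ ^{0}k)=[\ ^{\mathcal{D}}\digamma (\ ^{p}\zeta )]$, and then evaluating the right-hand side through the explicit Feigin--Felder--Shoikhet cocycle $\ ^{k}\Theta _{n+n}$ (\ref{coc2}). First I would invoke the Proposition--Definition on principal parts to reduce everything to the principal symbol $\ ^{0}\zeta =\ ^{p}\zeta _{|v=0}$: since any two idempotents with equal principal parts have equal images under $\ ^{\mathcal{D}}\digamma$, the class $cl(\ ^{0}k)$ factors through the principal symbol map $\mathcal{K}:K_{0}(\ _{c}^{+}\mathcal{A})\rightarrow K_{0}(\ ^{\mathcal{L}}\mathbf{K}^{2n})$ (\ref{prinsind}). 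Hence it suffices to produce a closed formula in terms of $\mathcal{K}(\ ^{0}k)$ and the fixed geometric data $(\theta ,\widehat{\mathbf{D}},\ ^{\mathcal{E}}\partial )$.

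Next I would substitute the trace-density formula (\ref{ndistm}), namely $\ ^{\mathcal{D}}\digamma (a)=v^{-n}\ ^{k}\Theta _{n+n}(\ ^{\theta }r,\ldots ,\ ^{\theta }r,\chi (a))$, and apply Theorem \ref{th1a}. That theorem identifies the cohomology class of $\ ^{k}\Theta _{n+n}$, evaluated on the N-adapted maps, with the image under the nonholonomic Chern--Weil homomorphism (\ref{cwnh}) of the adjoint-invariant d-form $\ ^{d}A(\mathfrak{X},\ldots ,\mathfrak{X})=\det \bigl( \tfrac{\ ^{1}\mathfrak{X}/2v}{\sinh (\ ^{1}\mathfrak{X}/2v)}\bigr) ^{1/2}\,tr\tfrac{\ ^{2}\mathfrak{X}}{v}$ under the splitting $\mathfrak{X}=\ ^{1}\mathfrak{X}\oplus \ ^{2}\mathfrak{X}\in \mathfrak{sp}_{n+n}\oplus \mathfrak{gl}_{k}$. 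The factor $\det (\cdot )^{1/2}$ is exactly the generating series of the $\widehat{A}$-genus, while the scalar-matrix trace factor $tr(\cdot )$ will produce the Chern character of the bundle. The computation therefore reduces to inserting the curvature $\ ^{d}C(\ ^{\theta }r,\ ^{\theta }r)$ of the N-adapted projection $\ ^{N}pr$ (\ref{proj1}), (\ref{proj2}), evaluated on the Fedosov--Finsler $1$-form $\ ^{\theta }r$ of $\ _{r}^{d}\mathcal{D}$ (\ref{ffdc}), into $\ ^{d}A$ and extracting the degree-$(n+n)$ component, the factor $v^{-n}$ supplying the correct powers $v^{-1}$ in the exponential.

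The crux is the block decomposition of $\ ^{d}C(\ ^{\theta }r,\ ^{\theta }r)$ along $\mathfrak{sp}_{n+n}\oplus \mathfrak{gl}_{k}$, carried out separately on the $h$- and $v$-pieces. On the $\mathfrak{sp}_{n+n}$ factor I would show that $\ ^{1}\mathfrak{X}$ is governed by the Fedosov--Weyl curvature $\ ^{W}\widehat{\mathcal{C}}=-\theta +\ ^{v}\Omega$ of Corollary \ref{corolrem} (formula (\ref{fwdc})) together with the curvature $\widehat{\mathcal{R}}$ of the normal d-connection $\widehat{\mathbf{D}}$: the scalar (central) part $\ ^{W}\widehat{\mathcal{C}}$ exponentiates out of $\det (\cdot )^{1/2}$ as the deformation factor $e^{-v^{-1}\ ^{W}\widehat{\mathcal{C}}}$, while the genuinely $\mathfrak{sp}$-valued part $\widehat{\mathcal{R}}$ feeds the $\det (\cdot )^{1/2}$ and assembles into $\widehat{A}(\ ^{\mathcal{L}}\mathbf{K}^{2n})$. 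On the $\mathfrak{gl}_{k}$ factor, $\ ^{2}\mathfrak{X}$ is the curvature of $\ ^{\mathcal{E}}\partial $ restricted to the image subbundle of the idempotent, whose exponentiated trace gives $ch(\mathcal{K}(\ ^{0}k))$. Throughout, the three exact-form identities of Theorem \ref{th4} guarantee that passing to cohomology in $H^{2n^{\prime }}(\mathbf{K}^{2n})((v))$ kills all ambiguities, the commutator terms, the dependence on equivalent potentials $\ ^{\mathcal{E}}\mathbf{\Gamma }\sim \ ^{\mathcal{E}}\widetilde{\mathbf{\Gamma }}$ (\ref{condeq}), and the replacement of $\ ^{\mathcal{D}}\digamma (I_{k})$ by the curvature polynomial $\ ^{d}A_{n^{\prime }}(\ ^{d}C(\ ^{\theta }r,\ ^{\theta }r),\ldots )$, being all $\delta $-exact. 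Reassembling the three factors and taking the top $[\,\cdot \,]_{n+n}$ component then yields (\ref{mainf}).

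The main obstacle I anticipate is precisely this block identification in the nonholonomic setting: verifying that $\ ^{d}C(\ ^{\theta }r,\ ^{\theta }r)$ splits cleanly so that its $\mathfrak{sp}_{n+n}$ block reproduces $\ ^{W}\widehat{\mathcal{C}}$ plus $\widehat{\mathcal{R}}$ and its $\mathfrak{gl}_{k}$ block reproduces the bundle curvature, all while respecting the anholonomy coefficients $W_{\alpha \beta }^{\gamma }$ of the N-adapted frames (\ref{dder}), (\ref{ddif}). In the integrable limit this collapses to the classical Feigin--Felder--Shoikhet/Chen computation of \cite{feigin,chen}; here one must check that the nonholonomy and the (nonvanishing mixed) torsion of $\widehat{\mathbf{D}}$, which nevertheless satisfies $\widehat{T}_{jk}^{i}=\widehat{T}_{bc}^{a}=0$ by Theorem \ref{th3b}, introduce no extra non-$\delta$-exact contributions upon exponentiating and extracting the degree-$(n+n)$ part.
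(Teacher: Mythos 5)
Your proposal follows essentially the same route as the paper's own (sketched) proof: both reduce to principal parts via the principal--symbol map $\mathcal{K}$ (\ref{prinsind}), evaluate the trace density $\ ^{\mathcal{D}}\digamma$ (\ref{ndistm}) through the FFS cocycle and Theorem \ref{th1a}, invoke Theorem \ref{th4} for $\delta$--exactness of all ambiguities, and identify the three factors $\widehat{A}(\ ^{\mathcal{L}}\mathbf{K}^{2n})$, $e^{-v^{-1}\ ^{W}\widehat{\mathcal{C}}}$ and $ch\left( \mathcal{K}(\ ^{0}k)\right)$ by dubbing the Feigin--Felder--Shoikhet/Chen--Dolgushev computation for $h$-- and $v$--components. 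The one ingredient of the paper's appendix you do not name explicitly is its technical Lemma (the nonholonomic analogue of Chen's Lemma 1, asserting that any $\ ^{\mathcal{E}}\partial$--parallel endomorphism $\varsigma$ satisfies $\ _{r}^{d}\mathcal{D}\varsigma =0$ and so is fixed by $\chi$), which is exactly the tool the paper uses to settle the $\mathfrak{gl}_{k}$--block identification you correctly flag as the main obstacle.
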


We sketch a proof of this theorem in Appendix \ref{asproof}.

\subsection{Example: index encoding of Einstein--Finsler spaces}

It is possible to find explicit relations between the nonholonomic versions
of the local index theorem and classification of exact solutions for
Einstein and Einstein--Finsler classical and quantum (in the sense of
deformation quantization) gravity. For instance, a very important
mathematical and physical problem is that to formulate some well established
criteria when a nonholonomical distinguished 1--form $\theta $ (\ref{sform})
and related normal d--connection $\widehat{\mathbf{D}}\equiv \ _{\theta }%
\widehat{\mathbf{D}}$ (\ref{cdcc}), via the Fedosov--Finsler normal
d--connection $\ _{r}^{d}\mathcal{D}$ (\ref{ffdc}), in the product (\ref%
{mainf}), define cohomology classes for solutions of gravitational field
equations in various types of gravity theories. For simplicity, we consider
a four dimensional (4-d) nonholonomic manifold $\mathbf{V}$ with local
coordinates $u^{\alpha }=(x^{1},x^{2},y^{3}=t,y^{4})$ and signature $(++-+),$
i.e. $t$ is a time like coordinate. The Einstein equations with
N--anholonomic distributions on pseudo--Riemannian and/or Lagrange--Finsler
spaces with canonical d--connection/ normal $h$--$v$--connection / Cartan
d--connection can be integrated in very general forms, see details and
proofs in Refs. \cite{vgensol,vrflg}. In this section, we show
how general classes of exact solutions of Einstein equations can be generated
as almost K\"{a}hler structures and classified following conditions of the
algebraic index Theorem \ref{th5}.

\subsubsection{Einstein--Finsler and Einstein spaces}

Einstein spaces, with the Ricci tensor proportional to the metric tensor via
a nontrivial cosmological constant, and its polarizations can be constructed
in various types of gravity theories. They are used for different studies of
properties of gravitational vacuum and/or as the simplest approximations for
matter, or extra dimension, contributions. Using N--connection distributions
and nonholonomic transforms, we can model Finsler configurations as exact
solutions of gravitational field equations in general relativity and,
inversely, a Finsler gravity model on tangent bundle can be constructed
similarly to the Einstein gravity but for a different class of Finsler
d--connections, see details and discussions in \cite{vcrit,vrflg}%
.

\begin{proposition}
--\textbf{Definition.}

\begin{enumerate}
\item A {nonholonomic Einstein space} for the normal d--connection $\widehat{%
\mathbf{D}}$ (\ref{cdcc}) and d--metric $\mathbf{g}_{\alpha \beta }$ (\ref%
{sasakmetr}) is defined by solutions of Einstein equations (\ref{ensteqcdc})
with source%
\begin{equation}
\mathbf{\Upsilon }_{\ \delta }^{\alpha }=diag[\ ^{v}\lambda (x^{k},t),\
^{v}\lambda (x^{k},t);\ ^{h}\lambda (x^{k}),\ ^{h}\lambda (x^{k})],
\label{source}
\end{equation}%
\begin{equation}
\widehat{\mathbf{R}}_{\ \beta \delta }=\mathbf{\Upsilon }_{\ \delta
}^{\alpha }.  \label{efspaces}
\end{equation}

\item We construct geometric and physicial models of {Einstein--Finsler/--Lagrange spaces} if $\ \widehat{%
\mathbf{D}}$ and $\mathbf{g}_{\alpha \beta }=\ ^{F}\mathbf{g}_{\alpha \beta
},$ or $\ =\ ^{L}\mathbf{g}_{\alpha \beta },$ are taken for a
Finsler/--Lagrange geometry. Almost K\"{a}hler models of
Einstein/--Finsler/--Lagrange spaces are elaborated in variables $\mathbf{%
\theta (\cdot ,\cdot )}\doteqdot \mathbf{g}\left( \mathbf{J\cdot ,\cdot }%
\right) $ and $\widehat{\mathbf{D}}\equiv \ _{\theta }\widehat{\mathbf{D}}$
for correspondingly prescribed N--connection structures.

\item As a particular case, we extract {Einstein spaces} for the
Levi--Civita connection $\nabla $ if we impose additionally the conditions (%
\ref{lccond}) and consider sources $\ ^{v}\lambda =\ ^{h}\lambda =\lambda
=const.$
\end{enumerate}
\end{proposition}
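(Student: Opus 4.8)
The plan is to read items 1 and 2 as definitions whose only burden is well--posedness (the almost K\"ahler reformulation of item 2 is already guaranteed by Theorem \ref{thmr2} and the frame transforms (\ref{fralgeq})), and to concentrate the argument on the substantive reduction asserted in item 3. First I would recall the distortion relation between the normal d--connection and the Levi--Civita connection, $\widehat{\mathbf{D}}=\nabla+\widehat{\mathbf{Z}}$, where the distortion d--tensor $\widehat{\mathbf{Z}}$ is built algebraically from the N--connection coefficients and the nontrivial torsion components of $\widehat{\mathbf{D}}$ (cf. (\ref{cdeftc}), (\ref{cdtors})). Substituting this into the Ricci tensor $\widehat{\mathbf{R}}_{\ \beta\delta}$ and invoking the N--adapted $h$--$v$ splitting decomposes the left--hand side of (\ref{efspaces}) into $h$--, $v$-- and mixed blocks, each expressible as the corresponding block of the Levi--Civita Ricci tensor $R_{\beta\delta}$ shifted by curvature/torsion contributions of $\widehat{\mathbf{Z}}$.

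Next I would impose the constraints (\ref{lccond}). The essential point is that these are exactly the integrability conditions under which $\widehat{\mathbf{Z}}$ vanishes, so that $\widehat{\mathbf{D}}$ and $\nabla$ share identical coefficients with respect to the N--adapted frame (\ref{dder}), (\ref{ddif}); the vanishing of $\widehat{T}^{i}_{jk}$ and $\widehat{T}^{a}_{bc}$ already supplied by Theorem \ref{th3b} is then completed by the vanishing of the remaining mixed torsion and of the difference of Ricci tensors, whence $\widehat{\mathbf{R}}_{\ \beta\delta}=R_{\beta\delta}$. At this stage the nonholonomic Einstein equations (\ref{efspaces}) collapse to the ordinary Einstein equations written for $\nabla$, with the source still of the block form (\ref{source}).

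Finally I would set $\ ^{v}\lambda=\ ^{h}\lambda=\lambda=\mathrm{const}$. Lowering the index with $\mathbf{g}_{\alpha\beta}$ (\ref{sasakmetr}), the source (\ref{source}) becomes $\lambda\,\mathbf{g}_{\beta\delta}$, so the system reduces to $R_{\beta\delta}=\lambda\,\mathbf{g}_{\beta\delta}$, the defining relation of an Einstein space. Consistency should be checked against the contracted Bianchi identity for $\nabla$: since $\nabla^{\alpha}(R_{\alpha\beta}-\tfrac{1}{2}R\,\mathbf{g}_{\alpha\beta})=0$ identically, tracing $R_{\beta\delta}=\lambda\,\mathbf{g}_{\beta\delta}$ in dimension $2n>2$ forces $\partial_{\alpha}\lambda=0$, so the constancy of $\lambda$ is compatible with, indeed implied by, the reduced system rather than an extra hypothesis.

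The step I expect to be the main obstacle is the second one: one must verify that (\ref{lccond}) genuinely annihilates the \emph{full} distortion, including the mixed $h$--$v$ components of $\widehat{\mathbf{Z}}$ and of the induced difference of Ricci tensors, so that no spurious source terms survive the passage from $\widehat{\mathbf{R}}_{\ \beta\delta}$ to $R_{\beta\delta}$. This is a direct but delicate bookkeeping exercise with the explicit torsion and curvature formulas (\ref{cart1})--(\ref{cart2}) of the normal d--connection recorded in the Appendix, rather than a conceptual difficulty.
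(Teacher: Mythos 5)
Your proposal is correct and follows essentially the same route as the paper: items 1--2 are treated as definitional/well-posedness matters settled by Theorem \ref{thmr2} and the frame transforms (\ref{fralgeq}), (\ref{frametr}), and item 3 by noting that the constraints (\ref{lccond}) annihilate both the remaining torsion components (\ref{cdtors}) and the full distortion d--tensor (\ref{cdeftc}), so that $\widehat{\mathbf{\Gamma}}_{\ \alpha\beta}^{\gamma}=\Gamma_{\ \alpha\beta}^{\gamma}$ with respect to the N--adapted frames (\ref{dder}), (\ref{ddif}) and the system collapses to the ordinary Einstein equations for $\nabla$. Your closing contracted-Bianchi remark (that $\partial_{\alpha}\lambda=0$ is forced in dimension $2n>2$ once $\ ^{v}\lambda=\ ^{h}\lambda$) is a small refinement beyond the paper's proof, which simply imposes $\lambda=const$ as a hypothesis.
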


\begin{proof}
The first statement with equations (\ref{efspaces}) follows by contraction
of indices in equations (\ref{ensteqcdc}) with source (\ref{source}). Such
equations define ''standard'' Einstein--Finsler spaces if $\widehat{\mathbf{D%
}}$ and $\mathbf{g}_{\alpha \beta }$ are considered on $TM.$

The second statement can be derived for any frame/coordinate transforms
\begin{equation}
\mathbf{f}_{\alpha \beta }=\ \mathbf{e}_{\ \alpha }^{\alpha ^{\prime }}\
\mathbf{e}_{\ \beta }^{\beta ^{\prime }}\mathbf{g}_{\alpha ^{\prime }\beta
^{\prime }},  \label{frametr}
\end{equation}%
where $\mathbf{f}_{\alpha \beta }$ is a Finsler/ Lagrange metric of type (%
\ref{sasakmetr}), induced by a Hessian (\ref{lm}), and $\mathbf{g}_{\alpha
^{\prime }\beta ^{\prime }}$ is a solution of (\ref{efspaces}). In 4--d
spaces, we have to find 16 coefficients $\mathbf{e}_{\ \alpha }^{\alpha
^{\prime }},$ with given maximum 6 independent coefficients $\mathbf{g}%
_{\alpha ^{\prime }\beta ^{\prime }}.$ By local coordinate transforms, and
because of Bianchi identities, we can put zero 4 coefficients from 10 ones
of a second rank symmetric tensor. In such way, we can also such way chose a generating
Lagrange/Finsler function $L(x,y),$ or $\mathcal{L}(x,y),$ when the
equations (\ref{frametr}) can be solved on local cartes which should be neighborhoods, or charts throughout.
 Values $\mathbf{e}_{\ \alpha }^{\alpha ^{\prime }}$ together
with N--connection coefficients $N_{i}^{a}$ induced by a corresponding $L/$ $%
\mathcal{L}$ state the nonholonomic distribution which is considered for our
Einstein--Finsler spacetime model. We conclude that a solution $\mathbf{g}$
of Einstein equations (\ref{efspaces}) can be written in N--adapted
nonholonomic variables for some data $\left(g_{ij},h_{ab},N_{i}^{a}\right) $
encoding $\mathbf{g}_{\alpha \beta }$ in local coordinates, or
(equivalently) in Finsler variables $\left(\mathbf{f}_{\alpha \beta },\
^{c}N_{i}^{a},\widehat{\mathbf{D}}\right) $, via frame transform (\ref%
{fralgeq}) and (\ref{frametr}). The priority of Finsler variables in various
models of gravity is that choosing $\mathbf{\theta (\cdot ,\cdot )}\doteqdot
\mathbf{g}\left( \mathbf{J\cdot ,\cdot }\right) $ and $\widehat{\mathbf{D}}%
\equiv \ _{\theta }\widehat{\mathbf{D}}$ we uniquely define  compatible
with $\mathbf{\theta ,}\ _{\theta }\widehat{\mathbf{D}}\mathbf{\theta} =0,$
almost K\"{a}hler geometric models. This allows us to apply the Fedosov
quantization methods.

The third statement is a consequence of the fact that the constraints $%
\widehat{L}_{aj}^{c}=e_{a}(N_{j}^{c}),\ \widehat{C}_{jb}^{i}=0,\ \Omega _{\
ji}^{a}=0$ (\ref{lccond}) are satisfied for the tensors $\widehat{\mathbf{T}}_{\
\alpha \beta }^{\gamma }$(\ref{cdtors}) and then these $Z_{\ \alpha \beta }^{\gamma }$(%
\ref{cdeft}) are zero. This states that $\widehat{\mathbf{\Gamma }}_{\
\alpha \beta }^{\gamma }=\Gamma _{\ \alpha \beta }^{\gamma },$ with respect
to N--adapted frames (\ref{dder}) and (\ref{ddif}), see (\ref{cdeftc}), even
$\widehat{\mathbf{D}}\neq \nabla .$ $\square $
\end{proof}

\subsubsection{On ''general'' exact solutions in gravity}

Let us consider a metric parameterized with respect to a
N--adapted cobase (\ref{ddif}) in the form
\begin{eqnarray}
\ ^{\circ }\mathbf{g} &\mathbf{=}&e^{\psi (x^{k})}{dx^{i}\otimes dx^{i}}%
+h_{3}(x^{k},t)\mathbf{e}^{3}{\otimes }\mathbf{e}^{3}+\omega
^{2}(x^{k},t,x^{4})h_{4}(x^{k},t)\mathbf{e}^{4}{\otimes }\mathbf{e}^{4},
\notag \\
\mathbf{e}^{3} &=&dt+w_{i}(x^{k},t)dx^{i},\mathbf{e}%
^{4}=dy^{4}+n_{i}(x^{k},t)dx^{i}.  \label{genans}
\end{eqnarray}%
In brief, we denote $\partial a/\partial
x^{1}=a^{\bullet },\partial a/\partial x^{2}=a^{\prime }$ and $\partial
a/\partial t=a^{\ast }.$
\begin{proposition}
For $h_{3,4}^{\ast }\neq 0,$ the system of gravitational field equations
defining Einstein--Finsler spaces for a normal d--connection and a metric (\ref{genans}) transform into
\begin{eqnarray}
\ddot{\psi}+\psi ^{\prime \prime } &=&2\ ^{h}\lambda (x^{k}),  \label{4ep1a}
\\
h_{4}^{\ast } &=&2h_{3}h_{4}\ ^{v}\lambda (x^{i},t)/\phi ^{\ast },
\label{4ep2a} \\
\beta w_{i}+\alpha _{i} &=&0,  \label{4ep3a} \\
n_{i}^{\ast \ast }+\gamma n_{i}^{\ast } &=&0,  \label{4ep4a} \\
\mathbf{e}_{k}\omega =\partial _{4}\omega +w_{k}\omega ^{\ast
}+n_{k}\partial \omega /\partial y^{4} &=&0  \label{4ep5a}
\end{eqnarray}%
where%
\begin{equation}
~\phi =\ln |\frac{h_{4}^{\ast }}{\sqrt{|h_{3}h_{4}|}}|,\ \alpha
_{i}=h_{4}^{\ast }\partial _{i}\phi ,\ \beta =h_{4}^{\ast }\ \phi ^{\ast },\
\gamma =\left( \ln |h_{4}|^{3/2}/|h_{3}|\right) ^{\ast }.  \label{auxphi}
\end{equation}
\end{proposition}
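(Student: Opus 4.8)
The plan is to compute the Ricci d--tensor $\widehat{\mathbf{R}}_{\beta\delta}$ of the normal d--connection (\ref{cdcc}) for the ansatz (\ref{genans}) directly in the N--adapted frame (\ref{dder})--(\ref{ddif}), and then to impose the Einstein equations (\ref{efspaces}) with the diagonal source (\ref{source}). First I would read off from (\ref{genans}) the nonzero d--metric coefficients $g_{1}=g_{2}=e^{\psi(x^{k})}$, $h_{3}(x^{k},t)$ and $\omega^{2}h_{4}$, together with the N--connection coefficients $N_{i}^{3}=w_{i}$ and $N_{i}^{4}=n_{i}$, all expressed through the elongated operators $\mathbf{e}_{i}=\partial/\partial x^{i}-w_{i}\,\partial/\partial t-n_{i}\,\partial/\partial y^{4}$. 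Substituting these into (\ref{cdcc}) produces the coefficients $\widehat{L}_{jk}^{i}$ and $\widehat{C}_{jk}^{i}$ of $\widehat{\mathbf{D}}$; because the Sasaki--type d--metric is block diagonal (no $h$--$v$ cross terms) and the $h$--block depends only on $x^{k}$, most of these collapse to first derivatives of $\psi$, $h_{3}$, $h_{4}$ and of the $N_{i}^{a}$, which keeps the ensuing curvature computation tractable.

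Next I would assemble $\widehat{\mathbf{R}}_{\beta\delta}$ from the torsion and curvature of the normal d--connection recalled in Appendix \ref{asect1} (formulas (\ref{cart1})--(\ref{cart2})), contract to the four relevant Ricci blocks, and match them against the source. I expect the pure--horizontal block to reduce to a two--dimensional Laplacian in $(x^{1},x^{2})$ acting on $\psi$, yielding (\ref{4ep1a}); the pure--vertical block to reduce, after introducing the auxiliary potential $\phi=\ln|h_{4}^{\ast}/\sqrt{|h_{3}h_{4}|}|$, to the first--order relation (\ref{4ep2a}); the first mixed $h$--$v$ block to give the linear algebraic equation (\ref{4ep3a}) for $w_{i}$ with coefficients $\alpha_{i}=h_{4}^{\ast}\partial_{i}\phi$ and $\beta=h_{4}^{\ast}\phi^{\ast}$; and the second mixed block to give the linear second--order equation (\ref{4ep4a}) for $n_{i}$ with $\gamma=(\ln|h_{4}|^{3/2}/|h_{3}|)^{\ast}$. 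The conformal factor $\omega$ does not enter the first four equations: since the almost symplectic structure and $\widehat{\mathbf{D}}$ are built from the canonical data, $\omega$ must only satisfy the first--order compatibility condition (\ref{4ep5a}), i.e. its N--adapted horizontal derivative $\mathbf{e}_{k}\omega$ vanishes, which is exactly what prevents the replacement $h_{4}\mapsto\omega^{2}h_{4}$ from modifying the block structure of $\widehat{\mathbf{R}}_{\beta\delta}$.

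I expect the main obstacle to be bookkeeping in this second step rather than any conceptual difficulty: the raw curvature components carry many nonlinear terms in $h_{3}^{\ast},h_{4}^{\ast},h_{3}^{\ast\ast},h_{4}^{\ast\ast}$ and in the derivatives of $w_{i},n_{i}$, and the compact system (\ref{4ep1a})--(\ref{4ep5a}) only emerges once these are reorganized through the single combination $\phi$. The crucial algebraic simplification is to recognize that the vertical Ricci component factors as $h_{4}^{\ast}\,\phi^{\ast}$ with $\phi^{\ast}=h_{4}^{\ast\ast}/h_{4}^{\ast}-\frac{1}{2}(h_{3}^{\ast}/h_{3}+h_{4}^{\ast}/h_{4})$, so that dividing by $\phi^{\ast}$ yields (\ref{4ep2a}); this division is legitimate precisely under the standing hypothesis $h_{3,4}^{\ast}\neq 0$, and the same combination is what organizes $\alpha_{i},\beta,\gamma$ in (\ref{auxphi}). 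I would carry out the reduction explicitly for one representative component of each block and then invoke the symmetries $\widehat{\mathbf{R}}_{11}=\widehat{\mathbf{R}}_{22}$ and $\widehat{\mathbf{R}}_{33}=\widehat{\mathbf{R}}_{44}$ of the Sasaki--type d--metric to conclude that the remaining diagonal and mixed components impose no further conditions, completing the reduction to (\ref{4ep1a})--(\ref{4ep5a}).
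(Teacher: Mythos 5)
Your proposal takes essentially the same route as the paper's own proof, which states only that the argument ``consists from straightforward computations of the Ricci d--tensors for the d--connections under consideration and ansatz (\ref{genans})'' and defers the explicit calculus to Refs. \cite{vsol1,vrflg,vsgg} (for $\omega =1$) and \cite{vgensol,vefcgc} (for nontrivial $\omega $ and the Cartan connection). Your outline --- the N--adapted computation of the Ricci blocks of $\widehat{\mathbf{D}}$, the key factorization of the vertical component as $h_{4}^{\ast }\phi ^{\ast }$ with $\phi ^{\ast }=h_{4}^{\ast \ast }/h_{4}^{\ast }-\frac{1}{2}\left( h_{3}^{\ast }/h_{3}+h_{4}^{\ast }/h_{4}\right) $, and the decoupling of the conformal factor via $\mathbf{e}_{k}\omega =0$ --- is precisely the computation those references carry out, so it fills in correctly what the paper leaves implicit.
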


\begin{proof}
For $\omega =1,$ general proofs are contained in Refs. \cite%
{vrflg}. In \cite{vgensol}, there are provided\
respectively general solutions for nontrivial $\omega (x^{k},t,x^{4})$ and
for the so--called Cartan connection in Finsler geometry. Such a proof
consists of straightforward computations of the Ricci d--tensors for the
d--connections under consideration, and equation (\ref{genans}). For some
special cases when $h_{3}^{\ast }=0,$ or $h_{4}^{\ast }=0,$ and/or $\phi
^{\ast }=0,$ similar systems of equations can be derived. The rest of the proof is a lengthy and cumbersome computation that we leave to the reader. $\square $
\end{proof}

\vskip5pt

The above system of equations is with splitting of equations (not be
confused with splitting of variables) which allows us to construct exact
solutions in very general forms.

\begin{theorem}
\label{th6}If a metric $\mathbf{g}_{\alpha \beta }$ (\ref{sasakmetr}) in
general relativity and/or Einstein--Finsler gravity can be related via
nonholonomic transform to an ansatz (\ref{genans}), such a metric defines
respectively an Einstein and/or Einstein--Finsler space.
\end{theorem}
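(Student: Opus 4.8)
The plan is to treat Theorem \ref{th6} as the integrated converse of the preceding Proposition. That Proposition establishes, for $h_{3,4}^{\ast }\neq 0$, an \emph{equivalence} between the Einstein--Finsler equations (\ref{efspaces}) with source (\ref{source}) for the normal d--connection $\widehat{\mathbf{D}}$ (\ref{cdcc}) and the decoupled system (\ref{4ep1a})--(\ref{4ep5a}) on the ansatz (\ref{genans}). First I would impose the nondegeneracy conditions $h_{3,4}^{\ast }\neq 0$ and $\phi ^{\ast }\neq 0$, with $\phi $ as in (\ref{auxphi}), so that this equivalence applies; the borderline cases $h_{3}^{\ast }=0$, $h_{4}^{\ast }=0$ or $\phi ^{\ast }=0$ are covered by the analogous reduced systems noted in the previous proof.

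Then I would integrate (\ref{4ep1a})--(\ref{4ep5a}) by quadratures to exhibit explicit families of solutions in the ansatz form. Equation (\ref{4ep1a}) is a two--dimensional Poisson equation for $\psi (x^{k})$ sourced by $\ ^{h}\lambda $, solvable on each coordinate chart. For a freely chosen generating $h_{3}(x^{k},t)$, equation (\ref{4ep2a}) fixes $h_{4}$ (equivalently $\phi $) by one $t$--integration against $\ ^{v}\lambda $; the coefficients $\alpha _{i},\beta ,\gamma $ of (\ref{auxphi}) are then determined, so (\ref{4ep3a}) gives $w_{i}=-\alpha _{i}/\beta $ algebraically, (\ref{4ep4a}) is a linear second--order $t$--ODE integrated twice for each $n_{i}$, and (\ref{4ep5a}) is a first--order linear constraint singling out the admissible conformal factor $\omega $. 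By the equivalence of the preceding Proposition, every such ansatz metric satisfies $\widehat{\mathbf{R}}_{\ \beta \delta }=\mathbf{\Upsilon }_{\ \delta }^{\alpha }$, i.e. it is a nonholonomic Einstein space for $\widehat{\mathbf{D}}$.

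Next I would transfer this conclusion along the nonholonomic transform, in the two cases of the statement. Let $\mathbf{g}_{\alpha \beta }$ (\ref{sasakmetr}) be related to a solution of ansatz form by frame transforms (\ref{fralgeq}) and (\ref{frametr}). Since these transforms are nondegenerate and N--adapted, the d--tensor equation (\ref{efspaces}) is covariant under them, so the Einstein property passes to $\mathbf{g}_{\alpha \beta }$; this yields the Einstein case relevant to general relativity. For the Einstein--Finsler case I would take $\mathbf{g}_{\alpha \beta }$ of Hessian/Sasaki type (\ref{sasakmetr}) with the canonical $\ ^{c}N_{i}^{a}$ (\ref{clnc}) prescribed by a generating $\mathcal{L}(x,y)$ (or $\mathcal{F}(x,y)$); by Theorem \ref{thmr2} the same data then define the almost K\"{a}hler pair $\mathbf{\theta (\cdot ,\cdot )}\doteqdot \mathbf{g}(\mathbf{J\cdot ,\cdot })$ and $\widehat{\mathbf{D}}\equiv \ _{\theta }\widehat{\mathbf{D}}$, so the solution is recorded in Finsler variables and gives an Einstein--Finsler space.

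The step I expect to be the main obstacle is this covariance claim, namely that the nonholonomic transform genuinely preserves the Einstein condition in the required direction. Unlike an ordinary diffeomorphism, such a transform entangles the frame coefficients $\mathbf{e}_{\ \alpha }^{\alpha ^{\prime }}$ with the N--connection data, so one must check that $\widehat{\mathbf{R}}_{\ \beta \delta }$ and the source $\mathbf{\Upsilon }_{\ \delta }^{\alpha }$ are transported consistently, and in particular that for a given solution with six independent metric components one can solve the algebraic system (\ref{frametr}) for the sixteen coefficients $\mathbf{e}_{\ \alpha }^{\alpha ^{\prime }}$ with a suitable generating function. I would dispose of this by the counting and Bianchi--identity argument already used in the preceding proof: fixing four components by coordinate freedom and solving the remaining algebraic equations (\ref{frametr}) locally on an atlas covering $\mathbf{V}$.
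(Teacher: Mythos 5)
Your treatment of the Einstein--Finsler half is essentially the paper's argument: restrict to $h_{3,4}^{\ast }\neq 0$, use the decoupling of (\ref{4ep1a})--(\ref{4ep5a}), integrate by quadratures, and transfer along nondegenerate N--adapted frame transforms (the paper prescribes $\phi $ from (\ref{auxphi}) as the generating function and writes the explicit family (\ref{gsol1}), whereas you generate from $h_{3}$ and solve (\ref{4ep2a}) as a $t$--ODE for $h_{4}$; this is a harmless variation, though note that with your choice $\phi $ itself depends on the unknown $h_{4}$, so the quadrature is no longer explicit). The genuine gap is in your handling of the general relativity case. You claim that covariance of the d--tensor equation (\ref{efspaces}) under the nonholonomic transform "yields the Einstein case relevant to general relativity." It does not: equation (\ref{efspaces}) is the field equation for the \emph{normal d--connection} $\widehat{\mathbf{D}}$, and its covariance only propagates the nonholonomic (Einstein--Finsler) property. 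An Einstein space of general relativity requires the Ricci tensor of the \emph{Levi--Civita} connection $\nabla $ to be proportional to the metric, and $\widehat{\mathbf{D}}\neq \nabla $ in general; their difference is the distortion tensor (\ref{cdeftc}), which is built from the nontrivial torsion components (\ref{cdtors}) and does not vanish for a generic solution of (\ref{efspaces}).

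The paper's proof closes exactly this gap by a step your proposal omits: to extract GR solutions one must additionally impose the constraints (\ref{lccond}) (which force $\widehat{\mathbf{\Gamma }}_{\ \alpha \beta }^{\gamma }=\Gamma _{\ \alpha \beta }^{\gamma }$ with respect to N--adapted frames, so that the distortion contributions to the curvature drop out), take sources $\ ^{v}\lambda =\ ^{h}\lambda =\lambda =const$, and correspondingly restrict the integration functions in (\ref{gsol1}) --- e.g. $\ ^{2}n_{k}=0$, $\partial _{i}\ ^{1}n_{k}=\partial _{k}\ ^{1}n_{i}$, and the functional constraints (\ref{auxc1}) on the generating function $\phi $. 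Without this restriction the "respectively" in the statement of Theorem \ref{th6} is not proved: all you obtain is that every metric related to the ansatz is an Einstein--Finsler (nonholonomic Einstein) space, never that it is an Einstein space of general relativity. Your Bianchi/counting argument for solvability of (\ref{frametr}) addresses the existence of the frame coefficients, which is a different issue; it cannot substitute for the Levi--Civita conditions (\ref{lccond}).
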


\begin{proof}
We sketch the proof \ of 4--d and conditions $h_{3,4}^{\ast }\neq 0$ (in %
\cite{vgensol}, there are provided formulas for arbitrary dimensions
for different classes of d--connections). If $h_{4}^{\ast }\neq 0;\Upsilon
_{2}\neq 0,$ we get $\phi ^{\ast }\neq 0.$ Prescribing any nonconstant $\phi
=\phi (x^{i},t)$ as a generating function, we can construct exact solutions
of (\ref{4ep1a})--(\ref{4ep4a}): We solve step by step the two dimensional
Laplace equation, for $g_{1}=g_{2}=e^{\psi (x^{k})};$ integrate on $t,$ in
order to define $h_{3},$ $h_{4}$ and $n_{i};$ and finally solve the
algebraic equations, for $w_{i}.$ Finally,  the solutions are obtained
(computing consequently for a chosen $\phi (x^{k},t)$)
\begin{eqnarray}
g_{1} &=&g_{2}=e^{\psi (x^{k})},h_{3}=\pm \ \frac{|\phi ^{\ast }(x^{i},t)|}{%
\ ^{v}\lambda (x^{i},t)},\   \label{gsol1} \\
h_{4} &=&\ ^{0}h_{4}(x^{k})\pm \ 2\int \frac{(\exp [2\ \phi
(x^{k},t)])^{\ast }}{\ ^{v}\lambda (x^{i},t)}dt,\   \notag \\
w_{i} &=&-\partial _{i}\phi (x^{i},t)/\phi ^{\ast }(x^{i},t),\   \notag \\
n_{i} &=&\ ^{1}n_{k}\left( x^{i}\right) +\ ^{2}n_{k}\left( x^{i}\right) \int
[h_{3}(x^{i},t)/(\sqrt{|h_{4}(x^{i},t)|})^{3}]dt,  \notag
\end{eqnarray}%
where $\ ^{0}h_{4}(x^{k}),\ ^{1}n_{k}\left( x^{i}\right) $ and $\
^{2}n_{k}\left( x^{i}\right) $ are integration functions. In these formulas,
we have to fix a corresponding sign $\pm $ in order to generate a necessary
local signature of type $(++-+)$ for some chosen $\phi ,\Upsilon _{2}$ and $%
\Upsilon _{4}.$ The function $\omega ^{2}(x^{k},t,x^{4})$ can be an
arbitrary one constrained to the condition (\ref{4ep5a}). Such d--metrics
generate Einstein--Finsler spaces for metric compatible Finsler
d--connections.

To extract exact solutions in general relativity, i.e. for the Levi--Civita
connection, we have to constrain the coefficients (\ref{gsol1}) of metric (%
\ref{genans}) to satisfy the conditions (\ref{lccond}) and consider sources $%
\ ^{v}\lambda =\ ^{h}\lambda =\lambda =const.$ This imposes additional
constraints on the classes of generating and integration functions. We can
select a subclass of Einstein spaces when $\ ^{2}n_{k}\left( x^{i}\right) =0$
and $\ ^{1}n_{k}\left( x^{i}\right) $ are subjected to conditions $\
\partial _{i}\ ^{1}n_{k}=\partial _{k}\ ^{1}n_{i}. $ For $w_{i}=-\partial
_{i}\phi /\phi ^{\ast },$ we get \ functional constraints on $\phi
(x^{k},t), $ when
\begin{eqnarray}
\left( w_{i}[\phi ]\right) ^{\ast }+w_{i}[\phi ]\left( h_{4}[\phi ]\right)
^{\ast }+\partial _{i}h_{4}[\phi ]=0, &&  \notag \\
\partial _{i}\ w_{k}[\phi ]=\partial _{k}\ w_{i}[\phi ], &&  \label{auxc1}
\end{eqnarray}%
where, for instance, we denoted by $h_{4}[\phi ]$ the functional dependence
on $\phi .$

Finally, we emphasize that the generic off--diagonal ansatz (\ref{genans})
define a very general class of exact solutions of gravitational field
equations depending on all coordinates. Any metric related by frame
transforms with a solution (\ref{gsol1}), $\ g_{\alpha \beta }=\ \mathbf{e}%
_{\ \alpha }^{\alpha ^{\prime }}\ \mathbf{e}_{\ \beta }^{\beta ^{\prime }}\
^{\circ }\mathbf{g}_{\alpha ^{\prime }\beta ^{\prime }}$ also defines an
exact solution. And inversely, for very general assumptions, if a metric $%
g_{\alpha \beta }$ is a solution of Einstein/--Finsler equations, such a
metric can be such a way parametrized by certain prescribed N--anholonomic
distributions that an ansatz  $\ ^{\circ }\mathbf{g}_{\alpha ^{\prime
}\beta ^{\prime }}$ will be constructed (with corresponding generating and
integration functions). $\square $
\end{proof}

\subsubsection{Almost K\"{a}hler--Finsler variables}

In this subsection we show that for any given solution of Einstein equations,
$\ ^{\circ }\mathbf{g}$ (\ref{genans}), on a nonholonomic manifold/bundle $%
\mathbf{V,}$ $\dim \mathbf{V}=4,$ we can introduce such a parametrization
for the nonholonomic structure when the corresponding N--adapted geometric
objects induce an almost K\"{a}hler structure.
We put a left label ''$\ ^{\circ }"$ to values determining an exact solution
for an Einstein and/or Einstein--Finsler space.

Finsler variables \ are introduced as solutions (on any chart or neighborhood) of algebraic equations $\mathbf{f}_{\alpha \beta }=\
\mathbf{e}_{\ \alpha }^{\alpha ^{\prime }}\ \mathbf{e}_{\ \beta }^{\beta
^{\prime }}\ ^{\circ }\mathbf{g}_{\alpha ^{\prime }\beta ^{\prime }}$ (\ref%
{frametr}), where $\mathbf{f}_{\alpha \beta }$ is a Sasaki type d--metric (%
\ref{sasakmetr}) with coefficients generated by a regular $L=F^{2},$ or $%
\mathcal{L=F}^{2},$ in the form $f_{ab}$ (\ref{lm}) and $\ ^{c}N_{i}^{a}$ (%
\ref{clnc}) and $\ ^{\circ }\mathbf{g}_{\alpha ^{\prime }\beta ^{\prime }}$
is a general solution of type (\ref{genans}). We can fix the
parameterizations, fix certain types of generating and integration functions,
additional frame/coordinate transforms etc when some solutions of $\ \mathbf{%
e}_{\ \alpha }^{\alpha ^{\prime }}$ are in a ''simple'' diagonal form. For
instance, we can write in explicit h-- and v--components
\begin{eqnarray}
\ f_{ij} &=&e_{\ i}^{i^{\prime }}e_{\ j}^{j^{\prime }}\ ^{\circ
}g_{i^{\prime }j^{\prime }}\mbox{\ and  \ }~\ f_{ab}=e_{\ a}^{a^{\prime
}}e_{\ b}^{b^{\prime }}\ \ ^{\circ }g_{a^{\prime }b^{\prime }},
\label{auxeq1} \\
\ \ ^{\circ }N_{i^{\prime }}^{a^{\prime }} &=&e_{i^{\prime }}^{\ i}e_{\
a}^{a^{\prime }}\ ^{c}N_{i}^{a},\mbox{\ or \ }\ ^{c}N_{i}^{a}=e_{i}^{\
i^{\prime }}e_{\ a^{\prime }}^{a}\ \ ^{\circ }N_{i^{\prime }}^{a^{\prime }},
\label{auxeq2}
\end{eqnarray}%
were, for instance, $e_{a^{\prime }\ }^{\ a}$ is inverse to $e_{\
a}^{a^{\prime }}.$ Let us consider that $\ ^{\circ }g_{i^{\prime }j^{\prime
}}=diag[\ ^{\circ }g_{1^{\prime }},\ ^{\circ }g_{2^{\prime }}],$ $%
h_{a^{\prime }b^{\prime }}=diag[\ ^{\circ }h_{3^{\prime }},\ ^{\circ
}h_{4^{\prime }}]$ and $\ ^{\circ }N_{i^{\prime }}^{a^{\prime }}=\{\ ^{\circ
}N_{i^{\prime }}^{3^{\prime }}=w_{i^{\prime }},\ ^{\circ }N_{i^{\prime
}}^{4^{\prime }}=n_{i^{\prime }}\}$ and (pseudo) Finsler data are $\ f_{ij},$
$\ f_{ab}$ and $\ ^{c}N_{i}^{a}$ $=\{\ ^{c}N_{i}^{3}=\ ^{c}w_{i},\
^{c}N_{i}^{4}=\ ^{c}n_{i}\},$ being parameterized by diagonal matrices, $\
f_{ij}=diag[f_{1},f_{2}]$ and $\ f_{ab}=diag[f_{3},f_{4}],$ if the
generating function is of type $F=\ ^{1}F(x^{i},y^{3})$ $+\
^{2}F(x^{i},y^{4})$ \ for some homogeneous (respectively, on $y^{3}$ and $%
y^{4})$ functions $\ ^{1}F$ and $\ ^{2}F.$
We may use arbitrary generating functions $F(x^{i},y^{a})$ but this will
result in off--diagonal (pseudo) Finsler metrics in N--adapted bases, which
would request a more cumbersome matrix calculus.

For simplicity, we can fix such nonholonomic distributions (fixing
correspondingly some generating/integration functions etc) when the
conditions (\ref{auxeq1}) are satisfied for a diagonal representation for $%
\mathbf{e}_{\ \alpha }^{\alpha ^{\prime }},$
\begin{equation*}
e_{\ 1}^{1^{\prime }}=\pm \sqrt{\left| \frac{\ f_{1}}{\ ^{\circ
}g_{1^{\prime }}}\right| },e_{\ 2}^{2^{\prime }}=\pm \sqrt{\left| \frac{\
f_{2}}{\ ^{\circ }g_{2^{\prime }}}\right| }\ ,e_{\ 3}^{3^{\prime }}=\pm
\sqrt{\left| \frac{\ f_{3}}{\ ^{\circ }h_{3^{\prime }}}\right| },e_{\
4}^{4^{\prime }}=\pm \sqrt{\left| \frac{\ f_{4}}{\ ^{\circ }h_{4^{\prime }}}%
\right| }.
\end{equation*}%
For any chosen values $\ f_{i},\ f_{a}$ and $\ ^{c}w_{i},^{c}n_{i}$ and
given $\ ^{\circ }g_{i^{\prime }}$ and $\ ^{\circ }h_{a^{\prime }},$ we can
compute $\ ^{\circ }w_{i^{\prime }}$ and $\ \ ^{\circ }n_{i^{\prime }}$ as
\begin{eqnarray*}
\ ^{\circ }w_{1^{\prime }} &=&\pm \sqrt{\left| \frac{\ ^{\circ }g_{1^{\prime
}}\ f_{3}}{\ ^{\circ }h_{3^{\prime }}\ f_{1}}\right| }\ ^{c}w_{1},\ \
^{\circ }w_{2^{\prime }}=\pm \sqrt{\left| \frac{\ ^{\circ }g_{2^{\prime }}\
f_{3}}{\ ^{\circ }h_{3^{\prime }}\ f_{2}}\right| }\ ^{c}w_{2}, \\
\ ^{\circ }n_{1^{\prime }} &=&\pm \sqrt{\left| \frac{\ ^{\circ }g_{1^{\prime
}}\ f_{4}}{\ ^{\circ }h_{4^{\prime }}\ f_{1}}\right| }\ ^{c}n_{1},\ \
^{\circ }n_{2^{\prime }}=\pm \sqrt{\left| \frac{\ ^{\circ }g_{2^{\prime }}\
f_{4}}{\ ^{\circ }h_{4^{\prime }}\ f_{2}}\right| }\ ^{c}n_{2},
\end{eqnarray*}%
corresponding to solutions of equations (\ref{auxeq2}).

\begin{corollary}
Any class of exact solutions of Einstein equations in general relativity
and/or Einstein--Finsler gravity, depending on corresponding sets of
generating/integration functions, defines a respective class of canonical
almost K\"{a}hler structures.
\end{corollary}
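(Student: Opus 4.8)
The plan is to reduce the assertion to two constructions already established: the encoding of an arbitrary exact solution into (pseudo) Finsler variables carried out just above, and the canonical almost K\"{a}hler modelling of Theorem \ref{thmr2}. Concretely, for each member of the family of solutions I would produce a regular generating function and then apply the canonical passage $\mathcal{L}\mapsto (\theta ,\mathbf{J},\widehat{\mathbf{D}})$, checking that nothing in that passage depends on the particular solution beyond the choice of $\mathcal{L}$.

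First I would take a solution $\ ^{\circ }\mathbf{g}$ of the form (\ref{genans}) with coefficients (\ref{gsol1}), guaranteed to exist by Theorem \ref{th6} and depending on the prescribed generating/integration functions $\psi ,\phi ,\ ^{0}h_{4},\ ^{1}n_{k},\ ^{2}n_{k},\omega $. On each chart of an atlas covering $\mathbf{V}$ I would solve the algebraic system (\ref{frametr}), equivalently its $h$-- and $v$--components (\ref{auxeq1}) and (\ref{auxeq2}), for the vielbein $\mathbf{e}_{\ \alpha }^{\alpha ^{\prime }}$; the explicit diagonal choice displayed above, together with the resulting formulas for $\ ^{\circ }w_{i^{\prime }}$ and $\ ^{\circ }n_{i^{\prime }}$, shows this system is consistent once a generating function of the split form $F=\ ^{1}F(x^{i},y^{3})+\ ^{2}F(x^{i},y^{4})$ is fixed. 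This produces a Sasaki type d--metric $\mathbf{f}_{\alpha \beta }$ (\ref{sasakmetr}) whose fundamental tensor $f_{ab}$ is the Hessian (\ref{lm}) of $\mathcal{L}=\mathcal{F}^{2}$ and whose N--connection is the canonical $\ ^{c}N_{i}^{a}$ (\ref{clnc}).

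With $\mathcal{L}$ so fixed, the canonical almost K\"{a}hler data are read off mechanically. I would define the almost complex structure $\mathbf{J}$ and the almost symplectic form $\theta (\mathbf{X},\mathbf{Y})\doteqdot \mathbf{g}(\mathbf{JX},\mathbf{Y})$ as in (\ref{sform}); the local identity $\theta =d\omega $ with $\omega =\frac{1}{2}\frac{\partial \mathcal{L}}{\partial y^{n+i}}dx^{i}$ gives $d\theta =0$, so by Theorem \ref{thmr2} the triple defines an almost K\"{a}hler space $\ ^{\mathcal{L}}\mathbf{H}^{2n}=\ ^{\mathcal{L}}\mathbf{K}^{2n}$. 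The normal d--connection $\widehat{\mathbf{D}}$ (\ref{cdcc}) is then canonically and uniquely determined by $\mathbf{g}$ and $\mathcal{L}$, and by Theorem \ref{th3b} it is the unique N--adapted almost symplectic d--connection, $\widehat{\mathbf{D}}\equiv \ _{\theta }\widehat{\mathbf{D}}$, with $\ _{\theta }\widehat{\mathbf{D}}\theta =0$; this completes the canonical structure $(\theta ,\mathbf{J},\widehat{\mathbf{D}})$ attached to the chosen solution. Since distinct admissible choices of the generating/integration data yield distinct $\mathcal{L}$ (up to coordinate and frame freedom), a whole class of solutions maps to a corresponding class of such structures.

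The main obstacle is the global solvability and regularity of the frame equations (\ref{frametr}): the counting argument (at most six independent components of $\ ^{\circ }\mathbf{g}_{\alpha ^{\prime }\beta ^{\prime }}$ against sixteen vielbein coefficients, four fixable by coordinates) guarantees enough freedom pointwise, but one must choose $F$ so that the Hessian $f_{ab}$ stays non-degenerate and of the correct signature throughout $\widetilde{TM}$, i.e. so that $\mathcal{L}$ genuinely satisfies Definition \ref{ldls}. Verifying that these regularity conditions can be met chart-by-chart and patched over the atlas, rather than only formally at a point, is the delicate step; once it is settled, the remainder is the routine application of Theorems \ref{thmr2} and \ref{th3b}. $\square $
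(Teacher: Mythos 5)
Your proposal is correct and follows essentially the same route as the paper: the paper's proof likewise combines Theorems \ref{thmr2}, \ref{th3b} and \ref{th6} with the frame--transform equivalences (\ref{frametr}), (\ref{auxeq1}), (\ref{auxeq2}) to pass from an exact solution $\ ^{\circ }\mathbf{g}$ through Finsler variables $\left( \mathbf{f}_{\alpha \beta },\ ^{c}N_{i}^{a},\ ^{c}\widehat{\mathbf{D}}\right) $ to the canonical data $\left( \theta ,\mathbf{J},\ _{\theta }\widehat{\mathbf{D}}\right) $. Your explicit flagging of the chart--by--chart regularity of the Hessian is in fact more careful than the paper's own (very terse) argument; the only point the paper adds that you omit is the remark that for Einstein manifolds in general relativity one imposes the constraints (\ref{lccond}) so that $\ ^{c}\widehat{\mathbf{D}}\rightarrow \nabla $.
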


\begin{proof}
It is a result of Theorems \ref{thmr2}, \ref{th3b} and \ref{th6}. We have
classes of equivalence for data (an explicit example is given by formulas (%
\ref{auxeq1}) and (\ref{auxeq2}))
\begin{eqnarray*}
\left( \mathbf{g}_{\alpha \beta },N_{i}^{a},\mathbf{D}_{\alpha }\right)
&\sim &\left( \ ^{\circ }\mathbf{g}_{\alpha \beta },\ ^{\circ }N_{i}^{a},\
^{\circ }\widehat{\mathbf{D}}_{\alpha }\right) \sim \\
\left( \ \mathbf{f}_{\alpha \beta },\ ^{c}N_{i}^{a},\ ^{c}\widehat{\mathbf{D}%
}_{\alpha }\right) &\sim &\left( \ \ ^{\circ }\mathbf{\theta (\cdot ,\cdot )}%
\doteqdot \ ^{\circ }\mathbf{g}\left( \mathbf{J\cdot ,\cdot }\right) ,\ \
^{\circ }\mathbf{J,}\ ^{c}\widehat{\mathbf{D}}\equiv \ _{\theta }\widehat{%
\mathbf{D}}\right) .
\end{eqnarray*}
For Einstein manifolds, in general relativity, i.e. to encode data with the
Levi--Civita connection $\nabla ,$ we have to consider additional
constraints of type (\ref{lccond}), when $\ ^{c}\widehat{\mathbf{D}}%
\rightarrow \nabla .$ $\square $
\end{proof}

\vskip5pt

The above mentioned classes of exact solutions expressed in various forms
with nonholonomic/ Finsler / almost K\"{a}hler etc variables depend on the
type of generating/integration functions we chose, and what type of, for
instance, group/topological etc symmetries we prescribe for our geometric
and/or physical models. In a series of our works, see reviews of results in %
\cite{vrflg}, we constructed a number of examples with
Finsler like, and non--Finsler, black ellipsoid, Taub NUT, solitonic,
noncommutative, fractional etc gravitational solutions. It is an important
task to elaborate certain criteria for algebraic classifications of such
families of solutions; the geometric constructions should not depend
explicitly on the type of generating/integration functions.

\subsubsection{Algebraic index classification of Einstein and Finsler spaces}

In addition to Petrov's algebraic classification of Riemannian and Weyl
curvatures \cite{petrov} and related gravitational field configurations in
general relativity, we may provide a different algebraic classification of
gravitational fields, using Atiyah--Singer theorem for nonholonomic almost K%
\"{a}hler manifolds. Such an index classification is related to Fedosov
deformation quantization and can be performed for standard Einstein fields
and modifications.

\begin{claim}
Two solutions, $(\ _{1}^{\circ }\mathbf{g,}\ _{1}^{c}\widehat{\mathbf{D}})$
and $(\ _{2}^{\circ }\mathbf{g,}\ _{2}^{c}\widehat{\mathbf{D}}),$ of
Einstein equations (\ref{efspaces}) inducing two different nonholonomic
almost K\"{a}hler structures, $(\ _{1}^{\circ }\mathbf{\theta },\ _{\theta
}^{1}\widehat{\mathbf{D}})$ and $(\ _{2}^{\circ }\mathbf{\theta },\ _{\theta
}^{2}\widehat{\mathbf{D}}),$ i.e. two different $\ _{1}^{\mathcal{L}}\mathbf{%
K}^{2n}$ and $\ _{2}^{\mathcal{L}}\mathbf{K}^{2n},$ are noholonomically
equivalent and characterized by the same model of Fedosov quantization if
such solutions can be related via nonholonomic frame transforms, $\
_{1}^{\circ }\mathbf{g}_{\alpha \beta }=\ \mathbf{e}_{\ \alpha }^{\alpha
^{\prime }}\ \mathbf{e}_{\ \beta }^{\beta ^{\prime }}\ _{2}^{\circ }\mathbf{g%
}_{\alpha ^{\prime }\beta ^{\prime }},$ and they have the same cohomology classes
\begin{equation*}
cl(\ _{1}^{0}k)=cl(\ _{2}^{0}k),
\end{equation*}%
for any elements $\ _{1}^{0}k\in K_{0}(\ _{c}^{+}\mathcal{A}_{1})$ and $\
_{2}^{0}k\in K_{0}(\ _{c}^{+}\mathcal{A}_{2})$ (respectively, for $\ _{c}^{+}%
\mathcal{A}_{1}\mathcal{=}C(\ _{1}^{\mathcal{L}}\mathbf{K}^{2n})[[v]]$ and $%
\ _{c}^{+}\mathcal{A}_{2}\mathcal{=}C(\ _{2}^{\mathcal{L}}\mathbf{K}%
^{2n})[[v]]$ $),$ when
\begin{eqnarray*}
cl(\ _{1}^{0}k) &=&\left[ \widehat{A}(\ _{1}^{\mathcal{L}}\mathbf{K}^{2n})\
\exp \left( \frac{\ _{1}^{W}\widehat{\mathcal{C}}}{v}\right) \ ch\left(
\mathcal{K}(\ _{1}^{0}k)\right) \right] _{n+n}, \\
cl(\ _{2}^{0}k) &=&\left[ \widehat{A}(\ _{2}^{\mathcal{L}}\mathbf{K}^{2n})\
\exp \left( \frac{\ _{2}^{W}\widehat{\mathcal{C}}}{v}\right) \ \ ch\left(
\mathcal{K}(\ _{2}^{0}k)\right) \right] _{n+n},
\end{eqnarray*}%
$\ _{1}^{W}\widehat{\mathcal{C}}=-\ _{1}\mathbf{\theta }+\ _{1}^{v}\Omega $
and $\ _{2}^{W}\widehat{\mathcal{C}}=-\ _{2}\mathbf{\theta }+\
_{2}^{v}\Omega .$
\end{claim}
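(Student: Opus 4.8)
The plan is to deduce the Claim from the main index formula (\ref{mainf}) of Theorem \ref{th5} together with the equivalence criterion for Fedosov--Finsler d--connections recorded in statement 3 of Corollary \ref{corolrem}, the frame transform serving only to identify the two packages of quantization data. First I would use the equivalence chain established just before this Claim (resting on Theorems \ref{thmr2}, \ref{th3b} and \ref{th6}) to promote the hypothesis $\ _{1}^{\circ }\mathbf{g}_{\alpha \beta }=\ \mathbf{e}_{\ \alpha }^{\alpha ^{\prime }}\ \mathbf{e}_{\ \beta }^{\beta ^{\prime }}\ _{2}^{\circ }\mathbf{g}_{\alpha ^{\prime }\beta ^{\prime }}$ from a relation between metrics to a relation between the full canonical data $\left( \ _{1}^{\circ }\mathbf{g},\ _{1}^{\circ }N,\ _{1}^{c}\widehat{\mathbf{D}}\right) $ and $\left( \ _{2}^{\circ }\mathbf{g},\ _{2}^{\circ }N,\ _{2}^{c}\widehat{\mathbf{D}}\right) $, hence between the induced almost K\"{a}hler structures $\ _{1}^{\mathcal{L}}\mathbf{K}^{2n}$ and $\ _{2}^{\mathcal{L}}\mathbf{K}^{2n}$; this already yields the first assertion, nonholonomic equivalence.

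Next I would lift the frame transform to the Fedosov level. Since $\mathbf{e}_{\ \alpha }^{\alpha ^{\prime }}$ is N--adapted, i.e. it preserves the Whitney--sum splitting (\ref{whitney}) and the attendant $h$-- and $v$--filtrations (\ref{filtr}), it induces an isomorphism of the Weyl d--algebra bundles $\mathbf{W}(End_{\ _{1}\mathbf{V}})\simeq \mathbf{W}(End_{\ _{2}\mathbf{V}})$ intertwining the Wick products (\ref{fpr}) and carrying the Fedosov--Finsler normal d--connection $\ _{r}^{d}\widehat{\mathcal{D}}_{1}$ to one equivalent, in the sense of (\ref{condeq}), to $\ _{r}^{d}\widehat{\mathcal{D}}_{2}$. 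Under this isomorphism the distinguished Fedosov--Weyl curvatures $\ _{1}^{W}\widehat{\mathcal{C}}=-\ _{1}\mathbf{\theta }+\ _{1}^{v}\Omega $ and $\ _{2}^{W}\widehat{\mathcal{C}}=-\ _{2}\mathbf{\theta }+\ _{2}^{v}\Omega $ (\ref{fwdc}) pull back to cohomologous forms, so that statement 3 of Corollary \ref{corolrem} applies and gives that the two quantizations---the star products (\ref{starpb}) in $\ _{c}^{+}\mathcal{A}_{1}$ and $\ _{c}^{+}\mathcal{A}_{2}$ obtained via the isomorphisms $\chi $ (\ref{isom})---are equivalent. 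This establishes the second assertion, the same model of Fedosov quantization.

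Finally I would apply Theorem \ref{th5} to each solution to obtain the two displayed expressions for $cl(\ _{1}^{0}k)$ and $cl(\ _{2}^{0}k)$. Because the $\widehat{A}$--genus, the exponential factor built from $\ ^{W}\widehat{\mathcal{C}}$, and the Chern character $ch(\mathcal{K}(\cdot ))$ of the principal symbol are all assembled from frame--transform--invariant or purely cohomological data, the frame--transform isomorphism identifies the three factors term by term; taking the top--degree components $[\cdot ]_{n+n}$ then forces $cl(\ _{1}^{0}k)=cl(\ _{2}^{0}k)$ for corresponding classes $\ _{1}^{0}k\in K_{0}(\ _{c}^{+}\mathcal{A}_{1})$ and $\ _{2}^{0}k\in K_{0}(\ _{c}^{+}\mathcal{A}_{2})$, in agreement with the criterion asserted in the Claim.

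The hard part will be the lift in the second step: one must verify that the transform $\mathbf{e}_{\ \alpha }^{\alpha ^{\prime }}$, which a priori only relates the metrics, genuinely respects the N--connection splitting (\ref{whitney}) and the $(2[v]+[z^{i}])$-- and $(2[v]+[z^{a}])$--adic filtrations (\ref{filtr}) throughout, so that it intertwines the Wick products and sends $\ _{r}^{d}\widehat{\mathcal{D}}_{1}$ to a gauge (\ref{condeq}) of $\ _{r}^{d}\widehat{\mathcal{D}}_{2}$. The subtlety is that such a transform may also alter the generating function $\mathcal{L}$ and thereby the canonical N--connection $\ ^{c}N_{i}^{a}$; one therefore has to check that the induced map on Fedosov--Weyl curvatures preserves the de Rham class in $H_{DR}^{2}(\ ^{\mathcal{L}}\mathbf{K}^{2n})[[v]]$, after which the topological invariance of cohomology and Corollary \ref{corolrem} close the argument.
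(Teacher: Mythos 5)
You should first be aware that the paper does not actually prove this statement: it is deliberately labelled a \textbf{Claim}, and what follows it in the text are only ``two important motivations,'' not a proof. More importantly, in the Claim the equality $cl(\ _{1}^{0}k)=cl(\ _{2}^{0}k)$ is part of the \emph{hypothesis}: the Claim proposes a criterion under which two solutions are \emph{declared} nonholonomically equivalent and characterized by the same Fedosov quantization, namely that they be related by a nonholonomic frame transform \emph{and} that their index classes, computed via the formula of Theorem \ref{th5}, coincide. Your proposal inverts this logical structure: you take the frame transform as the sole hypothesis and attempt, in your third step, to \emph{derive} $cl(\ _{1}^{0}k)=cl(\ _{2}^{0}k)$ as a consequence. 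That is a different (and strictly stronger) statement than the one in the paper, and it is exactly where your argument breaks.

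The gap is the one you yourself flag as ``the hard part,'' and it is not a technical verification that can be completed: the two solutions carry different generating functions $\mathcal{L}_{1},\mathcal{L}_{2}$, hence different canonical N--connections, different almost symplectic forms $\ _{1}\mathbf{\theta },\ _{2}\mathbf{\theta }$, different normal d--connections, and independent closed series $\ _{1}^{v}\Omega ,\ _{2}^{v}\Omega $. A frame transform $\mathbf{e}_{\ \alpha }^{\alpha ^{\prime }}$ relating only the metric coefficients gives no control whatsoever on whether $\ _{1}^{W}\widehat{\mathcal{C}}=-\ _{1}\mathbf{\theta }+\ _{1}^{v}\Omega $ and $\ _{2}^{W}\widehat{\mathcal{C}}=-\ _{2}\mathbf{\theta }+\ _{2}^{v}\Omega $ represent the same class in $H_{DR}^{2}(\cdot )[[v]]$, which, by statement 3 of Corollary \ref{corolrem}, is precisely what equivalence of the two Fedosov--Finsler quantizations requires; consequently the three factors of the index formula (\ref{mainf}) are not ``identified term by term'' by the transform, and the top--degree components need not agree. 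This is why the paper makes coincidence of the cohomology classes an independent condition: its role, as the paper's own motivations explain, is to serve as the computable invariant that \emph{decides} whether two nonholonomic quantizations agree, not as a consequence of the classical frame relation. Your first step (nonholonomic equivalence of the classical data via the chain resting on Theorems \ref{thmr2}, \ref{th3b} and \ref{th6}) is fine and matches the Corollary preceding the Claim; the remainder would need to be restructured so that both the frame transform and the class equality are assumed, and the only thing left to argue is that together they force equivalence of the star products --- for instance by showing that equality of the index classes over enough elements of $K_{0}$ pins down the class $[\ ^{W}\widehat{\mathcal{C}}]$, after which Corollary \ref{corolrem} closes the argument.
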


Let us provide two important motivations for such a claim.  Different
classes of exact solutions in classical gravity are defined by different
generating/integration functions and associated nonholonomic structures.
Under general frame/coordinate transforms, the parameterizations for
fundamental geometric objects change substantially. Such generic nonlinear
gravitational systems can be characterized topologically via corresponding
elliptic operators and their cohomology classes. This is also  an explicit
application of the algebraic index theorem in quantum gravity which allows
us to decide if two quantizations (in a generalized Fedosov sense) of some
nonholonomic gravitational configurations possess the same cohomological
characteristics, or not.

\appendix

\setcounter{equation}{0} \renewcommand{\theequation}
{A.\arabic{equation}} \setcounter{subsection}{0}
\renewcommand{\thesubsection}
{A.\arabic{subsection}}

\section{Einstein--Finsler Gravity in Almost Symplectic Variables}

\label{asect1}

Gravitational field equations in Einstein gravity on a (pseudo) Riemannian $%
\mathbf{V}$, and for Finsler gravity on $TM$, can be written
equivalently in terms of the Levi--Civita connection
$\nabla $, and using the almost symplectic connection
$\ _{\theta }\widehat{\mathbf{D}}$, both completely
defined by  the same fundamental geometric objects.
We summarize necessary formulas from \cite{vegpla,vlfedq,vrflg}.
We use the term Einstein--Finsler gravity for two different classes of gravity theories: the first one is for the usual general relativity written equivalently in Finsler variables and the second one is for Finsler gravity models on tangent bundles enabled with metric compatible d--connections.

\subsection{Torsion and curvature of normal d--connection}

Any d--connection $\mathbf{D}$ is characterized respectively by its torsion
and curvature tensors,
\begin{eqnarray}
\mathbf{T}(\mathbf{X},\mathbf{Y}) &\doteqdot &\mathbf{D}_{\mathbf{X}}\mathbf{%
Y}-\mathbf{D}_{\mathbf{Y}}\mathbf{X}-[\mathbf{X},\mathbf{Y}],  \label{ators}
\\
\mathbf{R}(\mathbf{X},\mathbf{Y})\mathbf{Z} &\doteqdot &\mathbf{D}_{\mathbf{X%
}}\mathbf{D}_{\mathbf{Y}}\mathbf{Z}-\mathbf{D}_{\mathbf{Y}}\mathbf{D}_{%
\mathbf{X}}\mathbf{Z}-\mathbf{D}_{[\mathbf{X},\mathbf{Y}]}\mathbf{Z},
\label{acurv}
\end{eqnarray}%
where $[\mathbf{X},\mathbf{Y}]\doteqdot \mathbf{XY}-\mathbf{YX,}$ for any
vectors $\mathbf{X}$ and $\mathbf{Y}.$

For the normal/almost symplectic d--connection $\widehat{\mathbf{D}}=\
_{\theta }\widehat{\mathbf{D}}=\{\widehat{\mathbf{\Gamma }}_{\beta \gamma
}^{\alpha }\}$ (\ref{cdcc}), we can consider the 1--form $\widehat{\mathbf{%
\Gamma }}_{j}^{i}=\widehat{L}_{jk}^{i}e^{k}+\widehat{C}_{jk}^{i}\mathbf{e}%
^{k},$ where $e^{k}=dx^{k}$ and $\mathbf{e}^{k}=dy^{k}+N_{i}^{k}dx^{k},$ we
can prove that the Cartan structure equations are satisfied,%
\begin{equation}
de^{k}-e^{j}\wedge \widehat{\mathbf{\Gamma }}_{j}^{k}=-\widehat{\mathcal{T}}%
^{i},\ d\mathbf{e}^{k}-\mathbf{e}^{j}\wedge \widehat{\mathbf{\Gamma }}%
_{j}^{k}=-\ ^{v}\widehat{\mathcal{T}}^{i},  \label{cart1}
\end{equation}%
and
\begin{equation}
d\widehat{\mathbf{\Gamma }}_{j}^{i}-\widehat{\mathbf{\Gamma }}_{j}^{h}\wedge
\widehat{\mathbf{\Gamma }}_{h}^{i}=-\widehat{\mathcal{R}}_{\ j}^{i}.
\label{cart2}
\end{equation}

The torsion 2--form $\widehat{\mathcal{T}}^{\alpha }=(\widehat{%
\mathcal{T}}^{i},\ ^{v}\widehat{\mathcal{T}}^{i}) =\widehat{\mathbf{T}}%
_{\ \tau \beta }^{\alpha }\ \mathbf{e}^{\tau }\wedge \mathbf{e}^{\beta }$ in
(\ref{cart1}) is computed:
\begin{equation*}
\widehat{\mathcal{T}}^{i}=\widehat{C}_{jk}^{i}e^{j}\wedge \mathbf{e}^{k},\
^{v}\widehat{\mathcal{T}}^{i}=\frac{1}{2}\Omega _{kj}^{i}e^{k}\wedge e^{j}+(%
\frac{\partial N_{k}^{i}}{\partial y^{j}}-\widehat{L}_{\ kj}^{i})e^{k}\wedge
\mathbf{e}^{j}.  \label{tform}
\end{equation*}%
i.e. the coefficients of torsion $\widehat{\mathbf{T}}_{\beta \gamma
}^{\alpha }$ (\ref{ators}) are
\begin{equation}
\widehat{T}_{jk}^{i}=0,\widehat{T}_{jc}^{i}=\widehat{C}_{\ jc}^{i},\widehat{T%
}_{ij}^{a}=\Omega _{ij}^{a},\widehat{T}_{ib}^{a}=e_{b}N_{i}^{a}-\widehat{L}%
_{\ bi}^{a},\widehat{T}_{bc}^{a}=0.  \label{cdtors}
\end{equation}%
It should be noted that $\widehat{\mathbf{T}}$ vanishes on h- and
v--subspaces, i.e. $\widehat{T}_{jk}^{i}=0$ and $\widehat{T}_{bc}^{a}=0,$
and the nontrivial h--v--components are induced nonholonomically and defined
canonically by component $\mathbf{g}$ and $\mathcal{L}.$

The curvature 2--form from (\ref{cart2}) of $\ \widehat{\mathbf{\Gamma }}%
_{\beta \gamma }^{\alpha }$ is computed%
\begin{equation}
\widehat{\mathcal{R}}_{\ \gamma }^{\tau }=\widehat{\mathbf{R}}_{\ \gamma
\alpha \beta }^{\tau }\ \mathbf{e}^{\alpha }\wedge \ \mathbf{e}^{\beta }=%
\frac{1}{2}\widehat{R}_{\ jkh}^{i}e^{k}\wedge e^{h}+\widehat{P}_{\
jka}^{i}e^{k}\wedge \mathbf{e}^{a}+\frac{1}{2}\ \widehat{S}_{\ jcd}^{i}%
\mathbf{e}^{c}\wedge \mathbf{e}^{d},  \label{cform}
\end{equation}%
when the nontrivial N--adapted coefficients of curvature $\ \widehat{\mathbf{%
R}}_{\ \beta \gamma \tau }^{\alpha }$ (\ref{acurv}) are
\begin{eqnarray}
\widehat{R}_{\ hjk}^{i} &=&\mathbf{e}_{k}\widehat{L}_{\ hj}^{i}-\mathbf{e}%
_{j}\widehat{L}_{\ hk}^{i}+\widehat{L}_{\ hj}^{m}\widehat{L}_{\ mk}^{i}-%
\widehat{L}_{\ hk}^{m}\widehat{L}_{\ mj}^{i}-\widehat{C}_{\ ha}^{i}\Omega
_{\ kj}^{a},  \label{cdcurv} \\
\widehat{P}_{\ jka}^{i} &=&e_{a}\widehat{L}_{\ jk}^{i}-\widehat{\mathbf{D}}%
_{k}\widehat{C}_{\ ja}^{i},\ \widehat{S}_{\ bcd}^{a}=e_{d}\widehat{C}_{\
bc}^{a}-e_{c}\widehat{C}_{\ bd}^{a}+\widehat{C}_{\ bc}^{e}\widehat{C}_{\
ed}^{a}-\widehat{C}_{\ bd}^{e}\widehat{C}_{\ ec}^{a}.  \notag
\end{eqnarray}

The N--adapted coefficients of the normal d--connection $\widehat{\mathbf{D}}%
=\ _{\theta }\widehat{\mathbf{D}}=\{\widehat{\mathbf{\Gamma }}_{\beta \gamma
}^{\alpha }\}$ and of the the Levi--Civita connection \ $\nabla =\{\
_{\shortmid }\Gamma _{\ \alpha \beta }^{\gamma }\}$ are related via formulas
\begin{equation}
\ _{\shortmid }\Gamma _{\ \alpha \beta }^{\gamma }=\widehat{\mathbf{\Gamma }}%
_{\ \alpha \beta }^{\gamma }+\ _{\shortmid }Z_{\ \alpha \beta }^{\gamma },
\label{cdeft}
\end{equation}%
where the distortion d--tensor $\ _{\shortmid }Z_{\ \alpha \beta }^{\gamma }$
is computed
\begin{eqnarray}
\ _{\shortmid }Z_{jk}^{a} &=&-\widehat{C}_{jb}^{i}g_{ik}g^{ab}-\frac{1}{2}%
\Omega _{jk}^{a},~_{\shortmid }Z_{bk}^{i}=\frac{1}{2}\Omega
_{jk}^{c}g_{cb}g^{ji}-\Xi _{jk}^{ih}~\widehat{C}_{hb}^{j},  \notag \\
\ _{\shortmid }Z_{jk}^{i} &=&0,\ _{\shortmid }Z_{bk}^{a}=~^{+}\Xi
_{cd}^{ab}~~\ \widehat{T}_{bk}^{c},\ _{\shortmid }Z_{kb}^{i}=\frac{1}{2}%
\Omega _{jk}^{a}g_{cb}g^{ji}+\Xi _{jk}^{ih}~\widehat{C}_{hb}^{j},
\label{cdeftc} \\
\ _{\shortmid }Z_{jb}^{a} &=&-~^{-}\Xi _{cb}^{ad}~~\widehat{T}_{dj}^{c},\
_{\shortmid }Z_{bc}^{a}=0,\ _{\shortmid }Z_{ab}^{i}=-\frac{g^{ij}}{2}\left[
\ \widehat{T}_{aj}^{c}g_{cb}+\ \widehat{T}_{bj}^{c}g_{ca}\right] ,  \notag
\end{eqnarray}%
for $e_{b}=\partial /\partial y^{a}$ and $\Xi _{jk}^{ih}=\frac{1}{2}(\delta
_{j}^{i}\delta _{k}^{h}-g_{jk}g^{ih}),~^{\pm }\Xi _{cd}^{ab}=\frac{1}{2}%
(\delta _{c}^{a}\delta _{d}^{b}\pm g_{cd}g^{ab}).$ The values (\ref{cdeft})
and (\ref{cdeftc}), and the h-- and v--components of $\ \widehat{\mathbf{%
\Gamma }}_{\ \beta \gamma }^{\alpha }$ given by (\ref{cdcc}) are determined
by coefficients of metric $\mathbf{g}$ on $\mathbf{V},$ and $\mathbf{N,}$ \
for a \ prescribed nonholonomic distribution with associated N--connection
structure.

\subsection{The Einstein equations for almost symplectic d--connec\-ti\-ons}

The Ricci tensor $\widehat{R}ic=\{\widehat{\mathbf{R}}_{\alpha \beta }\}$ of
$\ \widehat{\mathbf{D}}=\ _{\theta }\widehat{\mathbf{D}}$ can be defined in
standard form by contracting respectively the components of (\ref{cdcurv}), $%
\widehat{\mathbf{R}}_{\alpha \beta }\doteqdot \widehat{\mathbf{R}}_{\ \alpha
\beta \tau }^{\tau }.$ The scalar curvature is
\begin{equation}
\ ^{s}\widehat{\mathbf{R}}\doteqdot \mathbf{g}^{\alpha \beta }\widehat{%
\mathbf{R}}_{\alpha \beta }=g^{ij}\widehat{R}_{ij}+h^{ab}\widehat{R}_{ab},
\label{sdccurv}
\end{equation}%
where $\widehat{R}=g^{ij}\widehat{R}_{ij}$ and $\widehat{S}=h^{ab}\widehat{R}%
_{ab}$ are respectively the h-- and v--components of scalar curvature. This
allows a geometric formulation of the gravitational field equations for the
almost symplectic connection,

\begin{equation}
\widehat{\mathbf{E}}_{\ \beta \delta }=\widehat{\mathbf{R}}_{\ \beta \delta
}-\frac{1}{2}\mathbf{g}_{\beta \delta }\ ^{s}R=\widehat{\mathbf{\Upsilon }}%
_{\beta \delta }.  \label{ensteqcdc}
\end{equation}%
We can state well defined conditions when (\ref{ensteqcdc}) can be
constructed to be equivalent to the Einstein equations for $\nabla .$ This
is possible if $\widehat{\mathbf{\Upsilon }}_{\beta \delta }=\ ^{m}\mathbf{%
\Upsilon }_{\beta \delta }+\ ^{z}\mathbf{\Upsilon }_{\beta \delta }$ are
derived in such a way that they contain contributions from\ 1) \ the
N--adapted energy--momentum tensor $\ ^{m}\mathbf{\Upsilon }_{\beta \delta }$
(defined variationally following the same principles as in general
relativity but on $\mathbf{V}$) and 2), the distortion of the Einstein tensor
in terms of $\ \widehat{\mathbf{Z}}$ (\ref{cdeftc}), $\widehat{\mathbf{E}}%
_{\ \beta \delta }=\ _{\shortmid }E_{\alpha \beta }+\ ^{z}\widehat{\mathbf{E}%
}_{\ \beta \delta },$ for $\ ^{z}\widehat{\mathbf{E}}_{\ \beta \delta }=\
^{z}\mathbf{\Upsilon }_{\beta \delta }.$\footnote{%
The value $\ ^{z}\widehat{\mathbf{E}}_{\ \beta \delta }$ is computed by
introducing $\widehat{\mathbf{D}}=\nabla -\widehat{\mathbf{Z}}$ into (\ref%
{ensteqcdc}) and corresponding contractions of indices in order to find the
Ricci d--tensor and scalar curvature.}

The equations (\ref{ensteqcdc}) are considered as the fundamental field
equations in Einstein--Finsler gravity (the d--connection $\ _{\theta }%
\widehat{\mathbf{D}}$ is also a Finsler/Lagrange connection), see details in
Refs. \cite{vrflg}. They transform into usual Einstein equations
in general relativity if

\begin{equation}
\widehat{L}_{aj}^{c}=e_{a}(N_{j}^{c}),\ \widehat{C}_{jb}^{i}=0,\ \Omega _{\
ji}^{a}=0,  \label{lccond}
\end{equation}%
for $\mathbf{\Upsilon }_{\beta \delta }\rightarrow \varkappa T_{\beta \delta
}$ (matter energy--momentum in Einstein gravity)\ if $\ \widehat{\mathbf{D}}%
\rightarrow \nabla .$

\setcounter{equation}{0} \renewcommand{\theequation}
{B.\arabic{equation}} \setcounter{subsection}{0}
\renewcommand{\thesubsection}
{B.\arabic{subsection}}

\section{Proof of Main Theorem}

\label{asproof}

\subsection{A technical Lemma}

Let us prove an obvious ''nonholonomic'' analogue of Lemma 1 in \cite{chen}
which have technical importance for proving the Main Result of this paper.

\begin{lemma}
Let us consider a nonholonomic vector bundle $\mathcal{E}$ $\ $of rank $k$
over $\ ^{\mathcal{L}}\mathbf{K}^{2n}$ endowed with a Fedosov--Finsler
normal d--connection $\ _{r}^{d}\mathcal{D}=\ ^{d}\mathcal{D}+v^{-1}\left[ \
^{\theta }r,\cdot \right] $ (\ref{ffdc}) when $\delta ^{-1}\ {}^{\theta }r=0$,
following the conditions of Corollary \ref{corolrem}. For any endomorphism
of $\mathcal{E},$ $\varsigma \in End_{V},$ and connection $^{\ \mathcal{E}%
}\partial $, if $^{\ \mathcal{E}}\partial \varsigma =0$ we get $\ _{r}^{d}%
\mathcal{D}\varsigma =0,$ i.e. the isomorphism $\chi $ (\ref{isom})
transforms $\varsigma $ into itself.
\end{lemma}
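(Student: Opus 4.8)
The plan is to evaluate $\ _{r}^{d}\mathcal{D}\varsigma$ directly on the $z$--independent, $\deg _{a}=0$ element $\varsigma$, show that it vanishes, and then conclude that $\varsigma$ is already a flat section, so that it must coincide with its own image under $\chi$ (\ref{isom}). Writing $\ _{r}^{d}\mathcal{D}=\ ^{d}\mathcal{D}+v^{-1}[\ ^{\theta }r,\cdot ]$ (\ref{ffdc}) with $\ ^{\theta }r=z^{\alpha }\theta _{\alpha \beta }(u)\delta u^{\beta }+r$, I would split the computation into the $\ ^{d}\mathcal{D}$--part and the Wick--commutator part and treat them separately.

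First I would handle $\ ^{d}\mathcal{D}\varsigma $. Using (\ref{dfoper}), $\ ^{d}\mathcal{D}=\delta u^{\alpha }\left( \mathbf{e}_{\alpha }-\mathbf{\Gamma }_{\beta \gamma }^{\alpha }z^{\gamma }\frac{\delta }{\partial z^{\alpha }}\right) +[\ ^{\mathcal{E}}\mathbf{\Gamma },\cdot ]$. Since $\varsigma $ carries no $z$--dependence, $\frac{\delta \varsigma }{\partial z^{\alpha }}=0$ and the $\mathbf{\Gamma }z\partial _{z}$ term drops out; what remains is the N--adapted exterior derivative $\delta u^{\alpha }\mathbf{e}_{\alpha }(\varsigma )$ together with $[\ ^{\mathcal{E}}\mathbf{\Gamma },\varsigma ]$, which is precisely the covariant derivative $\ ^{\mathcal{E}}\partial \varsigma $ of the endomorphism $\varsigma $ induced by $^{\ \mathcal{E}}\partial $ on $End(\mathcal{E})$. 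By hypothesis $\ ^{\mathcal{E}}\partial \varsigma =0$, so $\ ^{d}\mathcal{D}\varsigma =0$.

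The crux is the Wick commutator. Because $\varsigma $ is $z$--independent, every $z$--derivative in the product (\ref{fpr}) annihilates it, so $\ ^{\theta }r\circ \varsigma =\ ^{\theta }r\,\varsigma $ and $\varsigma \circ \ ^{\theta }r=\varsigma \,\ ^{\theta }r$; hence $[\ ^{\theta }r,\varsigma ]_{\circ }$ collapses to the pointwise matrix commutator $[\ ^{\theta }r,\varsigma ]$ (still a $\mathbf{W}(End_{\mathbf{V}})$--valued one--form). The summand $z^{\alpha }\theta _{\alpha \beta }\delta u^{\beta }$ is scalar, proportional to $I_{k}$, and therefore commutes with $\varsigma $. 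For $r$ itself, recall from Theorem \ref{thgf} that it is generated recursively by $\delta ^{-1}$, $\widehat{\mathbf{D}}$ and $\circ $ out of the base data $\widehat{\mathcal{T}}$ (\ref{cart1}) and $\widehat{\mathcal{R}}$ (\ref{cart2}), all of which are scalar in the $End(\mathcal{E})$ factor when $^{\ \mathcal{E}}\partial $ is trivial, so $[r,\varsigma ]=0$ termwise. In the general case I would prove $[r^{(k)},\varsigma ]=0$ by induction on the total degree $Deg$: the base terms $r^{(2)}=\delta ^{-1}\widehat{\mathcal{T}}$ and $\widehat{\mathcal{R}}$ are scalar, and the operations $\delta ^{-1}$, $\widehat{\mathbf{D}}$, $\circ $ preserve the property of commuting with the $^{\ \mathcal{E}}\partial $--parallel $\varsigma $, using that a parallel endomorphism commutes with the $\mathcal{E}$--curvature. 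This last vanishing, i.e. controlling the matrix part of $r$ carried by a nontrivial $^{\ \mathcal{E}}\partial $, is the only real obstacle, and it is exactly here that the covariant--constancy hypothesis $\ ^{\mathcal{E}}\partial \varsigma =0$ is indispensable.

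Combining the two parts gives $\ _{r}^{d}\mathcal{D}\varsigma =0$, so $\varsigma \in \ ^{\widehat{\mathbf{D}}}Sec(\mathbf{W}(End_{\mathbf{V}}))$ is a flat section. Since $\sigma (\varsigma )=\varsigma $ by (\ref{sigmaproj}) (as $\varsigma $ is $z$-- and $\delta u$--independent), and $\chi $ is the inverse of $\sigma $ restricted to flat sections (\ref{isom}), the uniqueness of the flat section with a given symbol forces $\chi (\varsigma )=\varsigma $, which is the assertion of the Lemma. $\square $
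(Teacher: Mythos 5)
Your proof is correct, but the way you kill the commutator $[\ ^{\theta }r,\varsigma ]$ is genuinely different from the paper's. You unwind the Fedosov recursion of Theorem \ref{thgf} (in its bundle version from Corollary \ref{corolrem}) and show by induction on the total degree $Deg$ that every homogeneous component of $r$ commutes with $\varsigma $: the source terms $\widehat{\mathcal{T}}$, $\widehat{\mathcal{R}}$ are scalar in the $End(\mathcal{E})$ factor, the $\mathcal{E}$--curvature commutes with $\varsigma $ because $(\ ^{\mathcal{E}}\partial )^{2}\varsigma =[\ ^{V}\mathcal{R},\varsigma ]=0$, and the operations $\delta ^{-1}$, $\ ^{d}\mathcal{D}$ (as a derivation, using $\ ^{d}\mathcal{D}\varsigma =0$ from your first step) and $\circ $ (which, against the $z$--independent $\varsigma $, collapses to the pointwise matrix product) preserve commutation with $\varsigma $. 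The paper never opens up the recursion: it observes that $\delta ^{-1}[\ ^{\theta }r,\varsigma ]=0$ and that $\sigma $ annihilates this one--form, uses nilpotency of $\ _{r}^{d}\mathcal{D}$ to express $\delta [\ ^{\theta }r,\varsigma ]$ through $\ ^{d}\mathcal{D}[\ ^{\theta }r,\varsigma ]$ and $v^{-1}[\ ^{\theta }r,[\ ^{\theta }r,\varsigma ]]$, plugs this into the homotopy identity (\ref{aux11}) to get a fixed--point equation $[\ ^{\theta }r,\varsigma ]=\delta ^{-1}(\cdots )$, and concludes the commutator vanishes because $\delta ^{-1}$ strictly raises the degree in the $z$--variables while the commutator's degree is fixed. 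The paper's argument is shorter and applies to any $r$ with $\delta ^{-1}r=0$ making $\ _{r}^{d}\mathcal{D}$ nilpotent, independently of how $r$ was constructed; yours is more transparent about where the hypothesis $\ ^{\mathcal{E}}\partial \varsigma =0$ enters beyond the first step (namely in the commutation with $\ ^{V}\mathcal{R}$), at the price of being tied to the specific recursive construction (and uniqueness) of $r$. One small imprecision to fix: in the inductive step you should invoke the full operator $\ ^{d}\mathcal{D}$ of (\ref{dfoper}), including the term $[\ ^{\mathcal{E}}\mathbf{\Gamma },\cdot ]$, rather than $\widehat{\mathbf{D}}$ alone, since $\ ^{\mathcal{E}}\mathbf{\Gamma }$ by itself does not commute with $\varsigma $ (indeed $[\ ^{\mathcal{E}}\mathbf{\Gamma },\varsigma ]=-d\varsigma $); only the covariant combination annihilates $\varsigma $ and hence acts as a derivation preserving your inductive property.
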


\begin{proof}
If $\varsigma $ does not depend on $z$--variables, i.e. $^{\ \mathcal{E}%
}\partial \varsigma =0,$ we have $\ ^{d}\mathcal{D}\varsigma =\delta
\varsigma =0$ and $\ _{r}^{d}\mathcal{D}\varsigma =v^{-1}$ $\left[ \
^{\theta }r,\varsigma \right] $ when (for a nilpotent \ $\ _{r}^{d}\mathcal{D%
}\mathbf{)}$ \ we must have $\ _{r}^{d}\mathcal{D}\left[ \ ^{\theta
}r,\varsigma \right] =0.$ For our purposes, we should prove that $\left[ \
^{\theta }r,\varsigma \right] $ vanishes.

Since $\delta ^{-1}\ {}^{\theta }r=0$ and $\varsigma $ does not depend on $z$%
--variables,
$\delta ^{-1}\left[\ ^{\theta }r,\varsigma \right] =0.$ We compute $\delta %
\left[\ ^{\theta }r,\varsigma \right] =\ ^{d}\mathcal{D}\left[\ ^{\theta
}r,\varsigma \right] +v^{-1}\left[ \ ^{\theta }r,\left[ \ ^{\theta
}r,\varsigma \right] \right],$ when $\left[ \ ^{\theta }r,\varsigma \right]
\in \ ^{d}\Omega ^{1}(\mathbf{W}(End_{V})).$ Applying  (\ref{aux11})
to $\left[\ ^{\theta }r,\varsigma \right]$ we get the equation%
 $\left[\ ^{\theta }r,\varsigma \right] =\delta ^{-1}\left(\ ^{d}\mathcal{D}%
\left[\ ^{\theta }r,\varsigma \right] +v^{-1}\left[\ ^{\theta }r,\varsigma %
\right] \right).$
There is only one non--contradictory solution $\left[\ ^{\theta
}r,\varsigma \right] =0$ because $\delta ^{-1}$ increases the degree in $z$%
--variables, but such a commutator does not. $\square $
\end{proof}

\vskip5pt

We emphasize that on $\ ^{\mathcal{L}}\mathbf{K}^{2n}$ endowed with certain
canonical symplectic forms and N--adapted connections we can work similarly
as on K\"{a}hler manifolds but keeping the constructions to be distinguished
nonholonomically as some h-- and v--components of a corresponding almost K%
\"{a}hler geometry.

\subsection{Sketch of proof for theorem \ref{th5}}

The method we should apply is inspired from \cite{chen}, see there the
Theorem 4. The idea is to prove that distinguished $(n+n)$--form $\ ^{%
\mathcal{D}}\digamma \left( \ ^{p}\zeta _{1}\right) $ has the same
cohomology class as the $(n+n)$--th component of the form \
\begin{equation*}
\det \left( \frac{\ ^{s}\widehat{\mathbf{R}}/2}{\sinh (\ ^{s}\widehat{%
\mathbf{R}}/2)}\right) ^{1/2}\ e^{v^{-1}(\theta -\ ^{v}\Omega )}tr\ e^{\ ^{V}%
\mathcal{R}},
\end{equation*}%
where $\ ^{s}\widehat{\mathbf{R}}$ (\ref{sdccurv}) is the scalar curvature
of the normal d--connection, $\ ^{W}\widehat{\mathcal{C}}=-\theta +\
^{v}\Omega ,$ see (\ref{fwdc}), $^{V}\mathcal{R}$ is the scalar curvature
form on the d--vector bundle $\mathcal{E}$ and $tr$ is the ordinary trace of
matrices. Such a formula is similar to that derived in Theorem \ref{th1a}.

The above Lemma is necessary for evaluating principal parts and idempotents
of involved d--algebras. Computations are performed similarly for the h--
and v--components dubbing the formulas provided in the holonomic K\"{a}hler
constructions. We leave the technical details to the interested reader.

\end{document}